\newtheorem{theorem}{Theorem}
\newtheorem{lemma}[theorem]{Lemma}
\newtheorem{corrol}[theorem]{Corollary}
\def\Real{{\mathbb R}}
\def\cnj#1{{\overline{#1}}}
\def\innerprod(#1,#2){{\left<#1\,,\,#2\right>}}
\def\Set#1{{\left\{#1\right\}}}
\def\qquadtext#1{\qquad\textup{#1}\qquad}
\def\qquadand{\qquadtext{and}}
\def\quadtext#1{\quad\textup{#1}\quad}
\def\quadand{\quadtext{and}}
\def\pfrac#1#2{\frac{\partial #1}{\partial #2}}
\def\dfrac#1#2{\frac{d #1}{d #2}}
\def\dual#1{{\widetilde{#1}}}
\def\Lie{{\cal L}}
\def\Ebun{{\cal E}}
\def\Amap{{\cal A}}
\def\Bmap{{\cal B}}
\def\Tprod{{\colon\!}}
\def\TprodL{{\vdots}}
\def\LSym{{{\cal B}}}
\def\Met{{{\cal H}}}
\def\Ein{{\textup{Ein}}}
\def\Noet{{\cal N}}
\def\tenbun{{\boldsymbol \otimes}}
\def\Tform{{\mathbb{F}}}
\def\Tvec{{\mathbb{V}}}
\def\slist{{\boldsymbol{s}}}
\def\tlist{{\boldsymbol{t}}}
\def\Zalt{{\cal Z}}
\def\NumZ{{N}}
\def\Numzeta{{Q}}
\def\GatZ#1{\frac{\Delta \LambdaM}{\Delta {#1}}}
\def\Vasym{{\cal V}}
\def\Zind{A}
\def\ZZind{C}
\def\zeind{B}
\def\Mat{{\mathbf{m}}}
\def\LambdaM{\Lambda^{\Mat}}
\def\LambdaT{{\Lambda^{\mathbf{T}}}}
\def\MST{{\cal M}}
\def\CR{{\boldsymbol \chi}}
\def\Tden{{\cal T}}
\def\Sden{{\cal S}}
\def\calH{{\cal H}}
\def\CRpre{{\kappa}}
\def\Rot{{\cal R}}
\def\MoM{{\cal M}}
\title{Conservation Laws and Stress-energy-momentum Tensors for
  Systems with Background Fields}
\author{Jonathan Gratus\thanks{j.gratus@lancaster.ac.uk, Corresponding
    Author},
\\
\small{Lancaster University, Lancaster LA1 4YB, UK 
and}\\ \small{The Cockcroft Inistitute, Daresbury Laboratory, Warrington WA4
4AD, UK}
\\\\
Yuri N Obukhov\thanks{yo@thp.uni-koeln.de},
\\
\small{Institute for Theoretical Physics, University of Cologne,
50923 K\"oln, Germany}
\\\\
Robin W Tucker\thanks{r.tucker@lancaster.ac.uk}
\\
\small{Lancaster University, Lancaster LA1 4YB, UK 
and}\\ \small{The Cockcroft Inistitute, Daresbury Laboratory, Warrington WA4
4AD, UK}
}
\begin{document}
\maketitle

\thispagestyle{fancy}
\rhead{Cockcroft-12-26}

\begin{abstract}
  This article attempts to delineate the roles played by non-dynamical
  background structures and Killing symmetries in the construction of
  stress-energy-momentum tensors generated from a diffeomorphism
  invariant action density. An intrinsic coordinate independent
  approach puts into perspective a number of spurious arguments that
  have historically lead to the main contenders, viz the
  Belinfante-Rosenfeld stress-energy-momentum tensor derived from a
  Noether current and the Einstein-Hilbert stress-energy-momentum
  tensor derived in the context of Einstein's theory of general
  relativity.  Emphasis is placed on the role played by non-dynamical
  background (phenomenological) structures that discriminate between
  properties of these tensors particularly in the context of
  electrodynamics in media. These tensors are used to construct
  conservation laws in the presence of Killing Lie-symmetric
  background fields.
\end{abstract}

\noindent{\bf keywords}
Noether currents, 
Abraham stress-energy-momentum tensor, \\
Minkowski stress-energy-momentum tensor, \\
Canonical stress-energy-momentum tensor, \\
Einstein-Hilbert stress-energy-momentum tensor,\\  
Diffeomorphism invariance

\noindent{\bf MSC} 70S10,  83C50,  70S05,  78A25,  83C05 






\section{Introduction}
\label{ch_Intro}
The venerable aim of the variational calculus as a tool for deriving
from a single principle the observed laws of physics remains an
attractive one. It synthesises conservation laws from local symmetries
and offers a route for finding unified schemes that
underpin the experimental sciences. However the epistemology that
arises in this approach has, on occasion, lead to unnecessary confusion
when employed in the discussion of systems that require
phenomenological input. In particular this remains true in the context
of electrodynamics in macroscopic material where the indiscriminate use of the
stress-energy-momentum concept in classical physics has lead to
different notions of quantum electrodynamics in such media.
The calculus often enables observed local and
global symmetries of nature to be accommodated in a natural manner and
can be used to generate conserved currents associated with local
symmetries. Furthermore the local dynamics of coupled systems follows
from local extrema of a functional constructed from a single
diffeomorphism invariant action integral $S=\int_{\MST} \LambdaT$,
for some diffeomorphism invariant differential 4-form $\LambdaT$ on
spacetime $\MST$.  If one chooses a nowhere vanishing 4-form $\Omega$
on spacetime and writes $\LambdaT={\Sden} \Omega$ then ${\Sden}$ is
called a scalar action density (relative to $\Omega$).  For any
covariant metric $g$ on $\MST$ a natural choice is $\Omega=\sqrt{
  \vert {\text{det}}g \vert}\, d^4 x=\star 1$ where $\star$ is the
Hodge map associated with $g$. An alternative choice, given a
coordinate system $(x^1,\ldots,x^4)$ is the {\em tensor density}
$\Omega=dx^1\wedge \cdots \wedge dx^4$. Such a choice implies that
$\Sden$ then transforms like $\vert{\text{det}}g \vert^{-1/2}$
under a local change of coordinates\footnote{If $U=\partial/\partial {x^1}$
generates time translations then $i_U \LambdaT$ may be called a
Lagrangian 3-form \cite{goldstein} relative to $U$.}.

Traditional symmetry analysis based on the pioneering work of Noether
exploits such densities often in a manifestly coordinate dependent
manner. Modern approaches exploit an intrinsic jet-bundle
formulation. Between these extremes are ad-hoc formulations that have
led to intense debates about the role of alternative
stress-energy-momentum tensors used to describe the interchange of
energy and momentum between interacting systems in continuous
macroscopic media. In particular, conserved Noether currents have
historically been constructed from 3-forms
on space and their associated Lagrangian scalar densities. Notions of {\it stress} and {\it power} in continuum mechanics became incorporated into the Lorentz covariant formulation of special relativity.  With the
advent of Einstein's general theory of relativity the spacetime
symmetric metric tensor field becomes a dynamical variable and an
alternative stress-energy-momentum tensor can be defined through the
variational derivative of the action 4-form on spacetime (for gravity
and matter) with respect to this metric tensor.\footnote{Matter in
  this sense includes electromagnetism as well as other matter fields.
  Modified theories of gravity, e.g. with scalar fields coupled
  directly to spacetime curvature, offer no unique identification of a
  ``matter'' action.}  The issue that then arises is how best to
identify conserved quantities constructed from different models that
can be put into a variational formulation and relate them to different
choices of stress-energy-momentum tensor.

 Conserved quantities can be
generated from closed 3-forms in spacetime. These often arise as a
consequence of some symmetry of an element or elements contributing to
the structure of the action for the model. However even in the absence
of such symmetries there are powerful relations that arise from
applying the variational approach to action functionals that are
(locally) invariant under transformations of the fields induced by
arbitrary (local) spacetime diffeomorphisms. Such functionals are
readily constructed in terms of coordinate invariants made by
contracting tensor fields of various degrees or tensor ``densities''
of different weights. It is not necessary that
all such quantities in this construction be dynamical, i.e. subject to
variational field equations. However it is in these circumstances that
there arise differences when one compares the consequences deduced
from different choices for the description of stress, energy, and
momentum in the presence of such quantities.

The (covariant) variational approach is often restricted to {\it
  closed} (non-dissipative) systems where dynamical equations for the
dynamical field variables arise by finding local extrema of a total
action $S$ under their variations, where $\LambdaT$ is
a 4-form on a spacetime manifold. Such forms belong to a class,
members of which describe the same classical physics. For example
members that differ by an exact 4-form with compact support
yield the same variational equations.

Since the description of gravitation is given in terms of a geometry
of spacetime, any collection of dynamical field variables
$\{\boldsymbol\zeta\}=\Set{\zeta_1,\ldots,\zeta_\Numzeta}$ that are
not part of the geometry may be assigned the status of
``matter''.\footnote {In view of the ambitions of M-theory this
  interpretation may be an effective one imposed by dimensional
  reduction.}  In Einstein's theory, gravitation arises from a
pseudo-Riemannian spacetime geometry based entirely on a dynamical
spacetime metric tensor field $g$. In the absence of matter, the
dynamics of $g$ is given by the Einstein-Hilbert action $\int_\MST
\Lambda^{\Ein}$.  The dynamics of (non-spinorial) matter minimally
coupled to gravitation is described by a matter action 4-form
$\LambdaM$ that is independent of derivatives of $g$.

In order to eliminate use of the jet-bundle language and simplify the
analysis, in the following the matter action 4-form $\LambdaM$ will be
restricted to depend on $g$, $\{{\boldsymbol Z}\}$,
$\{{\boldsymbol\zeta}\}$ and $\{d{\boldsymbol\zeta}\}$, where a
collection of prescribed non-dynamical background tensor fields $
\{\boldsymbol Z\}=\Set{Z_1,\ldots,Z_\NumZ}$ has been included.  Such fields play no variational role in
fixing the local extrema of $S$ but (if present) play a crucial role
in determining the consequences that follow from diffeomorphism
invariance.  Then the total action 4-form is $\LambdaT =
\Lambda^{\Ein} + \LambdaM$.

It is sometimes useful to define matter {\em subsystems} that are
described by sub-actions $\int_\MST \Lambda^s$ where
$\LambdaM=\sum_s\Lambda^s$. Clearly this notion of a subsystem is
defined relative to a particular decomposition of $\LambdaM$ and
extrema of $S$ will not in general coincide with extrema of $\int_\MST
(\Lambda^{\Ein} + \Lambda^s)$.  When such extrema do coincide one may
argue that the subsystem described by $\Lambda^s$ decouples from the
system described by $\LambdaT$, i.e. it becomes a closed
subsystem. This is rarely the case in systems interacting with
dynamical gravitation. Subsystems that are not closed are termed open.

Einstein's gravitational field equations in the presence of matter
lead one to identify the variational derivative of $\LambdaM$ with
respect to $g$ (see below) with the symmetric
``stress-energy-momentum'' tensor $T^{\Met}(g, \{\boldsymbol Z\},
\{\boldsymbol\zeta\},\{d\boldsymbol\zeta\} )$ associated with
matter. This terminology is natural, given its historic connection
with Newtonian concepts but arguably misleading in a broader context
where symmetries associated with space and time translation are
absent.  The tensor $T^\Met$ relies for its definition on $\LambdaM$
depending on the spacetime metric.  On manifolds where gravitation is
irrelevant (where $g$ is regarded as a non-dynamical prescribed
background) or no preferred metric is available one may find
alternative approaches leading to conserved quantities. In such
circumstances there exist 3-forms derived from a diffeomorphism
invariant actions $\int_{\MST}\LambdaM$ that give rise to certain
vector valued maps on vector fields. These give rise to two other maps
$T^\Noet$ and $T^\LSym$ which have traditionally been associated with
``canonical stress-energy-momentum tensors'' and are related by the so
called Belinfante-Rosenfeld procedure. It is shown below that these
give rise to conserved currents in the presence of background fields
which possess appropriate Lie-symmetries and that $T^\Met$ gives rise
to conserved currents, in general, only if $g$ is dynamical and
satisfies Einstein's theory of general relativity. Furthermore it is
demonstrated that any difference between $T^\LSym$ and $T^\Met$
arises from the dependence of $\LambdaM$ on the background fields
$\Set{\boldsymbol Z}$.  In this article the consequences of
diffeomorphism invariance of matter actions on all these quantities
will be explored.

The formalism below first establishes
an intrinsic variational calculus for actions involving the metric
tensor field $g$, arbitrary tensor fields $\{\boldsymbol Z\} $ and
differential forms $ \Set{\boldsymbol\zeta}, \Set{d\boldsymbol\zeta} $ on an
$n$-dimensional manifold $M$.  Any differential form of degree $n$ on
$M$ will be called a top-form. The set of all $p$-form fields is
written $\Gamma\Lambda^p M$. Thus the matter action is some top-form:
\begin{align}
\LambdaM(g,Z_1,\ldots,Z_\NumZ,\zeta_1,d\zeta_1,\ldots,\zeta_\Numzeta,d\zeta_\Numzeta)
\in\Gamma\Lambda^n M
\label{Intro_Lambda}
\end{align}
and depends in general on
\begin{itemize}
\item
a  metric tensor field  $g$,

\item
a collection of mixed degree tensor fields (including scalar fields) of no particular symmetry  $Z_\Zind$ for $\Zind=1,\ldots,\NumZ$

\item
a collection of differential forms $\zeta_\zeind
\in\Gamma\Lambda^{p_\zeind} M $ for $\zeind=1,\ldots,\Numzeta$ and
their exterior derivatives $d\zeta_\zeind$. These fields will be taken to
satisfy variational field equations following from some action top-form
$\LambdaM$. Since $\Lambda^{\Ein}$ is independent of $ Z_\Zind $   and $ \zeta_\zeind $  these are the same as the variational equations that follow from the action top-form $\LambdaT$.

\end{itemize}

The notation used in the paper is given in section \ref{ch_App_Not}. In section \ref{ch_Tensors} partial Gateaux derivatives of $\LambdaM$ with respect to tensors and
differential forms are related and compared with standard partial
variational derivatives using an intrinsic
formulation. Diffeomorphism invariance of $\LambdaM$ is defined and
its consequences expressed in terms of certain maps ${\cal A}$ and
${\cal B}$ on these derivatives. In section \ref{ch_Phys} the Einstein-Hilbert and other
stress-energy-momentum tensors are defined and expressed in terms of
these maps. Relations are derived between these tensors and tensor
densities when all matter fields satisfy the variational field
equations derived from $\LambdaM$.  It is then shown how conserved
quantities can arise in the presence of material and Killing
Lie-symmetric background fields. In the concluding section the
physical implications of these relations are emphasized. The Appendix
gives some mathematical details and proofs of results used in the main text.

\subsection{Notation}
\label{ch_App_Not}

Local coordinates on $M$ are denoted $(x^1,\ldots,x^n)$. These define
a local coordinate frame $\Set{\partial_1,\ldots,\partial_n}$ where
$\partial_a=\pfrac{}{x^a}$ and coordinate co-frame
$\Set{dx^1,\ldots,dx^n}$. In these frames a metric $g=g_{ab} dx^a\otimes dx^b$
and the inverse metric $\dual{g}=g^{ab} \partial_a\otimes
\partial_b$ where $ g^{ab}g_{bc}= \delta^a_c $. Here implicit
summation is over $a,b=1,\ldots,n$.  The {\it metric dual} of any
vector field $v\in\Gamma TM$ is the 1-form
$\dual{v}=g(v,-)\in\Gamma\Lambda^1 M$. In spacetime $n=4$ and $g$ is
Lorentzian with signature $(-1,+1,+1,+1)$ here.  A map $T:X\mapsto T(X)$ is
$f$-linear if $T(fX)=f\,T(X)$ for all scalar fields
$f\in\Gamma\Lambda^0 M$. The map $T$ is $\Real$-linear if $T(\lambda
X)=\lambda\,T(X)$ for all constants $\lambda\in\Real$.  In the
following the word ``tensor'' refers to an $f$-multilinear map on
vectors and their duals (co-vectors).  Given a non-vanishing top-form
$\Omega$, the phrase ``$\Tden$ is a tensor density with respect to
$\Omega$ of weight $W$'' implies that when $\Omega$ is replaced by
$\hat\Omega=J \Omega$ for $J\in\Gamma\Lambda^0 M$ nowhere vanishing
then $\Tden$ transforms to ${\hat\Tden}=J^{W} \Tden$.  If
$\Omega=dx^1\wedge\cdots\wedge dx^n$ and
$\hat\Omega=dy^1\wedge\cdots\wedge dy^n$ are related by coordinate
transformation then $J$ is the Jacobian of the transformation. Choosing
$\Omega=\star 1$ one can convert $\Tden$ into a bona-fide tensor field
($\hat\Tden=\Tden$).  The Lie derivative of any tensor $T$ with
respect to any vector field $v$ on $M$ is denoted $ {\cal L}_v T $ and
the exterior derivative $d$ on differential forms is defined so that
$d^2=0$. A form $\beta$ is said to be closed if $d\beta=0$ and exact
if $\beta=d\alpha$ for some $\alpha$. The interior contraction
operator with respect to $v$ on forms, denoted $i_v$, is a graded
derivative and $(i_v)^2=0$.

\section{Intrinsic Variational Calculus}
\label{ch_Tensors}

\subsection{Algebraic Preliminaries}

The degree of an arbitrary tensor will be represented as an ordered
list $\slist$ of 0 or more entries. Each entry is either the symbol
$\Tform$ (for 1-form) or $\Tvec$ (for vector) e.g.
$\slist=[\Tform,\Tvec,\Tvec,\Tform]$. The bundle of tensors of degree
$\slist$ over $M$ is denoted $\tenbun^\slist M$, with sections in
$\Gamma\tenbun^\slist M$ and the bundles of 0-forms, 1-forms and
vector fields are written
\begin{align*}
\Lambda^0 M=\tenbun^{[\,]} M
\,,\qquad
\Lambda^1 M=\tenbun^{[\Tform]} M
\qquadand
TM=\tenbun^{[\Tvec]} M
\end{align*}
respectively.  Furthermore
\begin{align*}
(\tenbun^\slist M)\otimes(\tenbun^\tlist M)
=
\tenbun^{[\slist,\tlist]} M
\end{align*}
where $[\slist,\tlist]$ is simply the concatenation of the two lists.
Thus
\begin{align*}
v\otimes\zeta\otimes u\in\Gamma\tenbun^{[\Tvec,\Tform,\Tvec]} M
\end{align*}
where $v,u\in\Gamma TM$ and $\zeta\in\Gamma\Lambda^1 M$. The metric
$g$ lies in the symmetric sub-bundle of $ \tenbun^{[\Tform,\Tform]}M$
and the inverse metric $\tilde{g}$ lies in the symmetric sub-bundle of
$\tenbun^{[\Tvec,\Tvec]}M$.  Similarly, since
\begin{align}
\sum_{I_1<\ldots<I_p}\alpha_{I_1\cdots I_p}
e^{I_1}\wedge\cdots\wedge e^{I_p}
=
\frac1{p\,!}
\sum_{I_1<\ldots<I_p} \sum_{\sigma\in S_p}\alpha_{I_1\cdots I_p}
\epsilon(\sigma) e^{\sigma(I_1)}\otimes\cdots\otimes e^{\sigma(I_p)}
\label{Tensors_identify_form_tens}
\end{align}
where $S_p$ is the set of permutations of $\Set{I_1,\ldots,I_p}$
and $\epsilon(\sigma)$ is the signature of the permutation,  the
$p$-form bundle $\Lambda^p M$ is the antisymmetric sub-bundle of
$\tenbun^{[\Tform,\ldots,\Tform]}M$ where $ [\Tform,\ldots,\Tform] $
has length $p$.

The dual space of $\tenbun^\slist M$ is $\tenbun^{\cnj{\slist}} M$
where $\cnj{\slist}$ is obtained by interchanging the symbols $\Tform$
and $\Tvec$ in $\slist$.
The total contraction of elements in $\Gamma\tenbun^{\cnj{\slist}} M$ with
elements in $\Gamma\tenbun^\slist M$ is written
\begin{align*}
\Gamma\tenbun^{\cnj{\slist}} M \times \Gamma\tenbun^\slist M \to
\Gamma\Lambda^0 M
\,,\qquad
(\Phi,Z)\mapsto \Phi\Tprod Z
\end{align*}
where
$\Phi\in\Gamma\tenbun^{\cnj{\slist}} M$ and
$Z\in\Gamma\tenbun^{{\slist}} M$. It is defined inductively via
\begin{align*}
\zeta\Tprod v = v \Tprod \zeta = \zeta(v)
\qquadtext{where}
\zeta\in\Gamma\tenbun^{[\Tform]} M
\qquadand
v\in\Gamma\tenbun^{[\Tvec]} M
\end{align*}
and extended by $f$-linearity to arbitrary tensors via
\begin{align*}
(\Phi_1\otimes \Phi_2) \Tprod (Z_1\otimes Z_2)
=
(\Phi_1\Tprod Z_1)(\Phi_2 \Tprod Z_2)
\end{align*}
Thus for example
\begin{align*}
(\zeta_1\otimes u\otimes\zeta_2)\Tprod
(v_1\otimes\beta\otimes v_2)
=
\zeta_1(v_1)\,
\beta(u)\,
\zeta_2(v_2)
\end{align*}
where
$(\zeta_1\otimes u\otimes\zeta_2)\in\Gamma\tenbun^{[\Tform,\Tvec,\Tform]}M$
and
$(v_1\otimes\beta\otimes v_2)\in\Gamma\tenbun^{[\Tvec,\Tform,\Tvec]}M$.

Using this notation $\LambdaM$ may be regarded as a fibre bundle morphism
\begin{align}
\LambdaM:
\Ebun^{(g)}\oplus
\Ebun^{(Z)}\oplus
\Ebun^{(\zeta)}
\to
\Lambda^n M
\label{Tensors_Lambda}
\end{align}
which we write
$\LambdaM(g,Z_1,\ldots,Z_\NumZ,\zeta_1,d\zeta_1,\ldots,\zeta_\Numzeta,d\zeta_\Numzeta)$. Here
\begin{itemize}
\item
$\Ebun^{(g)} = \tenbun^{[\Tform,\Tform]} M$ is the bundle of metrics,
  i.e. $g\in\Gamma\Ebun^{(g)}$.
\item
$\Ebun^{(Z)} = \tenbun^{\slist_1} M \oplus \cdots
  \oplus\tenbun^{\slist_\NumZ} M$ is the bundle of $\NumZ$ tensors of
  the appropriate degrees, i.e
  $(Z_1,\ldots,Z_{\NumZ})\in\Gamma\Ebun^{(Z)}$ where
  $Z_\Zind\in\Gamma\tenbun^{\slist_\Zind} M$.
\item
$\Ebun^{(\zeta)} = \Lambda^{p_1} M \oplus \Lambda^{p_1+1} M \oplus
  \cdots \oplus \Lambda^{p_\Numzeta} M \oplus \Lambda^{p_\Numzeta+1}
  M$ is the bundle of $\Numzeta$ pairs of forms of the appropriate
  degrees, i.e
  $(\zeta_1,d\zeta_1,\ldots,\zeta_\Numzeta,d\zeta_\Numzeta)\in\Gamma\Ebun^{(\zeta)}$
  where $\zeta_\zeind\in\Gamma\Lambda^{p_\zeind} M$.
\end{itemize}
Since $\LambdaM$ is a fibre bundle morphism, the value of
$\LambdaM(g,Z_1,\ldots,Z_\NumZ,
\\\zeta_1,d\zeta_1,\ldots,\zeta_\Numzeta,d\zeta_\Numzeta)|_x$
for some point $x\in M$ depends only on the values of its arguments at
that point, i.e.
\begin{align*}
\lefteqn{
\LambdaM(g,Z_1,\ldots,Z_\NumZ,\zeta_1,d\zeta_1,\ldots,\zeta_\Numzeta,d\zeta_\Numzeta)|_x=}
\qquad&\\
&
\LambdaM(g|_x,Z_1|_x,\ldots,Z_\NumZ|_x,\zeta_1|_x,d\zeta_1|_x,\ldots,\zeta_\Numzeta|_x,d\zeta_\Numzeta|_x)
\end{align*}

In this language a model containing matter in a linear (temporally and spatially)
non-dispersive medium interacting with the electromagnetic field
$F=dA$ is characterised by a $U(1)$ gauge invariant excitation tensor
$G=\CR(g,Z_1,\ldots,Z_\NumZ,dA,\zeta_2,d\zeta_2,\ldots,\zeta_\Numzeta,
d\zeta_\Numzeta)$ for some constitutive tensor $\CR$ and is described by
an action $4$-form on spacetime $\MST$
\begin{equation}
\begin{aligned}
\lefteqn{
\LambdaM(g,Z_1,\ldots,Z_\NumZ,A,dA,\zeta_2,d\zeta_2,\ldots,
\zeta_\Numzeta,d\zeta_\Numzeta)
=}\qquad\qquad\qquad &
\\& \tfrac12 F\wedge\star G
+\Lambda^{\textbf{Q}}(g,\zeta_2,D\zeta_2,\ldots,
\zeta_\Numzeta,D\zeta_\Numzeta)
\end{aligned}
\label{Tensors_F_wedge_G}
\end{equation}
where $D\xi=d\xi$ for electrically neutral real fields $\xi$ and is the
$U(1)$ exterior covariant derivative for complex charged fields.

A particular model \cite{dereli2007covariant,dereli2007new} involving
only a single non-dynamic tensor $Z$ together with $F$ and $g$ is
described by the action 4-form $\LambdaM :
\Ebun^{(g)}\oplus\Ebun^{(Z)}\oplus\Ebun^{(\zeta)}\to\Lambda^4 \MST$
(with the bundles $\Ebun^{(Z)}=\tenbun^{[\Tform,\Tform,\Tvec,\Tvec]}
\MST$, $\Ebun^{(\zeta)}=\Lambda^2 \MST$ over spacetime $\MST$):
\begin{align}
&\LambdaM(g,Z,dA) = \tfrac14 F \wedge \star \big(Z(F)+Z^\dagger(F)\big)
\label{Intro_Lag_SymMink_1}
\end{align}
with
\begin{align}
F=dA
\label{Intro_F_dA}
\end{align}
and where $Z^\dagger$ is the adjoint of $Z$ defined by
\begin{align}
\alpha\wedge\star Z^\dagger(\beta)=
\beta\wedge\star Z(\alpha)
\label{Intro_def_adjoint}
\end{align}
for all $\alpha,\beta\in\Gamma\Lambda^2\MST$. Unlike $Z$, the tensor
$Z^\dagger$ depends on $g$.  Varying (\ref{Intro_Lag_SymMink_1}) with
respect to $A$ then yields, in terms of the notation defined in
(\ref{Tensors_general_V}) below
\begin{align*}
\frac{\delta \LambdaM}{\delta A} (g,Z,dA) = 0
\end{align*}
i.e.
\begin{align}
d\star G=0
\label{Intro_dstarG}
\end{align}
where in terms of (\ref{AB_def_pfrac_zeta}) below,
the excitation tensor $G\in\Gamma\Lambda^2\MST$ is given by
\begin{align}
G=\star^{-1} \Big(\pfrac{\LambdaM}{(dA)}\Big)
\label{Intro_def_G}
\end{align}
i.e
\begin{align}
G = \CR(F) = \tfrac12\big(Z(F)+Z^\dagger(F)\big)
\label{Intro_Lag_SymMink_2}
\end{align}
Equations (\ref{Intro_F_dA}) and (\ref{Intro_dstarG}) constitute the
``on-shell''
Maxwell system for model (\ref{Intro_Lag_SymMink_1}) in any
background  $g$, $Z$.
In this model the constitutive tensor
$\CR$ is independent of the motion of the medium. Although from
(\ref{Intro_def_adjoint}) $Z^\dagger$ depends on the metric it follows
from (\ref{Intro_Lag_SymMink_2}) that (\ref{Intro_Lag_SymMink_1}) can
be written
\begin{align}
\LambdaM(g,Z,dA) = \tfrac12 F \wedge\star Z(F)
\label{Intro_Lag_SymMink_3}
\end{align}
and thus the only metric dependence of $\LambdaM$ is through the Hodge map.

A more complex model \cite{dereli2007new} in which the constitutive
tensor $\CR$ depends explicitly on the motion of the medium and
exhibits intrinsic magneto-electric constitutive properties involves a
timelike\footnote{Since small variations in the metric $g$ do not
  change the timelike nature of $V$, its Gateaux derivative with
  respect to $g$ is zero.} vector field $V$ and four background degree 2
tensors $ Z^{\text{de}},Z^{\text{db}}, Z^{\text{he}},Z^{\text{hb}} $.
It is described by the action 4-form $\LambdaM :
\Ebun^{(g)}\oplus\Ebun^{(Z)}\oplus\Ebun^{(\zeta)}\to\Lambda^4 \MST$
where $\Ebun^{(Z)}=\tenbun^{[\Tform,\Tvec]} \MST
\oplus\tenbun^{[\Tform,\Tvec]} \MST \oplus\tenbun^{[\Tform,\Tvec]}
\MST \oplus\tenbun^{[\Tform,\Tvec]} \MST\oplus\tenbun^{[\Tvec]} \MST$,
$\Ebun^{(\zeta)}=\Lambda^2 \MST$,
\begin{equation}
\begin{aligned}
\lefteqn{\LambdaM(g,Z^{\text{de}},Z^{\text{db}},
Z^{\text{he}},Z^{\text{hb}},V,dA)}\qquad\qquad
\\
&=
\tfrac12 F \wedge \star\Big(
Z^{\text{de}}(i_{V_g} F)\wedge\dual{{V_g}} +
Z^{\text{db}}(i_{V_g}\star F)\wedge\dual{{V_g}}
\\&\qquad\qquad
- \star(Z^{\text{he}}(i_{V_g} F)\wedge\dual{{V_g}})
- \star(Z^{\text{hb}}(i_{V_g} \star F)\wedge\dual{{V_g}})
\Big)
\end{aligned}
\label{Intro_Lag_Abraham_1}
\end{equation}
where $F=dA$,
\begin{align}
V_g = \frac{V}{\sqrt{-g(V,V)}}
\,,\qquad
\dual{{V_g}}=g(V_g,-)
\label{Intro_def_Vg}
\end{align}
and the subscripts $g$ indicate explicit dependence on the metric.
Again variation with respect to $A$ gives the Maxwell equation
(\ref{Intro_dstarG}) where $G$ is given by (\ref{Intro_def_G}). Thus
\begin{equation}
\begin{aligned}
G
&=
Z^{\text{de}}_g(i_{V_g} F)\wedge\dual{{V_g}} +
Z^{\text{db}}_g(i_{V_g}\star F)\wedge\dual{{V_g}}
\\&\qquad\qquad
- \star(Z^{\text{he}}_g(i_{V_g} F)\wedge\dual{{V_g}})
- \star(Z^{\text{hb}}_g(i_{V_g} \star F)\wedge\dual{{V_g}})
\end{aligned}
\label{Intro_Lag_Abraham_2}
\end{equation}
where
\begin{equation}
\begin{gathered}
Z^{\text{de}}_{g} =
\tfrac12 \pi_g\circ \big(Z^{\text{de}} +
(Z^{\text{de}})^\dagger\big)\circ \pi_g
\,,\quad
Z^{\text{db}}_{g} =
\tfrac12 \pi_g\circ \big(Z^{\text{db}} -
(Z^{\text{he}})^\dagger\big)\circ \pi_g\,,
\\
Z^{\text{he}}_{g} =
\tfrac12 \pi_g\circ \big(Z^{\text{he}} -
(Z^{\text{db}})^\dagger\big)\circ \pi_g
\,,\quad
Z^{\text{hb}}_{g} =
\tfrac12 \pi_g\circ \big(Z^{\text{hb}} +
(Z^{\text{hb}})^\dagger\big)\circ \pi_g\,,
\\
\pi_g=\text{Id}_4 + \dual{V_g}\otimes V_g
\qquadand
\alpha\wedge\star Z_g^{\text{I}}(\beta)
=
\beta\wedge\star (Z_g^{\text{I}})^\dagger(\alpha)
\end{gathered}
\label{Intro_Lag_Abraham_Zg}
\end{equation}
for
$Z^{\text{I}}_g\in\Set{Z^{\text{de}}_g,
Z^{\text{db}}_g,Z^{\text{he}}_g,Z^{\text{hb}}_g}$
and $\alpha,\beta\in\Gamma\Lambda^1 M$. This implies $Z^{\text{I}}_g$
is spatial with respect to $V_g$, i.e. $Z^{\text{I}}_g(\dual{V_g})=0$
and $i_{V_g} Z^{\text{I}}_g(\alpha)=0$ for all $\alpha\in\Gamma\Lambda^1
M$. Since $\pi_g(\alpha)\wedge\dual{V_g}=\alpha\wedge\dual{V_g}$ and
$\pi_g (i_{V_g}\gamma)=i_{V_g}\gamma$ for all
$\alpha\in\Gamma\Lambda^1M$ and $\gamma\in\Gamma\Lambda^2M$ then
$Z^{\text{de}}_g$ may be replaced by $\tfrac12 \big(Z^{\text{de}} +
(Z^{\text{de}})^\dagger\big)$ in $G$. Similarly $Z^{\text{db}}_g$ may
be replaced by $\tfrac12 \big(Z^{\text{db}} -
(Z^{\text{he}})^\dagger\big)$, $Z^{\text{he}}_g$ by $\tfrac12
\big(Z^{\text{he}} - (Z^{\text{db}})^\dagger\big)$ and
$Z^{\text{hb}}_g$ by $\tfrac12 \big(Z^{\text{hb}} +
(Z^{\text{hb}})^\dagger\big)$.  This expresses $G$ more simply in
terms of the constitutive tensors in the action top form
(\ref{Intro_Lag_Abraham_1}). Furthermore, after some rearrangement,
one finds that $ \frac{1}{2} F \wedge \star G= \LambdaM $, (cf
\cite{dereli2007new}).

To facilitate the presentation below, it proves useful to relabel
tensors in the arguments of $\LambdaM$ as
\begin{align}
\Zalt_0 = g\,,\qquad
\Zalt_\Zind = Z_\Zind\,,\qquad
\Zalt_{\NumZ+2\zeind-1} = \zeta_\zeind \qquadand
\Zalt_{\NumZ+2\zeind} = d\zeta_\zeind
\label{Intro_Z_rename}
\end{align}
for $\Zind=1,\ldots,\NumZ$ and $\zeind=1,\ldots,\Numzeta$, so that
\begin{align}
\LambdaM(\Zalt_0,\ldots \Zalt_{\NumZ+2\Numzeta})\in\Gamma\Lambda^n M
\label{Intro_Lambda_Z_rename}
\end{align}
The range
$\ZZind=0,\ldots,\NumZ+2\Numzeta$ will be used to index the
$\Zalt_\ZZind$.

In the following tensors of the form
$\Psi=\Omega\otimes\Phi\in\Gamma(\Lambda^n M\otimes
\tenbun^{\cnj{\slist}} M)$ where $\Omega\in\Gamma\Lambda^n M$ and
$\Phi\in\Gamma\tenbun^{\cnj{\slist}}M$ arise naturally by
``differentiating'' $\LambdaM$ with respect to one of its arguments.
One may contract such a tensor with $Y\in\Gamma\tenbun^{{\slist}}M$ to
isolate $\Omega$
\begin{align*}
\Gamma (\Lambda^n M\otimes \tenbun^{\cnj{\slist}} M)
\times \Gamma \tenbun^{{\slist}} M
\to
\Gamma \Lambda^n M\,,\qquad
(\Psi,Y)\mapsto \Psi\TprodL Y
\end{align*}
according to the rule
\begin{align}
(\Omega\otimes\Phi)\TprodL Y =
 (\Phi\Tprod Y)\Omega
\label{Tensors_Omega_Phi_Y}
\end{align}
where $\Omega\in\Gamma\Lambda^n M$ and
$\Phi\in\Gamma\tenbun^{\cnj{\slist}} M$
and
$Y\in\Gamma\tenbun^{{\slist}} M$

\subsection{Variational Derivatives}
\label{sch_Var_deriv}

Using the indexing notation (\ref{Intro_Z_rename}) and
(\ref{Intro_Lambda_Z_rename}), the Gateaux derivative
$\displaystyle\GatZ{\Zalt_\ZZind}\in
\Gamma(\Lambda^nM\otimes\tenbun^{\cnj{\slist_\ZZind}}M)$ of $\LambdaM$
with respect to $\Zalt_\ZZind\in\Gamma\tenbun^{{\slist_\ZZind}}M$ is
defined so that
\begin{align}
\GatZ{\Zalt_\ZZind}\TprodL Y
=
\dfrac{}{\varepsilon}\Big|_{\varepsilon=0}
\LambdaM\big(\Zalt_0,\ldots,\Zalt_{\ZZind-1},\Zalt_{\ZZind}+\varepsilon Y,
\Zalt_{\ZZind+1},\ldots,\Zalt_{\NumZ+2\Numzeta}\big)
\label{AB_def_pfrac}
\end{align}
for all $Y\in\Gamma\tenbun^{{\slist_\ZZind}}M$.
An example in a local frame is given in appendix \ref{sch_Examples}.

By contrast, for $\ZZind=\NumZ+2\zeind-1$ then
$\Zalt_{\NumZ+2\zeind-1}=\zeta_\zeind\in\Gamma\Lambda^{p_\zeind} M$ i.e.
$\zeta_\zeind\in\Gamma\tenbun^{[\Tform,\ldots,\Tform]}$ (with a list of
length $p_\zeind$) the  Gateaux derivative
$\displaystyle\pfrac{\LambdaM}{\zeta_\zeind}\in\Gamma\Lambda^{n-{p_\zeind}}M$ is
defined so that
\begin{equation}
\begin{aligned}
\alpha\wedge\pfrac{\LambdaM}{\zeta_\zeind} 
&=
\dfrac{}{\varepsilon}\Big|_{\varepsilon=0}
\LambdaM\big(g,Z_1,\ldots,Z_\NumZ,\zeta_1,d\zeta_1,
\ldots,\zeta_\zeind+\varepsilon\alpha,d\zeta_\zeind,\ldots,
\zeta_\Numzeta,d\zeta_\Numzeta\big)
\\
&= \GatZ{\zeta_\zeind}\TprodL \alpha
\end{aligned}
\label{AB_def_pfrac_zeta}
\end{equation}
for all $\alpha\in\Gamma\Lambda^{p_\zeind} M$. The derivative
$\displaystyle\pfrac{\LambdaM}{\zeta_\zeind}\in\Gamma\Lambda^{n-p_\zeind} M$ may be related  to the derivative
$\displaystyle\GatZ{\zeta_\zeind}\in\Gamma (\Lambda^n M\otimes
\tenbun^{[\Tvec,\ldots,\Tvec]})$ since one can identify
$\Gamma\Lambda^{n-p_\zeind} M$ and
$\big\{\Omega\otimes\Vasym \in\Gamma (\Lambda^n M\otimes
\tenbun^{[\Tvec,\ldots,\Tvec]})\,\big| \Vasym\text{ antisymmetric} \big\}$.
The correspondence follows from the relation:
\begin{align}
(\Omega\otimes \Vasym)\TprodL \alpha
=
\alpha\wedge i_\Vasym \Omega
\label{AB_Omega_X_alpha}
\end{align}
 for any $\alpha\in\Gamma\Lambda^{p}M$.
Here a general antisymmetric
tensor $\Vasym\in\Gamma\tenbun^{[\Tvec,\ldots,\Tvec]}M$ can be written
\begin{equation}
\begin{aligned}
\Vasym
&=
\sum_{I_1<\ldots<I_{p}} \Vasym^{I_1\cdots I_{p}}  X_{I_1}\wedge\cdots\wedge X_{I_{p}}
\\&=
\frac1{{p}!}
\sum_{I_1<\ldots<I_{p}} \Vasym^{I_1\cdots I_{p}}    \sum_{\sigma\in S_{p}}
\epsilon(\sigma) X_{\sigma(I_1)}\otimes\cdots\otimes X_{\sigma(I_{p})}
\end{aligned}
\label{Tensors_general_V}
\end{equation}
and the internal contraction operator $ i_\Vasym  $ with respect to
$\Vasym$ is defined by:
\begin{align}
i_\Vasym= \frac1{{p}!}\sum_{I_1<\ldots<I_{p}} \Vasym^{I_1\cdots I_{p}}\,
i_{X_{I_{p}}}\cdots i_{X_{I_1}}
\label{Tensors_def_icalV}
\end{align}
so that for $\alpha\in\Gamma\Lambda^{p} M$
\begin{align}
\alpha\Tprod \Vasym =
i_\Vasym \alpha
\label{Tensors_alpha_V}
\end{align}
The proofs of (\ref{AB_Omega_X_alpha}) and (\ref{Tensors_alpha_V}) are
given in lemmas \ref{lm_LagForm} and \ref{lm_alpha_V} respectively in
appendix \ref{sch_lammas}.

Likewise if $\ZZind=\NumZ+2\zeind$ then
$\Zalt_{\NumZ+2\zeind}=d\zeta_\zeind\in\Gamma\Lambda^{p_\zeind+1} M$ and
$\displaystyle\pfrac{\LambdaM}{(d\zeta_\zeind)}\in\Gamma\Lambda^{n-p_\zeind-1}M$ is
defined by
\begin{equation}
\begin{aligned}
\beta\wedge\pfrac{\LambdaM}{(d\zeta_\zeind)}
&=
\dfrac{}{\varepsilon}\Big|_{\varepsilon=0}
\LambdaM(g,Z_1,\ldots,Z_\NumZ,\zeta_1,d\zeta_1,\ldots,
\zeta_\zeind,d\zeta_\zeind+\varepsilon\beta,\ldots,\zeta_\Numzeta,d\zeta_\Numzeta)
\\&=
\GatZ{(d\zeta_\zeind)}\TprodL\beta
\end{aligned}
\label{LagForm_def_pfrac_d_alpha_mu}
\end{equation}
for all $\beta\in\Gamma\Lambda^{p_\zeind+1} M$.

The variational derivative
$\displaystyle\frac{\delta\LambdaM}{\delta\zeta_\zeind}$ of $\LambdaM$
with respect to $\zeta_\zeind\in\Gamma\Lambda^{p_\zeind} M$ is defined
by
\begin{equation}
\begin{aligned}
\int_M\beta\wedge\frac{\delta\LambdaM}{\delta\zeta_\zeind}
=
\dfrac{}{\varepsilon}\Big|_{\varepsilon=0}
\int_M
\LambdaM(&g,Z_1,\ldots,Z_\NumZ,\zeta_1,d\zeta_1,\ldots,
\\\quad&
\zeta_\zeind+\varepsilon\beta,d\zeta_\zeind+\varepsilon d\beta,
\ldots,\zeta_\Numzeta,d\zeta_\Numzeta)
\end{aligned}
\label{LagForm_dynamical_def}
\end{equation}
for all $\beta\in\Gamma\Lambda^{p_\zeind} M$ with {\it compact support}. Hence
\begin{align}
\frac{\delta\LambdaM}{\delta\zeta_\zeind}
=
\pfrac{\LambdaM}{\zeta_\zeind} + (-1)^{p_\zeind+1}
d\Big(\pfrac{\LambdaM}{(d\zeta_\zeind)}\Big)
\label{LagForm_def_delta_frac_ea}
\end{align}
follows from lemma
\ref{lm_variation} in appendix \ref{sch_lammas}.

A $p$-form  $\zeta_\zeind$ is  said to be {``on $\LambdaM$-shell''} if
\begin{align}
\frac{\delta\LambdaM}{\delta\zeta_\zeind} = 0
\label{LagForm_dynamical}
\end{align}
In this situation
\begin{align}
\pfrac{\LambdaM}{\zeta_\zeind} = (-1)^{p_\zeind}
d\Big(\pfrac{\LambdaM}{(d\zeta_\zeind)}\Big)
\label{LagForm_delta_frac_ea_res}
\end{align}
The matter system is said to be on $\LambdaM$-shell if
(\ref{LagForm_dynamical}) is true for all $\zeta_\zeind$.

\subsection{Diffeomorphism invariance}

Given a local diffeomorphism $\phi:U_M\to U_M'$ where $U_M,U_M'\subset
M$ then the two maps $\phi_\star$ and $\phi^{-1\star}$ induce the map
$\hat\phi:\tenbun^{{\slist}}U_M\to\tenbun^{{\slist}} U_M'$ on
tensors. The action top-form $\LambdaM(\Zalt_0,\ldots
\Zalt_{\NumZ+2\Numzeta})$ is {\it (locally) diffeomorphism invariant}
if
\begin{align}
\hat\phi\Big(\LambdaM(\Zalt_0,\Zalt_1,\ldots,\Zalt_{\NumZ+2\Numzeta})\Big)
=
\LambdaM\Big(\hat\phi(\Zalt_0),\hat\phi(\Zalt_1),\ldots,
\hat\phi(\Zalt_{\NumZ+2\Numzeta})\Big)
\label{AB_def_diffo}
\end{align}
for all local diffeomorphism $\phi:U_M\to U_M'$.
\begin{lemma}
If $\LambdaM$ is diffeomorphism invariant then for all vector fields
$v\in\Gamma TM$
\begin{align}
\Lie_v \big(\LambdaM(\Zalt_0,\Zalt_1,\ldots,\Zalt_{\NumZ+2\Numzeta})\big)
=
\sum_{\ZZind=0}^{\NumZ+2\Numzeta}
\GatZ{\Zalt_\ZZind}\TprodL \Lie_v \Zalt_\ZZind
\label{Intro_diffeo_Lie}
\end{align}
which may be written
\begin{equation}
\begin{aligned}
&\Lie_v
\big(\LambdaM(g,Z_1,\ldots,Z_\NumZ,\zeta_1,d\zeta_1,
\ldots,\zeta_\Numzeta,d\zeta_\Numzeta)\big)=
\\&
\GatZ{g}\TprodL \Lie_v g
+
\sum_{\Zind=1}^{\NumZ}
\GatZ{Z_\Zind}\TprodL \Lie_v Z_\Zind
+
\sum_{\zeind=1}^{\Numzeta}
\Lie_v \zeta_\zeind\wedge\pfrac{\LambdaM}{\zeta_\zeind}
+
\sum_{\zeind=1}^{\Numzeta}
\Lie_v (d\zeta_\zeind)\wedge\pfrac{\LambdaM}{(d\zeta_\zeind)}
\end{aligned}
\label{Intro_diffeo_Lie_forms}
\end{equation}

\end{lemma}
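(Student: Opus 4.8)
The plan is to derive (\ref{Intro_diffeo_Lie}) by differentiating the invariance identity (\ref{AB_def_diffo}) along the local flow of $v$, and then to read off (\ref{Intro_diffeo_Lie_forms}) by unwinding the relabelling (\ref{Intro_Z_rename}). Fix $x\in M$; since both sides of (\ref{Intro_diffeo_Lie}) are top-forms whose value at $x$ depends only on field data near $x$, it suffices to work on a relatively compact neighbourhood of $x$ on which the flow $\phi_\varepsilon$ of $v$ is defined for all sufficiently small $\varepsilon$. Each $\phi_\varepsilon$ is then a local diffeomorphism, so (\ref{AB_def_diffo}) applies with $\phi=\phi_\varepsilon$ and gives, for all small $\varepsilon$,
\[
\hat\phi_\varepsilon\big(\LambdaM(\Zalt_0,\ldots,\Zalt_{\NumZ+2\Numzeta})\big)
=
\LambdaM\big(\hat\phi_\varepsilon(\Zalt_0),\ldots,\hat\phi_\varepsilon(\Zalt_{\NumZ+2\Numzeta})\big).
\]
Differentiating at $\varepsilon=0$, the left-hand side is handled by the standard fact that, since $\hat\phi_\varepsilon$ is the functorial push-forward induced by $(\phi_\varepsilon)_\star$ and hence equals $(\phi_{-\varepsilon})^\star$ on every tensor bundle, $\dfrac{}{\varepsilon}\big|_{\varepsilon=0}\hat\phi_\varepsilon(T)=-\Lie_v T$ for any tensor field $T$; one takes $T=\LambdaM(\Zalt_0,\ldots,\Zalt_{\NumZ+2\Numzeta})$.

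For the right-hand side the essential ingredient is a chain rule. Because $\LambdaM$ is a fibre bundle morphism, its value at $x$ is a smooth function of the finitely many fibre components $\Zalt_\ZZind|_x$, whose partial derivative in the $\ZZind$-th argument is exactly $\GatZ{\Zalt_\ZZind}\big|_x$ in the sense of (\ref{AB_def_pfrac}). Each curve $\varepsilon\mapsto\hat\phi_\varepsilon(\Zalt_\ZZind)|_x$ is smooth in the fibre over $x$ with velocity $-(\Lie_v\Zalt_\ZZind)|_x$ at $\varepsilon=0$, so the finite-dimensional chain rule gives
\[
\dfrac{}{\varepsilon}\Big|_{\varepsilon=0}\LambdaM\big(\hat\phi_\varepsilon(\Zalt_0),\ldots,\hat\phi_\varepsilon(\Zalt_{\NumZ+2\Numzeta})\big)\Big|_x
=
-\sum_{\ZZind=0}^{\NumZ+2\Numzeta}\big(\GatZ{\Zalt_\ZZind}\TprodL\Lie_v\Zalt_\ZZind\big)\Big|_x .
\]
Equating the two sides and cancelling the common minus sign yields (\ref{Intro_diffeo_Lie}) at $x$, hence on all of $M$. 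Here the $\Zalt_{\NumZ+2\zeind-1}=\zeta_\zeind$ and $\Zalt_{\NumZ+2\zeind}=d\zeta_\zeind$ slots are treated as independent; consistency with $\hat\phi_\varepsilon$ and $\Lie_v$ both commuting with $d$ means no extra compatibility is needed.

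I expect the only genuinely delicate point to be this chain rule: one must justify that differentiating $\LambdaM$ along the curve in which all of $\Zalt_0,\ldots,\Zalt_{\NumZ+2\Numzeta}$ move simultaneously is computed by summing the one-slot Gateaux derivatives of (\ref{AB_def_pfrac}). Passing to the pointwise, finite-dimensional description afforded by the fibre-bundle-morphism property makes this rigorous and simultaneously disposes of the fact that $\hat\phi_\varepsilon(\Zalt_\ZZind)$ is not literally an affine curve $\Zalt_\ZZind+\varepsilon Y$ but only first-order tangent to one. To pass from (\ref{Intro_diffeo_Lie}) to (\ref{Intro_diffeo_Lie_forms}) I would substitute (\ref{Intro_Z_rename}): the $\Zalt_0$ term gives $\GatZ{g}\TprodL\Lie_v g$ and the $\Zalt_\Zind$ terms give $\sum_{\Zind=1}^{\NumZ}\GatZ{Z_\Zind}\TprodL\Lie_v Z_\Zind$ directly, whereas for the pairs $\Zalt_{\NumZ+2\zeind-1}=\zeta_\zeind$, $\Zalt_{\NumZ+2\zeind}=d\zeta_\zeind$ I would invoke (\ref{AB_def_pfrac_zeta}) with $\alpha=\Lie_v\zeta_\zeind$ and (\ref{LagForm_def_pfrac_d_alpha_mu}) with $\beta=\Lie_v(d\zeta_\zeind)$ to rewrite $\GatZ{\zeta_\zeind}\TprodL\Lie_v\zeta_\zeind=\Lie_v\zeta_\zeind\wedge\pfrac{\LambdaM}{\zeta_\zeind}$ and $\GatZ{(d\zeta_\zeind)}\TprodL\Lie_v(d\zeta_\zeind)=\Lie_v(d\zeta_\zeind)\wedge\pfrac{\LambdaM}{(d\zeta_\zeind)}$, summing to (\ref{Intro_diffeo_Lie_forms}).
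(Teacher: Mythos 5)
Your proof is correct and follows essentially the same route as the paper's: apply the invariance identity (\ref{AB_def_diffo}) to the flow of $v$, differentiate at $\varepsilon=0$, split the derivative over the argument slots by the chain rule, identify each slot's velocity with $\Lie_v\Zalt_\ZZind$ (up to a sign convention for $\hat\phi_\varepsilon$ that cancels between the two sides), and then unwind the relabelling (\ref{Intro_Z_rename}) using (\ref{AB_def_pfrac_zeta}) and (\ref{LagForm_def_pfrac_d_alpha_mu}). The only difference is that you make explicit the pointwise, finite-dimensional justification of the chain-rule step that the paper passes over silently.
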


\begin{proof}
Let $\phi_\varepsilon$ be the one parameter family of diffeomorphisms
generated by $v$. From (\ref{AB_def_diffo})
\begin{align*}
\Lie_v \LambdaM(\Zalt_0,\ldots,\Zalt_{\NumZ+2\Numzeta})
&=
\dfrac{}{\varepsilon}\Big|_0
\hat\phi_\varepsilon\Big(\LambdaM(\Zalt_0,\ldots,\Zalt_{\NumZ+2\Numzeta})\Big)
\\&=
\dfrac{}{\varepsilon}\Big|_0
\LambdaM\Big(\hat\phi_\varepsilon(\Zalt_0),\ldots,\hat\phi_\varepsilon(\Zalt_{\NumZ+2\Numzeta})\Big)
\\&=
\sum_{\ZZind=0}^{\NumZ+2\Numzeta}
\dfrac{}{\varepsilon}\Big|_0
\LambdaM\Big(\Zalt_0,\ldots,\hat\phi_\varepsilon(\Zalt_\ZZind),\ldots,\Zalt_{\NumZ+2\Numzeta}\Big)
\\&=
\sum_{\ZZind=0}^{\NumZ+2\Numzeta}
\dfrac{}{\varepsilon}\Big|_0
\LambdaM\Big(\Zalt_0,\ldots,\Zalt_\ZZind + \varepsilon\Lie_v \Zalt_\ZZind,\ldots,\Zalt_{\NumZ+2\Numzeta}\Big)
\\&=
\sum_{\ZZind=0}^{\NumZ+2\Numzeta}
\GatZ{\Zalt_\ZZind}\TprodL \Lie_v \Zalt_\ZZind
\end{align*}
Then (\ref{Intro_diffeo_Lie_forms}) follows
from (\ref{AB_def_pfrac_zeta}) and
(\ref{LagForm_def_pfrac_d_alpha_mu}).
\end{proof}

Since $\Lie_v\zeta_\zeind=i_v d\zeta_\zeind + d i_v \zeta_\zeind$, the
top-forms $\displaystyle\Lie_v
\zeta_\zeind\wedge\pfrac{\LambdaM}{\zeta_\zeind}$ and $\displaystyle
\Lie_v(d \zeta_\zeind)\wedge\pfrac{\LambdaM}{(d\zeta_\zeind)}$ in
(\ref{Intro_diffeo_Lie}) can be expressed as
\begin{align}
\Lie_v\zeta_\zeind\wedge\pfrac{\LambdaM}{\zeta_\zeind}
=
d\Big(i_v \zeta_\zeind \wedge \pfrac{\LambdaM}{\zeta_\zeind}  \Big)
+
(-1)^{p_\zeind} i_v \zeta_\zeind\wedge d\pfrac{\LambdaM}{\zeta_\zeind} +
i_v d\zeta_\zeind\wedge\pfrac{\LambdaM}{\zeta_\zeind}
\label{AB_Lie_v_zeta}
\end{align}
and
\begin{align}
\Lie_v(d\zeta_\zeind)\wedge\pfrac{\LambdaM}{(d\zeta_\zeind)}
=
d\Big(i_v d\zeta_\zeind \wedge \pfrac{\LambdaM}{(d\zeta_\zeind)} \Big)
-
(-1)^{p_\zeind} i_v d\zeta_\zeind\wedge d\pfrac{\LambdaM}{(d\zeta_\zeind)}
\label{AB_Lie_v_dzeta}
\end{align}
where $\zeta_\zeind\in\Gamma\Lambda^{p_\zeind} M$. See lemma
\ref{lm_Lie_v_beta_alpha} in the appendix \ref{sch_lammas}.

By analogy with (\ref{AB_Lie_v_zeta}) and (\ref{AB_Lie_v_dzeta}), the
top-forms $\displaystyle \GatZ{\Zalt_\ZZind}\TprodL \Lie_v
\Zalt_\ZZind$ in (\ref{Intro_diffeo_Lie}) may be written
\begin{align}
\GatZ{\Zalt_\ZZind}\TprodL \Lie_v \Zalt_\ZZind
=
d\bigg(\Amap_v\Big(\GatZ{\Zalt_\ZZind},\Zalt_\ZZind\Big)\bigg)
+
\Bmap_v\Big(\GatZ{\Zalt_\ZZind},\Zalt_\ZZind\Big)
\label{AB_AB_GatZ}
\end{align}
where the maps
\begin{align*}
&\Amap:\Gamma TM \times
\Gamma (\Lambda^n M\otimes \tenbun^{\cnj{\slist}}M)
\times \Gamma\tenbun^{{\slist}}M
\to
\Gamma \Lambda^{n-1} M
\,,\quad
(v,\Psi,Z)\mapsto \Amap_v(\Psi,Z)
\end{align*}
and
\begin{align*}
&\Bmap:\Gamma TM \times
\Gamma (\Lambda^n M\otimes \tenbun^{\cnj{\slist}}M)
\times \Gamma\tenbun^{{\slist}}M
\to
\Gamma \Lambda^{n} M
\,,\quad
(v,\Psi,Z)\mapsto \Bmap_v(\Psi,Z)
\end{align*}
are `f'-linear in $v$ and defined inductively as follows:
For 0-forms, $f\in\Gamma\tenbun^{{[\,]}}M$
\begin{align}
\Amap_v(\Omega,f) = 0
\qquadand
\Bmap_v(\Omega,f) = v(f) \Omega
\label{AB_def_AB_f}
\end{align}
For 1-forms $\alpha\in\Gamma\tenbun^{{[\Tform]}}M$
\begin{equation}
\begin{aligned}
&\Amap_v(\Omega\otimes u,\alpha) = \alpha(v) i_u\Omega
\\\text{and}\qquad
&\Bmap_v(\Omega\otimes u,\alpha) =
(-1)^{n+1} i_u \Omega \wedge i_v d\alpha -
d i_u \Omega \wedge i_v \alpha
\end{aligned}
\label{AB_def_AB_alpha}
\end{equation}
For vectors $u\in\Gamma\tenbun^{[\Tvec]}M$
\begin{equation}
\begin{aligned}
&\Amap_v(\Omega\otimes \alpha,u) = -\alpha(v) i_u\Omega
\\\text{and}\qquad&
\Bmap_v(\Omega\otimes \alpha,u) =
v\big(\alpha(u)\big)\,\Omega
- i_u \Omega \wedge i_v d\alpha
+ (-1)^n d i_u \Omega \wedge i_v \alpha
\end{aligned}
\label{AB_def_AB_u}
\end{equation}
For tensors $\Phi_1\in\Gamma\tenbun^{\cnj{\slist}} M$,
$\Phi_2\in\Gamma\tenbun^{\cnj\tlist} M$, $Z_1\in\Gamma\tenbun^{\slist}
M$, $Z_2\in\Gamma\tenbun^{\tlist} M$ and $\Omega\in\Gamma\Lambda^n M$,
define the Leibnitz rules:
\begin{equation}
\begin{aligned}
&\Amap_v(\Omega\otimes \Phi_1\otimes\Phi_2,Z_1\otimes Z_2)
\\
&\hspace{4em}=
\Amap_v((\Phi_1\Tprod Z_1)\Omega\otimes \Phi_2,Z_2) +
\Amap_v((\Phi_2\Tprod Z_2)\Omega\otimes \Phi_1,Z_1)
\\
&\Bmap_v(\Omega\otimes \Phi_1\otimes\Phi_2,Z_1\otimes Z_2)
\\
&\hspace{4em}=
\Bmap_v((\Phi_1\Tprod Z_1)\Omega\otimes \Phi_2,Z_2) +
\Bmap_v((\Phi_2\Tprod Z_2)\Omega\otimes \Phi_1,Z_1)
\end{aligned}
\label{AB_def_AB_Ten}
\end{equation}
The proof that (\ref{AB_AB_GatZ}) follows from lemma \ref{lm_LvZ_dA_B}
given in appendix \ref{sch_lammas}. The proof that
(\ref{AB_Lie_v_zeta}) and (\ref{AB_Lie_v_dzeta}) are consistent with
(\ref{AB_AB_GatZ}) is given in lemma \ref{lm_Lie_v_beta_AB} appendix
\ref{sch_lammas}.

If $\LambdaM$ is diffeomorphism
invariant then from (\ref{Intro_diffeo_Lie}) and (\ref{AB_AB_GatZ}):
\begin{equation}
\begin{aligned}
d i_v \LambdaM
&=
\Lie_v \LambdaM
=
\sum_{\ZZind=0}^{\NumZ+2\Numzeta}
\GatZ{\Zalt_\ZZind}\Tprod \Lie_v \Zalt_\ZZind
\\&=
d\bigg(
\sum_{\ZZind=0}^{\NumZ+2\Numzeta}
\Amap_v \Big(\GatZ{\Zalt_\ZZind},\Zalt_\ZZind\Big)
\bigg)
+
\sum_{\ZZind=0}^{\NumZ+2\Numzeta}
\Bmap_v \Big(\GatZ{\Zalt_\ZZind},\Zalt_\ZZind\Big)
\end{aligned}
\label{AB_diVLam}
\end{equation}
hence applying lemma \ref{lm_AB}, appendix \ref{sch_lammas} gives
\begin{align}
i_v\LambdaM =
\sum_{\ZZind=0}^{\NumZ+2\Numzeta}
\Amap_v \Big(\GatZ{\Zalt_\ZZind},\Zalt_\ZZind\Big)
\label{AB_ivLambda_A}
\end{align}
and
\begin{align}
\sum_{\ZZind=0}^{\NumZ+2\Numzeta}\Bmap_v \Big(\GatZ{\Zalt_\ZZind},\Zalt_\ZZind\Big) = 0
\label{AB_ivLambda_B}
\end{align}
Using (\ref{AB_Lie_v_zeta}) and (\ref{AB_Lie_v_dzeta}), these may be written as
\begin{equation}
\begin{aligned}
\Amap_v\Big(\GatZ{g},g\Big)
&=
i_v\LambdaM
-
\sum_{\Zind=1}^\NumZ
\Amap_v \Big(\GatZ{Z_\Zind },Z_\Zind \Big)
\\&\qquad\qquad
-
\sum_{\zeind=1}^\Numzeta\Big(
i_v \zeta_\zeind\wedge \pfrac{\LambdaM}{\zeta_\zeind}
+
i_v d\zeta_\zeind\wedge \pfrac{\LambdaM}{(d\zeta_\zeind)} \Big)
\end{aligned}
\label{AB_tau_Mat}
\end{equation}
and
\begin{equation}
\begin{aligned}
\Bmap_v\Big(\GatZ{g},g\Big)
&=
-\sum_{\Zind=1}^\NumZ\Bmap_v \Big(\GatZ{Z_\Zind },Z_\Zind \Big)
-\sum_{\zeind=1}^\Numzeta\bigg(
(-1)^{p_\zeind} i_v \zeta_\zeind\wedge d\pfrac{\LambdaM}{\zeta_\zeind}
\\&\qquad\qquad
 + i_v d\zeta_\zeind\wedge\pfrac{\LambdaM}{\zeta_\zeind} +
(-1)^{p_\zeind+1} i_v d\zeta_\zeind\wedge d\pfrac{\LambdaM}{(d\zeta_\zeind)}
\bigg)
\end{aligned}
\label{AB_Bg_D_tau_Met}
\end{equation}
When the system is on $\LambdaM$-shell (\ref{AB_tau_Mat}) may be written
\begin{equation}
\begin{aligned}
\Amap_v\Big(\GatZ{g},g\Big)
&=
i_v\LambdaM
-
\sum_{\zeind=1}^\Numzeta
\Lie_v \zeta_\zeind \wedge \pfrac{\LambdaM}{(d\zeta_\zeind)}
\\&
\qquad+
d\Big(\sum_{\zeind=1}^\Numzeta
i_v \zeta_\zeind\wedge \pfrac{\LambdaM}{(d\zeta_\zeind)}\Big)
-
\sum_{\Zind=1}^\NumZ
\Amap_v \Big(\GatZ{Z_\Zind },Z_\Zind \Big)
\end{aligned}
\label{AB_tau_Mat_Shell}
\end{equation}
whereas (\ref{AB_Bg_D_tau_Met}) reduces to
\begin{align}
\Bmap_v\Big(\GatZ{g},g\Big)
&=
-\sum_{\Zind=1}^\NumZ\Bmap_v \Big(\GatZ{Z_\Zind },Z_\Zind \Big)
\label{AB_Bg_D_tau_Met_Shell}
\end{align}
See lemma \ref{lm_Lie} appendix \ref{sch_lammas}.  Relations
(\ref{AB_tau_Mat})-(\ref{AB_Bg_D_tau_Met_Shell}) play a pivotal role
in the arguments below. In particular they enable one identify terms
which may be associated with the quantities derived historically with
Noether, Belinfante and Rosenfeld.

\section{Currents and Conservation laws}
\label{ch_Phys}

Many quantities in physics owe their raison-d'etre to the existence of conserved quantities that do not change with time in a dynamical system. Thus notions of energy, momentum and angular momentum arose from the analysis of Newtonian {\it particle} dynamics. With the introduction of {\it fields} and the development of continuum mechanics it became natural to incorporate such concepts into continuous dynamical systems and their unification into a ``stress-energy-momentum'' complex offered an attractive objective. However, as is well known such a unification is not unique and the indiscriminate use of the term ``stress-energy-momentum'' tensor has led to unnecessary confusion when discussing forces and torques produced by fields in media.

In this section a number of technical issues are addressed that inter-relate these physical concepts. They include the role played by different aspects of (multi-)linearity needed for a general definition of stress over curved surfaces in space, the role of Killing symmetry needed to establish the notions of energy and momentum and the role of algebraic symmetry of maps and associated tensors or tensor-densities in their conservation. Using the variational  framework established in the previous section it is possible to correlate these aspects with the parts played by the presence or absence of background matter fields and background gravitation in their implementation.

Since some of the traditional arguments for the construction of a symmetric  stress-energy-momentum tensor in Minkowski spacetime are spurious (even in the absence of background matter fields) it is useful to begin  the discussion with the Noether current associated with  $\LambdaM$ in a general background using the Lie-derivative. This leads naturally to conservation laws in the presence of background symmetries.  In this manner it is also straightforward to extricate the role played by $f$-linearity in establishing a proper tensor description of stress.

\subsection{Noether and Belinfante-Rosenfeld currents}

For the restricted class of actions considered in this article we
define the Noether $(n-1)$-form current associated with $\LambdaM$ by
\begin{align}
\Noet_v =
i_v \LambdaM -
\sum_{\zeind=1}^\Numzeta
\Lie_v \zeta_\zeind \wedge \pfrac{\LambdaM}{(d\zeta_\zeind)}
\label{LagForm_def_tau_can}
\end{align}
for $v\in\Gamma TM$, where it is assumed that the system is on
$\LambdaM$-shell. Since the Lie derivative $\Lie_v$ is not $f$-linear
in $v$ neither is $\Noet_v$ and for $v=v^a \partial_a$, lemma
\ref{lm_L_fv_alpha} in appendix \ref{sch_lammas} gives
\begin{align}
\Noet_v = v^a \Noet_{\partial_a} + \sum_{\zeind=1}^\Numzeta
d v^a \wedge i_{\partial_a}  \zeta_\zeind \wedge \pfrac{\LambdaM}{(d\zeta_\zeind)}
\label{LagForm_tau_can_lin}
\end{align}
It is however $\Real$-linear in $v$ (where $v^a$ are constants).  For
a chosen nowhere vanishing $\Omega\in\Gamma\Lambda^nM$ one may define
the map $\Tden^\Noet:\Gamma TM\to\Gamma TM$ by
\begin{align}
\Noet_v = i_{\Tden^\Noet(v)}\Omega
\label{Tensorsdef_T_Noet}
\end{align}
In a coordinate system $(x^1,\ldots,x^n)$ with
$\Omega=dx^1\wedge\cdots\wedge dx^n$ and $\LambdaM={\Sden}\Omega$ then
$\Tden^\Noet$ has component maps
\begin{align}
\Tden^\Noet{}^a{}_b = dx^a\big(\Tden^\Noet(\partial_b)\big)
\label{Tensors_T_Neot_coords_def}
\end{align}
which from lemma \ref{lm_tau_Tden} in appendix \ref{sch_lammas} gives
\begin{align}
dx^a \wedge \Noet_{\partial_b} = \Tden^\Noet{}^a{}_b \,\Omega
\qquadand
\Tden^\Noet{}^a{}_b =
i_{\partial_n} \cdots i_{\partial_1} (dx^a \wedge \Noet_{\partial_b})
\label{Tensors_T_Neot_coords_alt}
\end{align}
Lemma
\ref{lm_coord_T} in appendix \ref{sch_lammas} then gives
\begin{align}
\Tden^\Noet{}^a{}_b
=
\delta_b^a {\Sden}
-
\sum_{\zeind=1}^\Numzeta \quad
\sum_{I_1<\cdots<I_{p_B}}
\partial_b(\zeta_{BI})
\pfrac{{\Sden}}{(\partial_a\zeta_{BI})}
\label{Tensors_T_Neot_coords}
\end{align}
where
\begin{align*}
\zeta_B=\sum_{I_1<\cdots<I_{p_B}}
\zeta_{BI} \,dx^{I_1}\wedge\cdots\wedge
dx^{I_{p_B}}
\end{align*}
Since $\Tden^\Noet$ is not $f$-linear, i.e. $\Tden^\Noet(fv)\ne
f\Tden^\Noet(v)$ in general, one must not confuse
$\Tden^\Noet{}_b{}^a$ with the components of a tensor field.
In fact using (\ref{LagForm_tau_can_lin}) and lemma \ref{lm_tau_Tden}
in the appendix for $v=v^a\partial_a$,
\begin{align}
\Tden^\Noet(v)
=
(\Tden^\Noet)^a{}_b v^b
+
(\partial_c v^b) i_{\partial_1} \cdots i_{\partial_n}\Big(
\sum_{\zeind=1}^\Numzeta
dx^a\wedge
d x^c \wedge i_{\partial_b}  \zeta_\zeind \wedge \pfrac{\LambdaM}{(d\zeta_\zeind)}
\Big)
\label{Tensors_TNv_ab}
\end{align}
However, it is not uncommon to refer to (\ref{Tensors_T_Neot_coords})
as the components of the canonical stress-energy-momentum tensor
associated with $\LambdaM$ in
Minkowski spacetime
 \cite{goldstein,soper1976classical,wald1984general}.
This arises since $\Tden^\Noet{}_b{}^a\to R^a{}_c \Tden^\Noet{}_b{}^c$,
under affine coordinate transformations of the form
\begin{align*}
x^a \to y^a = R^a{}_b x^b + B^a
\end{align*}
where $\det(R)=1$ and $R^a{}_b,B^a\in\Real$ are constants.  This
accounts for its widespread use in special relativity as formulated by
Einstein and Minkowski in spacetime $\MST$.  Since $\Tden^\Noet$ is
not tensorial with respect to arbitrary coordinate transformations its
use for calculating stresses is restricted to planar surfaces in
space. This follows from (\ref{Tensors_TNv_ab}),
since for any event $p\in\MST$ on a {\em non-planar} spacelike
2-surface with normal field $w$, $\Tden^\Noet(w)|_p$ will depend on
the derivatives $(\partial_a w^b)|_p$. Thus one cannot, in general,
define Cauchy traction forces that must be independent of such
derivatives \cite{tucker2008intrinsic}.

The requirement that the concept of stress follows from a bona-fide
tensor leads one to seek an $f$-linear map constructed from
$\Noet_v$. Since $\Lie_v= d i_v + i_v d$, the $(n-1)$-form current
\begin{align}
\tau^\LSym_v
=
\Noet_v + dS_v
\label{Phys_tau_Lsym_can}
\end{align}
where
\begin{align}
S_v = \sum_{\zeind=1}^\Numzeta
i_v \zeta_\zeind\wedge \pfrac{\LambdaM}{(d\zeta_\zeind)}
\label{Phys_def_Sv}
\end{align}
is manifestly $f$-linear in $v$.  One may refer to this as a
Belinfante-Rosenfeld formula  \cite{gotay1992stress} although one may
note that no metric on $M$ is necessary for its construction. It follows from
(\ref{LagForm_def_tau_can}) and (\ref{Phys_def_Sv}) that
\begin{align}
\tau^\LSym_v
=
i_v\LambdaM
-
\sum_{\zeind=1}^\Numzeta\Big(
i_v \zeta_\zeind\wedge \pfrac{\LambdaM}{\zeta_\zeind}
+
i_v d\zeta_\zeind\wedge \pfrac{\LambdaM}{(d\zeta_\zeind)} \Big)
\label{Phys_tau_Lsym_alt}
\end{align}
which we refer to as the {\em Belinfante-Rosenfeld} stress-energy-momentum
current associated with $\LambdaM$.
From (\ref{AB_tau_Mat}) one can also write
(\ref{Phys_tau_Lsym_alt}) in terms of the maps $\Amap_v$ as
\begin{align}
\tau^\LSym_v
=
\Amap_v\Big(\GatZ{g},g\Big)
+
\sum_{\Zind=1}^\NumZ
\Amap_v \Big(\GatZ{Z_\Zind },Z_\Zind \Big)
\label{Phys_tau_Lsym_A}
\end{align}
Again, for a chosen nowhere vanishing
$\Omega\in\Gamma\Lambda^nM$ the map
$\Tden^\LSym\in\Gamma\tenbun^{[\Tvec,\Tform]}$ defined\footnote{Some authors \cite{soper1976classical,wald1984general} refer to
  $\Tden^\Noet$ as the canonical stress-energy-momentum tensor whereas
  others \cite{hehl2003foundations,obukhov2008electromagnetic,hehl1995metric}
  refer to $\Tden^\LSym$ as the canonical stress-energy-momentum
  tensor. The appellations $\Tden^\Noet$ and $\Tden^\LSym$ eliminate this
  notational ambiguity.}  by
\begin{align}
\tau^\LSym_v = i_{\Tden^\LSym(v)}\Omega
\label{Tensorsdef_T_LSym}
\end{align}
is a density with respect to $\Omega$ with weight $-1$.

Given a preferred metric $g$ and any map $\Tden:\Gamma TM\to\Gamma TM$
one may define the map
$T:\Gamma TM\times\Gamma TM\to \Gamma\Lambda^0 M$ by
\begin{align*}
T(u,v)=g(\Tden(u),v)
\end{align*}
This enables one to discuss the algebraic symmetries of $T$.  Such a
map is said to be {\em algebraically symmetric} with respect to $g$ if
\begin{align}
T(u,v)=T(v,u)
\label{Tensors_symm}
\end{align}
This implies (see lemma \ref{lm_T_tau_symmetry} in appendix \ref{ch_Id})
\begin{align}
\dual{v}\wedge i_{\Tden(u)} \Omega
=
\dual{u}\wedge i_{\Tden(v)} \Omega
\label{Tensors_symm_alt}
\end{align}
for all $u,v\in\Gamma TM$ any non-vanishing top-form $\Omega\in\Lambda^n M$.
Thus for the maps
$T^\Noet:\Gamma TM\times\Gamma TM\to \Gamma\Lambda^0 M$ and
$T^\LSym:\Gamma TM\times\Gamma TM\to \Gamma\Lambda^0 M$
\begin{align}
T^\Noet(u,v)=g\big(\Tden^\Noet(u),v\big)
\qquadand
T^\LSym(u,v)=g\big(\Tden^\LSym(u),v\big)
\label{Tensors_TN_and_TB}
\end{align}
Then $T^\Noet$ is symmetric if
\begin{align}
\dual{u}\wedge\Noet_v = \dual{v}\wedge\Noet_u
\qquadtext{for all}u,v\in\Gamma TM
\label{Tensors_sym_Noet}
\end{align}
and $T^\LSym$ is symmetric if
\begin{align}
\dual{u}\wedge\tau^\LSym_v = \dual{v}\wedge\tau^\LSym_u
\qquadtext{for all}u,v\in\Gamma TM
\label{Tensors_sym_taubel}
\end{align}

To illustrate these notions, consider the premetric formulation of
electromagnetism \cite{hehl2003foundations,hehl2008maxwell} on
spacetime $\MST$ where one starts with the action
\begin{align}
\LambdaM(Z_1,\ldots,Z_\NumZ,dA)
= \tfrac12 F\wedge \calH
\label{Tensors_LambdaM_Hehl_Ob}
\end{align}
with $F=dA$ and $\calH=\CRpre(Z_1,\ldots,Z_\NumZ,F)$ linear in
$F$. The Noether current is then
\begin{align}
\Noet_v =
\tfrac12 F\wedge i_v\calH - \tfrac12 i_v F \wedge \calH
+ d i_v A \wedge\calH
\label{Tensors_EMpre_Neot}
\end{align}
This is manifestly not $U(1)$ gauge invariant. However the
Belinfante-Rosenfeld current (\ref{Phys_tau_Lsym_can})
\begin{align}
\tau^\LSym_v =
\tfrac12 F\wedge i_v\calH - \tfrac12 i_v F \wedge \calH
\label{Tensors_EMpre_Lsym}
\end{align}
is $U(1)$ gauge invariant.

When a metric is prescribed as in the spacetime model
(\ref{Tensors_F_wedge_G}) where
\begin{align}
\LambdaM(g,Z_1,\ldots,Z_\NumZ,dA)
= \tfrac12 F\wedge\star G
\label{Tensors_action_EMg}
\end{align}
with $G=\CR(g,Z_1,\ldots,Z_\NumZ,F)$ linear in $F$, the Noether
current is given by (\ref{Tensors_EMpre_Neot}) with $\calH=\star G$
and the Belinfante-Rosenfeld current is given by
(\ref{Tensors_EMpre_Lsym}) with $\calH=\star G$. In this case
the
Belinfante-Rosenfeld current
\begin{align}
\tau^\LSym_v =
\tfrac12 F\wedge i_v\star G - \tfrac12 i_v F \wedge \star G
\label{Tensors_EM_Lsym}
\end{align}
gives rise via
(\ref{Tensorsdef_T_LSym}) and (\ref{Tensors_symm}) to the
algebraically non-symmetric Minkowski stress-energy-momentum tensor
\cite{minkowski1910grundgleichungen}
density $T^\LSym$. Examples (\ref{Intro_Lag_SymMink_1}) and
(\ref{Intro_Lag_Abraham_1}) are particular cases of
(\ref{Tensors_action_EMg}).

In general, neither $T^\Noet$ nor
$T^\LSym$ possess the algebraic symmetry (\ref{Tensors_symm}) due to
the presence of background fields.  In the
particular case of the vacuum where $\calH=\star F$, then $T^\LSym$
has algebraic symmetry, while $T^\Noet$ remains non-symmetric.  In the
absence of a preferred metric one cannot even define $T^\Noet$ or
$T^\LSym$ from $\Tden^\Noet$ and $\Tden^\LSym$ respectively.

\subsection{Conservation Laws}

In the presence of Lie-symmetries of $g$ and $Z_\Zind$, both the Noether
and Belinfante-Rosenfeld stress-energy-momentum current give rise to conserved
material quantities.
\begin{theorem}
\label{thm_d_tau_Lsym}
  If $K\in\Gamma TM$ is a Killing vector field, i.e. $\Lie_K g=0$, and
  in addition the background tensor fields $\Set{\boldsymbol Z}$
  satisfy the Lie-symmetry condition $\Lie_K Z_\Zind=0$ for
  $\Zind=1,\ldots,\NumZ$, then both the Noether current and the Belinfante-Rosenfeld current are closed
\begin{align}
d \tau^\LSym_K = d \Noet_K = 0
\label{Phys_d_tauSym_K}
\end{align}
\end{theorem}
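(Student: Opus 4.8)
The plan is to show that each current, when contracted along the Killing vector $K$, is closed by exploiting the relations already derived from diffeomorphism invariance, together with the hypotheses $\Lie_K g = 0$ and $\Lie_K Z_\Zind = 0$. First I would treat $\Noet_K$. Starting from the definition (\ref{LagForm_def_tau_can}) and using $\Lie_v = d i_v + i_v d$ on differential forms, I would compute $d\Noet_K$. The key input is that on $\LambdaM$-shell the relation (\ref{AB_tau_Mat_Shell}) holds, and moreover $d i_v \LambdaM = \Lie_v \LambdaM$. When $v = K$ is Killing and $\Lie_K Z_\Zind = 0$, equation (\ref{Intro_diffeo_Lie_forms}) reduces: the $\GatZ{g}\TprodL \Lie_K g$ term vanishes because $\Lie_K g = 0$, each $\GatZ{Z_\Zind}\TprodL \Lie_K Z_\Zind$ term vanishes because $\Lie_K Z_\Zind = 0$, so $\Lie_K \LambdaM$ collapses to just the $\zeta$-dependent terms $\sum_\zeind \Lie_K \zeta_\zeind \wedge \pfrac{\LambdaM}{\zeta_\zeind} + \sum_\zeind \Lie_K (d\zeta_\zeind)\wedge\pfrac{\LambdaM}{(d\zeta_\zeind)}$.

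Next I would relate $d\Noet_K$ to this collapsed expression. Writing $\Noet_K = i_K\LambdaM - \sum_\zeind \Lie_K \zeta_\zeind \wedge \pfrac{\LambdaM}{(d\zeta_\zeind)}$ and taking $d$, the term $d i_K \LambdaM = \Lie_K \LambdaM$ equals the collapsed right-hand side above. For the remaining piece, $d\big(\sum_\zeind \Lie_K \zeta_\zeind \wedge \pfrac{\LambdaM}{(d\zeta_\zeind)}\big)$, I would use the Leibniz rule for $d$ together with the fact that $d$ commutes with $\Lie_K$, plus the on-shell relation (\ref{LagForm_delta_frac_ea_res}) which lets one trade $\pfrac{\LambdaM}{\zeta_\zeind}$ for $(-1)^{p_\zeind} d\pfrac{\LambdaM}{(d\zeta_\zeind)}$. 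After expanding, the $\zeta$-terms coming from $\Lie_K\LambdaM$ should cancel exactly against those produced by differentiating $\Lie_K \zeta_\zeind \wedge \pfrac{\LambdaM}{(d\zeta_\zeind)}$, leaving $d\Noet_K = 0$. (Alternatively, and perhaps more cleanly, I would note from (\ref{AB_tau_Mat_Shell}) with $v=K$, $\Lie_K g=0$, $\Lie_K Z_\Zind=0$ that $\Amap_K(\GatZ{g},g)=0$; since $\Amap_K(\GatZ{g},g) = \Noet_K$ up to the exact term $d S_K$ — compare (\ref{Phys_tau_Lsym_can})–(\ref{Phys_tau_Lsym_A}) — one gets $\tau^\LSym_K = 0$ directly, which would be an even stronger statement; I would double-check whether the $Z$-dependent $\Amap$ terms in (\ref{Phys_tau_Lsym_A}) also vanish under $\Lie_K Z_\Zind = 0$.)

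For $\tau^\LSym_K$, the quickest route is (\ref{Phys_tau_Lsym_can}): since $\tau^\LSym_K = \Noet_K + dS_K$ and $d^2 = 0$, we immediately get $d\tau^\LSym_K = d\Noet_K$, so the two closedness statements are equivalent and it suffices to prove $d\Noet_K = 0$. Hence the whole theorem reduces to the single computation sketched above.

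The main obstacle I anticipate is bookkeeping the signs and the interplay between $\Lie_K$ and $d$ in the $\zeta$-sector: one must carefully use $\Lie_K d\zeta_\zeind = d\Lie_K\zeta_\zeind$, the graded Leibniz rule for $d$ on wedge products of forms of degrees $p_\zeind$ and $n - p_\zeind - 1$, and the on-shell identity (\ref{LagForm_delta_frac_ea_res}) in the correct places so that everything cancels. A secondary subtlety is justifying the use of (\ref{AB_tau_Mat_Shell}) and (\ref{AB_ivLambda_B}): these were derived assuming $\LambdaM$ is diffeomorphism invariant and the system is on $\LambdaM$-shell, both of which are standing assumptions here, so invoking them should be legitimate; I would simply make the on-shell hypothesis explicit in the proof.
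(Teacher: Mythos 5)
Your main argument is correct, and it takes a genuinely different (more classical) route than the paper. You prove $d\Noet_K=0$ directly: $d\,i_K\LambdaM=\Lie_K\LambdaM$ collapses under $\Lie_Kg=0$ and $\Lie_KZ_\Zind=0$ to the $\zeta$-sector terms of (\ref{Intro_diffeo_Lie_forms}), while the exterior derivative of $\sum_\zeind\Lie_K\zeta_\zeind\wedge\pfrac{\LambdaM}{(d\zeta_\zeind)}$, expanded with the graded Leibniz rule, $d\Lie_K\zeta_\zeind=\Lie_K(d\zeta_\zeind)$ and the on-shell identity (\ref{LagForm_delta_frac_ea_res}), reproduces exactly those same terms; the cancellation is clean and the signs do work out. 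Then $d\tau^\LSym_K=d\Noet_K$ follows from (\ref{Phys_tau_Lsym_can}) and $d^2=0$, as you say. The paper instead works entirely at the level of the $\Amap$/$\Bmap$ decomposition: $\Lie_Kg=0$ and $\Lie_KZ_\Zind=0$ turn (\ref{AB_AB_GatZ}) into $d\,\Amap_K(\cdot,\cdot)=-\Bmap_K(\cdot,\cdot)$ for each background argument, so $d\tau^\LSym_K$ computed from (\ref{Phys_tau_Lsym_A}) equals $-\Bmap_K\big(\GatZ{g},g\big)-\sum_\Zind\Bmap_K\big(\GatZ{Z_\Zind},Z_\Zind\big)$, which vanishes by the on-shell identity (\ref{AB_Bg_D_tau_Met_Shell}). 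Your version is the self-contained textbook Noether computation; the paper's version buys an immediate identification of the obstruction when some $Z_\Zind$ fails to be Lie-symmetric, cf.\ (\ref{Phys_dtau_K_ne0}).

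Discard, however, your parenthetical ``alternative''. The hypothesis $\Lie_Kg=0$ does \emph{not} give $\Amap_K\big(\GatZ{g},g\big)=0$: from (\ref{AB_AB_GatZ}) it only gives $d\,\Amap_K\big(\GatZ{g},g\big)=-\Bmap_K\big(\GatZ{g},g\big)$, and indeed $\Amap_K\big(\GatZ{g},g\big)=\tau^\Met_K$ by (\ref{AB_tau_Ein}), which is nonzero in general. Nor does $\Amap_K\big(\GatZ{g},g\big)$ equal $\Noet_K+dS_K=\tau^\LSym_K$; by (\ref{Phys_tau_Lsym_A}) they differ by the $Z$-dependent $\Amap$ terms, which do not vanish merely because $\Lie_KZ_\Zind=0$. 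The conclusion $\tau^\LSym_K=0$ would in any case be physically absurd: it would say every conserved current vanishes identically. Stick with your primary computation.
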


\begin{proof}
Since $\Lie_K g=0$ and $\Lie_K Z_\Zind=0$ then from (\ref{AB_AB_GatZ})
\begin{align}
d\bigg(\Amap_K\Big(\GatZ{g},g\Big)\bigg)
+
\Bmap_K\Big(\GatZ{g},g\Big)
=0
\label{Tensors_lm_AB_Lk_g}
\end{align}
and
\begin{align*}
d\bigg(\Amap_K\Big(\GatZ{Z_\Zind},Z_\Zind\Big)\bigg)
+
\Bmap_K\Big(\GatZ{Z_\Zind},Z_\Zind\Big)
=0
\end{align*}
for all $Z_\Zind$. Thus from (\ref{Phys_tau_Lsym_A}) and
(\ref{AB_Bg_D_tau_Met_Shell})
\begin{align*}
d \tau^\LSym_K
&=
d\bigg(\Amap_K\Big(\GatZ{g},g\Big)\bigg)
 + \sum_{\Zind=1}^\NumZ
d \bigg(\Amap_K\Big(\GatZ{Z_\Zind },Z_\Zind \Big)
\bigg)
\\&=
-
\Bmap_K\Big(\GatZ{g},g\Big)
-
\sum_{\Zind=1}^\NumZ\Bmap_K \Big(\GatZ{Z_\Zind },Z_\Zind \Big)
=0
\end{align*}
\end{proof}

In terms of the maps $\Tden^\Noet$ and  $\Tden^\LSym$, using
lemma \ref{lm_tau_Tden} in appendix \ref{sch_lammas},
(\ref{Phys_d_tauSym_K}) may be written in a general coordinate basis with
$\Omega=dx^1\wedge\cdots\wedge dx^n$ as
\begin{align}
\partial_a((\Tden^\LSym){}^a{}_b K^b)
=
0
\label{Phys_d_tauSym_K_coords}
\end{align}
and from (\ref{Tensors_TNv_ab})
\begin{align}
\partial_a\bigg((\Tden^\Noet)^a{}_b K^b
+
(\partial_c K^b) i_{\partial_1} \cdots i_{\partial_n}\Big(
\sum_{\zeind=1}^\Numzeta
dx^a\wedge
d x^c \wedge i_{\partial_b}  \zeta_\zeind \wedge \pfrac{\LambdaM}{(d\zeta_\zeind)}
\Big)\bigg)
=
0
\label{Phys_d_tauN_K_coords}
\end{align}
When $g$ is Lorentzian on spacetime, by Stoke's theorem both
$\Tden^\LSym$ and $\Tden^\Noet$ yield conservation laws.
An energy
conservation law follows if $K$ is timelike, a linear momentum
conservation law if there exist three independent spacelike
translational Killing vectors and an angular momentum conservation law
follows if there exists a basis of three independent spacelike Killing
vectors generating spatial rotations. Such conservation laws make no
reference to forces (stress) or torques (moments) and are valid in the
presence of smooth non-dynamical background fields. However only
$\Tden^\LSym$, being $f$-linear, deserves the appellation
stress-energy-momentum tensor.

If $\LambdaM$ does not depend on a metric $g$ the requirement that $K$
is a Killing vector may be dropped, i.e. if
$\LambdaM(Z_1,\ldots,Z_\NumZ,\zeta_1,d\zeta_1,\ldots,\zeta_\Numzeta,d\zeta_\Numzeta)$
is independent of $g$ and $\Lie_V Z_\Zind=0$ for
$\Zind=1,\ldots,\NumZ$ and $V\in\Gamma TM$, then
\begin{align}
d \tau^\LSym_V = d \Noet_V = 0
\label{Phys_d_tauSym_v}
\end{align}
However, in the absence of a metric, no physical concept of energy or momentum
exists.

If $\Lie_K g=0$ but not {\em all} background fields $\Set{\boldsymbol
  Z}$ are Lie-symmetric then
a simple generalisation of theorem \ref{thm_d_tau_Lsym} yields
\begin{equation}
\begin{aligned}
d\Noet_K
=
d\tau^\LSym_K
&=
\sum_{\Set{\Zind|\Lie_K Z_\Zind\ne 0}}
\bigg(
d\Amap_K \Big(\GatZ{Z_\Zind },Z_\Zind \Big)
+
\Bmap_K \Big(\GatZ{Z_\Zind },Z_\Zind \Big)
\bigg)
\\&=
\sum_{\Set{\Zind|\Lie_K Z_\Zind\ne 0}}
\GatZ{Z_\Zind} \TprodL \Lie_K Z_\Zind
\end{aligned}
\label{Phys_dtau_K_ne0}
\end{equation}
which in general is not equal to zero.

\subsection{Historical perspectives in Minkowski spacetime}

In the absence of background fields $\Set{\boldsymbol Z}$ in Minkowski
spacetime,
$T^{\LSym}{}_{ab}=T^{\LSym}{}_{ba}$ and from
(\ref{Phys_d_tauSym_K_coords}), $\partial_a(T^{\LSym}{}^{a}{}_{b})=0$.
It follows that the ``moment of $T^{\LSym}{}_{\nu 0}$'' is conserved.
Thus in inertial coordinates $(x^0,x^1,x^2,x^3)$ with
\begin{align*}
M^\LSym{}_{\mu\nu} = -(x_\mu T^{\LSym}{}_{\nu 0} - x_\nu T^{\LSym}{}_{\nu 0})
\end{align*}
for $\mu,\nu=1,2,3$, vanishing at spatial infinity,
one has from (\ref{Phys_d_tauSym_K_coords}) with
$K=x_\mu \partial_\nu - x_\nu \partial_\mu$
\begin{align}
\dfrac{}{x^0} \int_{\Real^3} M^\LSym{}_{\mu\nu} dx^1\wedge dx^2\wedge dx^3
=
0
\label{Phys_int_M_munu}
\end{align}
This is identified with the conservation of (orbital) angular momentum of a
field system in $\Real^3$.  This result is a direct consequence of
(\ref{Phys_d_tauSym_K_coords}) which does not require the imposition
of any algebraical symmetry.

More generally with the Killing vector fields, for $a,b=0,1,2,3$
\begin{align}
\Rot_{ab} = x_a \partial_b - x_b \partial_a
\label{Phys_def_R_ab}
\end{align}
and
\begin{align}
T^{\LSym}(R_{ab}) = \MoM^\LSym{}_{ab}{}^c \partial_c
\label{Phys_T_bel_R_ab}
\end{align}
where
\begin{align}
\MoM^\LSym{}_{ab}{}^{c}=T^{\LSym}{}^c{}_{a}x_b - T^{\LSym}{}^c{}_{b}x_a
\label{Tensors_4_MOM}
\end{align}
is the ``moment of $T^{\LSym}{}^c{}_{b}$'', one has
\begin{align}
\partial_c \MoM^\LSym{}_{ab}{}^{c}=0
\label{Tensors_4_MOM_conserved}
\end{align}
In fact (\ref{Phys_d_tauSym_K_coords}) shows that
(\ref{Phys_int_M_munu}) and (\ref{Tensors_4_MOM_conserved})
are valid in the presence of Lie-symmetric background fields
satisfying
$\Lie_{\Rot_{ab}} Z_\Zind = 0$ for all $\Zind$ and for all $\Rot_{ab}$.

In Minkowski spacetime there exists a basis of 10 Killing vector
fields generating the algebra of the Poincar\'e group. The nature of
the algebraic symmetry of $T^\LSym{}_{ab}$ depends on the Lie symmetry
of the background fields. This is a {\em consequence} of the following lemma:
\begin{lemma}
  In Minkowski spacetime, for fixed $a$ and $b$, let $\partial_a$ and
  $\partial_b$ be two commuting translational Killing vectors such
  that $\Lie_{\partial_a} Z_\Zind=0$ and $\Lie_{\partial_b} Z_\Zind=0$
  for all $\Zind$. Then $T^{\LSym}$ is algebraically partially
  symmetric if and only if $\tau_{\Rot_{ab}}$ is closed, i.e.
\begin{align}
T^{\LSym}{}_{ab}=T^{\LSym}{}_{b\,a}
\qquad\Leftrightarrow\qquad
d\tau^{\LSym}_{R_{ab}} = 0
\label{Phys_partial_sym}
\end{align}
\end{lemma}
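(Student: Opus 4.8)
The plan is to use the fact that $\tau^{\LSym}$ is $f$-linear in its vector argument --- precisely what distinguishes it from $\Noet$ and from the Lie derivative --- together with Theorem \ref{thm_d_tau_Lsym} applied to the two translations $\partial_a$ and $\partial_b$ individually.

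First I would expand the rotation/boost current. Since $\Rot_{ab}=x_a\partial_b-x_b\partial_a$ and $\tau^{\LSym}_v$ is $f$-linear in $v$ by (\ref{Phys_tau_Lsym_can})--(\ref{Phys_def_Sv}), one has $\tau^{\LSym}_{\Rot_{ab}}=x_a\,\tau^{\LSym}_{\partial_b}-x_b\,\tau^{\LSym}_{\partial_a}$. Applying $d$ and the Leibniz rule gives
\[
d\tau^{\LSym}_{\Rot_{ab}}
=
dx_a\wedge\tau^{\LSym}_{\partial_b}-dx_b\wedge\tau^{\LSym}_{\partial_a}
+x_a\,d\tau^{\LSym}_{\partial_b}-x_b\,d\tau^{\LSym}_{\partial_a}.
\]
Now $\partial_a$ and $\partial_b$ are translational Killing vectors of Minkowski spacetime (the metric components $\eta_{cd}$ are constant, so $\Lie_{\partial_a}g=\Lie_{\partial_b}g=0$), and by hypothesis $\Lie_{\partial_a}Z_\Zind=\Lie_{\partial_b}Z_\Zind=0$ for all $\Zind$; hence Theorem \ref{thm_d_tau_Lsym} applies with $K=\partial_a$ and with $K=\partial_b$, giving $d\tau^{\LSym}_{\partial_a}=d\tau^{\LSym}_{\partial_b}=0$. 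Using also $dx_a=\dual{\partial_a}$ and $dx_b=\dual{\partial_b}$ in inertial coordinates (with $x_a=\eta_{ac}x^c$), the identity collapses to $d\tau^{\LSym}_{\Rot_{ab}}=\dual{\partial_a}\wedge\tau^{\LSym}_{\partial_b}-\dual{\partial_b}\wedge\tau^{\LSym}_{\partial_a}$.

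The final step is to read this off against the algebraic-symmetry criterion. By (\ref{Tensors_symm_alt}) (equivalently (\ref{Tensors_sym_taubel})) applied with $u=\partial_a$, $v=\partial_b$, and using $\tau^{\LSym}_{\partial_c}=i_{\Tden^{\LSym}(\partial_c)}\Omega$ from (\ref{Tensorsdef_T_LSym}), the equality $T^{\LSym}{}_{ab}=T^{\LSym}{}_{ba}$ --- i.e. $g(\Tden^{\LSym}(\partial_a),\partial_b)=g(\Tden^{\LSym}(\partial_b),\partial_a)$ --- holds if and only if $\dual{\partial_a}\wedge\tau^{\LSym}_{\partial_b}=\dual{\partial_b}\wedge\tau^{\LSym}_{\partial_a}$, which by the displayed formula is exactly $d\tau^{\LSym}_{\Rot_{ab}}=0$. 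This gives the claimed equivalence.

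There is no real obstacle here beyond bookkeeping: one must verify that the hypotheses of Theorem \ref{thm_d_tau_Lsym} genuinely hold for $\partial_a$ and $\partial_b$ separately, and one should note that the lemma is not a mere corollary of that theorem --- because the Lie derivative is not $f$-linear, $\Lie_{\Rot_{ab}}Z_\Zind$ need not vanish even when $\Lie_{\partial_a}Z_\Zind$ and $\Lie_{\partial_b}Z_\Zind$ do, so $d\tau^{\LSym}_{\Rot_{ab}}=0$ is genuinely an extra condition. Care with signs (the minus sign in $\Rot_{ab}$, the ordering in the wedge products, and the index-lowering convention) is the only place errors could creep in.
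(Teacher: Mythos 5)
Your proof is correct and follows essentially the same route as the paper's: expand $\tau^{\LSym}_{\Rot_{ab}}=x_a\tau^{\LSym}_{\partial_b}-x_b\tau^{\LSym}_{\partial_a}$ by $f$-linearity, kill the $x_a\,d\tau^{\LSym}_{\partial_b}$ terms via Theorem \ref{thm_d_tau_Lsym} applied to the two translations, and identify the surviving $dx_a\wedge\tau^{\LSym}_{\partial_b}-dx_b\wedge\tau^{\LSym}_{\partial_a}$ with $\pm\bigl(T^{\LSym}{}_{ab}-T^{\LSym}{}_{ba}\bigr)\Omega$ via (\ref{Tensors_symm_alt}). Your closing remark that the lemma is not a corollary of the theorem (since $\Lie_{\Rot_{ab}}Z_\Zind$ need not vanish) is a correct and worthwhile observation that the paper makes only implicitly.
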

\begin{proof}
Since $\Lie_{\partial_a} Z_\Zind=0$ and $\Lie_{\partial_b} Z_\Zind=0$
then $d\tau_{\partial_a}=0$ and $d\tau_{\partial_b}=0$. From $f$-linearity
$\tau^{\LSym}_{R_{ab}}=x_a\tau^{\LSym}_{\partial_b}-x_b\tau^{\LSym}_{\partial_a}$
hence from (\ref{Tensors_symm_alt})
\begin{align*}
d\tau^{\LSym}_{R_{ab}}
=
dx_a\wedge\tau^{\LSym}_{\partial_b}-dx_b\wedge\tau^{\LSym}_{\partial_a}
=
(T^{\LSym}_{ab} - T^{\LSym}_{ba}) dx^0\wedge dx^1 \wedge dx^2 \wedge dx^3
\end{align*}
Hence (\ref{Phys_partial_sym}).
\end{proof}
Thus total (orbital) angular momentum conservation in Minkowski
spacetime, generated by $\Set{\partial_1,\partial_2,\partial_3}$, does not
demand that $T^{\LSym}{}_{ab}$ is {\em fully} algebraically symmetric
for all $a,b$.

It is of interest to note that conservation of an $SO(3,1)$ Killing
current can be obtained directly from the Noether quantities in the
absence of background fields. It follows from (\ref{Phys_d_tauSym_K})
and (\ref{Phys_d_tauSym_K_coords}) that for all Killing vector fields
$\Rot_{ab}$ one has
\begin{align}
d \Noet_{\Rot_{ab}} = 0
\label{Phys_d_N_R}
\end{align}
and
\begin{align}
\partial_c(T^{\Noet}(R_{ab})^c)=0
\label{Phys_part_T_N_R}
\end{align}
where from (\ref{LagForm_tau_can_lin})
\begin{align*}
\Noet_{\Rot_{ab}}
&=
x_a \Noet_{\partial_b} - x_b\Noet_{\partial_b}
+
\sum_{\zeind=1}^\Numzeta
\Big(
dx_a\wedge i_{\partial_b} \zeta_\zeind \wedge
\pfrac{\LambdaM}{(d\zeta_\zeind)}
-
dx_b\wedge i_{\partial_a} \zeta_\zeind \wedge
\pfrac{\LambdaM}{(d\zeta_\zeind)}
\Big)
\end{align*}
and from  (\ref{Tensors_TNv_ab})
\begin{align*}
T^{\Noet}(R_{ab})
&=
T^{\Noet}{}^c{}_{a}x_b - T^{\Noet}{}^c{}_{b}x_a
\\&
\hspace{-2em}
+\star^{-1}\sum_{\zeind=1}^\Numzeta\Big(
dx^c \wedge dx_a \wedge i_{\partial_b} \zeta_\zeind \wedge
\pfrac{\LambdaM}{(d\zeta_\zeind)}
-
dx^c \wedge dx_b \wedge i_{\partial_a} \zeta_\zeind \wedge
\pfrac{\LambdaM}{(d\zeta_\zeind)}
\Big)
\end{align*}
Thus the Noether current $\Noet_{\Rot_{ab}}$ does not coincide with the
moment of Noether linear momentum:
$x_a \Noet_{\partial_b} - x_b\Noet_{\partial_b}$.
The additional terms result from the fact that $\Noet_v$ is not
$f$-linear in $v$.
However although both $T^{\LSym}$ and $T^{\Noet}$ give rise to conserved
quantities, the lack of $f$-linearity in $T^{\Noet}$ precludes its use
for the definition of stress over curved 2-surfaces.

Equations (\ref{Phys_d_N_R}) and (\ref{Phys_part_T_N_R}) remain valid
in the presence of Lie-symmetric background fields with
$\Lie_{\Rot_{ab}} Z_\Zind = 0$ for $\Zind=1,\ldots,\NumZ$ and fixed
$a,b$,

\subsection{The Einstein-Hilbert stress-energy-momentum tensor and its associated
currents}
\label{sch_EH}

In general relativity the variational derivative
$\displaystyle\GatZ{g}$ is used to define the algebraically symmetric
Einstein-Hilbert stress-energy-momentum tensor
$T^\Met\in\Gamma\tenbun^{[\Tform,\Tform]} M$ for the theory:
\begin{align}
T^\Met(u,v)
= 2\star^{-1} \Big(\GatZ{g}\TprodL (\dual{u}\otimes \dual{v})\Big)
\qquadtext{for}
u,v\in\Gamma T M
\label{AB_def_T_Ein_alt}
\end{align}
which is manifestly $f$-linear in $u$ and $v$.

Using an arbitrary vector field $v$ on $M$ with a metric $g$, it is
convenient to use $T^\Met$ to define the associated $(n-1)$-form
current $\tau^\Met_v\in\Gamma\Lambda^{n-1} M$ by
\begin{align}
\tau^\Met_v = \star (T^\Met({v},-))
\label{AB_def_tau_Ein}
\end{align}
which is manifestly $f$-linear in $v$. Thus
\begin{align}
T^\Met
=
(\star^{-1}\tau^\Met_{X_a})\otimes e^a
\label{AB_T_tau_Ein}
\end{align}
where $\Set{e^a}$ and $\Set{X_a}$ constitute mutually dual frames.
From (\ref{AB_def_T_Ein_alt}) it is also clear that since $g$ is a
symmetric tensor, $T^\Met$ satisfy the algebraic symmetry
\begin{align}
T^\Met(u,v)=T^\Met(v,u)
\qquadtext{for all}
u,v\in\Gamma T M
\label{Phys_T_alg_sym}
\end{align}
(c.f. (\ref{Tensors_symm})) and hence $\Set{\tau^\Met_{X_a}}$ satisfy
the {\em algebraic symmetry condition}
\begin{align}
e_a\wedge\tau^\Met_{X_b} - e_b\wedge\tau^\Met_{X_a}
=0
\label{Phys_tau_a_sym}
\end{align}
where $e_a=g(X_a,X_b) e^b$.

In lemma \ref{lm_tau_met} appendix \ref{sch_lammas} it is  shown that
\begin{align}
\tau^\Met_v = \Amap_v\Big(\GatZ{g},g\Big)
\label{AB_tau_Ein}
\qquadtext{for all}
v\in\Gamma TM
\end{align}
Thus from (\ref{AB_tau_Mat}) it follows that in terms of the map $\Amap_v$,
\begin{align}
\tau^\Met_v
=
i_v\LambdaM
-
\sum_{\Zind=1}^\NumZ
\Amap_v \Big(\GatZ{Z_\Zind },Z_\Zind \Big)
-
\sum_{\zeind=1}^\Numzeta\Big(
i_v \zeta_\zeind\wedge \pfrac{\LambdaM}{\zeta_\zeind}
+
i_v d\zeta_\zeind\wedge \pfrac{\LambdaM}{(d\zeta_\zeind)} \Big)
\label{AB_tau_Mat_res}
\end{align}

Let the  exterior covariant derivative of $\tau^\Met_v$ be
defined by
\begin{align}
(D\tau^\Met)_v = (i_v e^a) d\tau^\Met_{X_a}  - i_v d e^a \wedge \tau^\Met_{X_a}
\label{AB_def_Dtau}
\end{align}
This corresponds to the standard Levi-Civita covariant derivative in
the case of a torsion free metric-compatible connection $\nabla$ and $
(D\tau^\Met)_v =0$ implies $ \nabla\cdot T^{\Met}=0 $.

 In lemma
\ref{lm_Dtau_met} appendix \ref{sch_lammas} it is
shown that in terms of the map $\Bmap_v$,
\begin{align}
(D\tau^\Met)_v = -\Bmap_v\Big(\GatZ{g},g\Big)
\label{AB_Dtau_Ein}
\end{align}
Hence for a system on $\LambdaM$-shell one has, from
(\ref{AB_Bg_D_tau_Met_Shell}),
\begin{align}
(D\tau^\Met)_v
&=
\sum_{\Zind=1}^\NumZ\Bmap_v \Big(\GatZ{Z_\Zind },Z_\Zind \Big)
\label{AB_D_tau_Met}
\end{align}

It is worth stressing that, in general, even if $N=0$ this relation is
not a conservation law since in general $ (D\tau^\Met)_v \neq
d\tau^\Met_v $. However, if $K\in\Gamma TM$ is a Killing vector field,
$\Lie_K g=0$, then from (\ref{AB_tau_Ein}), (\ref{Tensors_lm_AB_Lk_g})
and (\ref{AB_Dtau_Ein})
\begin{align*}
d\tau^\Met_K
&=
d\bigg(\Amap_K\Big(\GatZ{g },g \Big)\bigg)
=
-
\Bmap_K\Big(\GatZ{g },g \Big)
=
(D\tau^\Met)_K
\end{align*}
i.e.
\begin{align}
(D\tau^\Met)_K = d \tau^\Met_K
\label{Dtau_K}
\end{align}

From the definitions above, the relationship between the Einstein-Hilbert,
Noether and Belinfante-Rosenfeld currents is given by
\begin{align}
\tau^\Met_v
+
\sum_{\Zind=1}^\NumZ
\Amap_v \Big(\GatZ{Z_\Zind },Z_\Zind \Big)
=
\tau^\LSym_v
&=
\Noet_v
+
d S_v
\label{AB_tauCan_tau_Com}
\end{align}

As stated in (\ref{Phys_d_tauSym_K}) $\tau^\LSym_K$ and $\Noet_K$ give
rise to conserved currents associated with each Killing vector field
$K$ in the presence of the Lie-symmetric background fields
$\Set{\boldsymbol Z}$. By contrast, from (\ref{Dtau_K}) and
(\ref{AB_Dtau_Ein}), in the presence of arbitrary background fields,
whether Lie-symmetric or not,
\begin{align}
d\tau^\Met_K
&=
\sum_{\Zind=1}^\NumZ\Bmap_K \Big(\GatZ{Z_\Zind },Z_\Zind \Big)
\label{AB_d_tau_Met}
\end{align}
which in general does {\em not} vanish. However in general, as stated above,
unlike $\tau^\Met_K$ neither $\tau^\LSym_K$ nor $\Noet_K$ possesses
the algebraic symmetry condition
(\ref{Tensors_symm}),(\ref{Phys_tau_a_sym}) in the presence of
any background fields.

If there are no background tensor fields $\Set{\boldsymbol Z}$, i.e. $\NumZ=0$
then from (\ref{AB_tauCan_tau_Com})
\begin{align}
\tau^\LSym_v=\tau^\Met_v = \Noet_v + dS_v
\qquadtext{for all}
v\in\Gamma TM
\label{Phys_tau_Lsym_can_mat}
\end{align}
and hence if $K$ is
Killing from (\ref{Phys_d_tauSym_K})
\begin{align}
d \tau^\Met_{K}
=
d \tau^\LSym_K = d \Noet_K
=
0
\label{Phys_d_tauMet_K}
\end{align}
In this case, since $\tau^\Met_v$ satisfies the algebraically
symmetry condition it follows from (\ref{Phys_tau_Lsym_can_mat}) that
$\tau^\LSym_v$ also does. However in general $\Noet_v$ does not.

In the context of model (\ref{Intro_Lag_Abraham_1}) the
Einstein-Hilbert stress-energy-momentum tensor $T^\Met$ is given by
(\ref{AB_def_T_Ein_alt}) by evaluating the derivative
$\displaystyle\GatZ{g}$  at
\begin{align}
(g,Z^{\text{de}},Z^{\text{db}},
Z^{\text{he}},Z^{\text{hb}},V,dA)
=
(g_0,Y^{\text{de}},Y^{\text{db}},
Y^{\text{he}},Y^{\text{hb}},W,dA)
\label{Phys_Y_vals}
\end{align}
where $g_0$ is an arbitrary background metric with associated Hodge
map $\star_0$ and the $4$-velocity of the
medium $W$ satisfies $g_0(W,W)=-1$. The
$Y^{\text{I}}\in\Set{Y^{\text{de}},Y^{\text{db}},Y^{\text{he}},Y^{\text{hb}}}$
are spatial with respect to $W$, i.e.
$Y^{\text{I}}\big(g_0(W,-)\big)=0$ and $i_{W} Y^{\text{I}}(\alpha)=0$
and satisfy the adjoint properties with respect to $g_0$:
\begin{align}
\begin{gathered}
\alpha\wedge\star_0 Y^{\text{de}}(\beta)=
\beta\wedge\star_0 Y^{\text{de}}(\alpha) \,,\qquad
\alpha\wedge\star_0 Y^{\text{hb}}(\beta)=
\beta\wedge\star_0 Y^{\text{hb}}(\alpha) \\
\quadand
\alpha\wedge\star_0 Y^{\text{db}}(\beta)=-
\beta\wedge\star_0 Y^{\text{he}}(\alpha)
\end{gathered}
\label{Intro_Y0_constraints}
\end{align}
for all $\alpha,\beta\in\Gamma\Lambda^1M$.  The tensor $T^\Met$ is the
Abraham stress-energy-momentum tensor for electromagnetic fields in
(moving) media \cite{dereli2007covariant,dereli2007new}.

\section{Caveats and Conclusions}

The results in this article have direct relevance to the construction
of variational formulations of field systems in spacetime where
complex microscopic interactions are represented by phenomenological
macroscopic constitutive relations between the dynamic fields $\Set{\boldsymbol
\zeta}$.  It is tacitly assumed that the variational field equations
based on the matter action $\int_M\LambdaM$ admit such $\Set{\boldsymbol
\zeta}$ as non-trivial physically acceptable solutions. This condition
may impose constraints on the constitutive modelling. For example
if
\begin{align*}
\LambdaM(g,J,A)=\tfrac12 dA\wedge\star dA + A\wedge J
\end{align*}
where $A\in\Gamma\Lambda^1 \MST$ is dynamical and $J\in\Gamma\Lambda^3 \MST$
is a background field then the $\LambdaM$ matter shell condition
$\displaystyle \frac{\delta\LambdaM}{\delta A}=0$
implies $d J=0$.
The physical
interpretation of such models for continuous media can then be
facilitated by finding {\em conserved} currents associated with continuous
symmetries.

For the class of theories described by a diffeomorphism invariant
matter action $\int_M \LambdaM$, where $\LambdaM$ in general depends
on a background gravitation field, a collection of background tensors
$\{\boldsymbol Z\}$ and matter fields $\{ \boldsymbol\zeta\}$ the
differential forms $\tau^{\Met}_v$, $\tau^\LSym$ and $\Noet_v$ are
related by (\ref{AB_tauCan_tau_Com})
when all {\em matter} fields $\Set{\boldsymbol\zeta}$ are ``on
$\LambdaM$-shell'' and from (\ref{AB_d_tau_Met}),
$d\tau^\Met_K\ne0$ in general.


However, in Einstein's theory
\begin{align}
\LambdaT
=
{\cal R}\star 1+
\LambdaM(g,Z_1,\ldots,Z_\NumZ,\zeta_1,d\zeta_1,\ldots,\zeta_\Numzeta,d\zeta_\Numzeta)
\label{Phys_Ein_Lag}
\end{align}
where $\cal R$ is the curvature scalar derived from $g$.
If the metric $g$ is dynamical as well as the matter fields
$\Set{\boldsymbol\zeta}$ one
has the additional on-shell Einstein equation
\begin{align}
{\cal G}_v = \tau^\Met_v
\label{Phys_Ein_Ein}
\end{align}
where the Einstein $(n-1)$-form ${\cal G}_v\in\Gamma\Lambda^{n-1}M$
for $v\in\Gamma TM$ is \cite{benn1987introduction}
\begin{align}
{\cal G}_v= \tfrac12 R(X_c,X_d,X_b,e^a)  e^c\wedge e^d\wedge i_v \star
(e_a\wedge e^b)
\label{Phys_Ein_def_Gv}
\end{align}
Since the Bianchi identity for ${\cal G}_v$ yields $(D{\cal G})_v=0$
for all $v\in\Gamma TM$, (\ref{Phys_Ein_Ein}) implies
\begin{align}
(D\tau^\Met)_v=0
\label{Phys_Ein_D_tau_met}
\end{align}
Then from (\ref{Dtau_K}), if $K$ is Killing one has the conservation law
\begin{align}
d\tau^\Met_K=0
\label{Phys_Ein_d_tau_met}
\end{align}
and hence from (\ref{AB_D_tau_Met}) and (\ref{Phys_Ein_D_tau_met})
\begin{align*}
\sum_{\Zind=1}^\NumZ\Bmap_v \Big(\GatZ{Z_\Zind },Z_\Zind \Big)
=0
\end{align*}
This relation may impose constraints on the
dependence of $\LambdaM$ on the non-gravitational background structure
$\Set{\boldsymbol Z}$. For example in the analysis of a gravitational
wave propagating in a material medium.


In physical applications it is often convenient to break a system into
weakly interacting subsystems in order to analyse the dynamics of
subsystems perturbatively.  However care is required in extending the
consequences of diffeomorphism invariance for $\LambdaM$ to top-forms
$\Lambda^s$ describing diffeomorphic invariant sub-systems.  For each
$\Lambda^s$ one may define an associated Noether form $\Noet^{s}_v$:
\begin{align}
\Noet^{s}_v =
i_v \Lambda^s - \sum_{\zeind=1}^\Numzeta
\Lie_v \zeta_\zeind \wedge \pfrac{\Lambda^s}{(d\zeta_\zeind)}
\end{align}
Thus
\begin{align}
\Noet_v = \sum_{s} \Noet^{s}_v
\end{align}
Similarly
\begin{align}
\tau^\Met_v = \sum_{s} \tau^{s,\Met}_v
\end{align}
where
\begin{align}
T^{s,\Met}(u,v) = 2\star^{-1}
\Big(\frac{\Delta \Lambda^s}{\Delta g}\TprodL
  (\dual{u}\otimes \dual{v})\Big) \qquadtext{for} u,v\in\Gamma
  T M
\end{align}
and
\begin{align}
\tau^{s,\Met}_v = \star (T^{s,\Met}({v},-))
\label{AB_def_tau_Ein_s}
\end{align}
When all matter fields $\Set{\boldsymbol\zeta}$ are ``on
$\LambdaM$-shell'' they will not necessarily be ``on
$\Lambda^s$-shell''and one cannot associate conserved currents with
subsystems, in general. This is simply a reflection of the interaction
between subsystems in situations where $g$ and $\Set{\boldsymbol Z}$
are background fields. However, given some decomposition
$\LambdaM=\sum_s \Lambda^s$, suppose that for some $s=s_0$ there
exists a subset of matter fields $ \{ \zeta_B \} $ such that $\displaystyle
\frac{\delta \Lambda^{s_0}} { \delta \zeta_B} =0 $ for $B=1\ldots
Q_0$.  In this situation one may regard the quantities $ \zeta_B,
\,d\zeta_B $ for $B=Q_0+1 \ldots Q$ as background fields to supplement
those in $\boldsymbol Z$. The analysis in this paper is then
applicable by disregarding all $\Lambda^s$ with $s\neq s_0$ ,
replacing $\LambdaM$ by $\Lambda^{s_0}$ and disregarding the equations
$\displaystyle \frac{\delta \Lambda^{s_0}} { \delta \zeta_B} \ne 0 $ for $B=Q_0+1
\ldots Q$.


Maps such as $T^\Noet$, $T^{\LSym}$ and $T^\Met$ have been traditionally used
to construct densities of field energy and linear momentum  in
Minkowski spacetime.  Applied to closed systems in the presence of the
Lie-symmetric background fields $\Set{g,\boldsymbol Z}$ this procedure
is strictly only meaningful for $T^{\Noet}$ and $T^{\LSym}$ since they
alone give rise to conserved currents in Minkowski spacetime.
Furthermore only $T^\LSym$ can meaningfully be used to describe stress
and angular momentum in Minkowski spacetime. In a general spacetime
with a non-dynamic metric neither $T^\LSym$, $T^\Met$ nor $T^\Noet$
give rise to conserved Killing currents unless $\Lie_Kg=0$ and $\Lie_K
Z_\Zind=0$ for all $\Zind$.

In any background metric, the model described by
(\ref{Intro_Lag_SymMink_1}) yields a particular $T^{\LSym}$ known as
the non-symmetric Minkowski stress-energy-momentum tensor and $
T^{\Met}$ as its symmetrised version.

The model (\ref{Intro_Lag_Abraham_1}) yields a symmetric $T^{\Met}$
known as the Abraham stress-energy-momentum tensor (which for a
general Lie-symmetric $\boldsymbol Z$ does not generate conserved
Killing currents). This model does however yield $T^{\LSym}$ that
coincides with the non-symmetric Minkowski stress-energy-momentum
tensor.  This sheds light on the relationship between the Abraham and
non-symmetric Minkowski stress-energy-momentum tensors.  The precise
relation between these two tensors, in this model follows from
(\ref{AB_tauCan_tau_Com}) as
\begin{equation}
\begin{aligned}
\tau^\LSym_v
&=
\tau^\Met_v
+
\Amap_v \Big(\GatZ{V},V \Big)
+
\Amap_v \Big(\GatZ{Z^{\text{de}}},Z^{\text{de}}\Big)
\\&\qquad\qquad
+\Amap_v \Big(\GatZ{Z^{\text{db}}},Z^{\text{db}}\Big)
+
\Amap_v \Big(\GatZ{Z^{\text{he}}},Z^{\text{he}}\Big)
+
\Amap_v \Big(\GatZ{Z^{\text{hb}}},Z^{\text{hb}}\Big)
\end{aligned}
\label{Conc_Ahb_Mink}
\end{equation}
where all tensors are evaluated at the point
\begin{align*}
(g,Z^{\text{de}},Z^{\text{db}},
Z^{\text{he}},Z^{\text{hb}},V,dA)
=
(g_0,Y^{\text{de}},Y^{\text{db}},
Y^{\text{he}},Y^{\text{hb}},W,dA)
\end{align*}

Conclusions
drawn from different models are directly related to the epistemology used
to describe the linear and angular momentum of light in unbounded
media described by background constitutive tensor fields in background
gravitational fields.

Thus the physical consequences of any model based on a diffeomorphism
invariant action used to describe the dynamics of matter fields in the
presence of a specified self-consistent non-dynamic background depend
not only on the action for the model but also on a choice of objects,
such as $T^{\Met}$, $T^{\LSym}$ or $T^{\Noet}$, that are adopted to
define conserved quantities, including field energy, momentum and
angular momentum. Furthermore these may only acquire physical cogency
in the presence of sufficient Lie symmetry.

\section*{Acknowledgments}
J.G. and R.W.T. are grateful to STFC and the Cockcroft Inistitute for
support (STFC ST/G008248/1).  Y.N.O. acknowledges partial support by
the German-Israeli Foundation for Scientific Research and Development
(GIF), Research Grant No.\ 1078-107.14/2009.  This work is the outcome
of discussions initiated by the author's attendence at the 475th
Wilhelm \& Else Heraeus Seminar on ``Problems and Developments of
Classical Electrodynamics'' Bad Honnef March 2011, organised by Domenico
Giulini and Volker Perlick.

\bibliographystyle{plain}
\bibliography{GTOB}


\appendix

\section{Mathematical Details of results used in the text}
\label{sch_lammas}

\begin{lemma}
\label{lm_zero}
On an $n$ dimensional manifold $M$, given the (co-vector valued) forms
$\alpha_w\in\Gamma\Lambda^{n-1}M$ and $\beta_w\in\Gamma\Lambda^{n}M$
which are `f'-linear in $w$ such that
\begin{align}
d(\alpha_w)=\beta_w
\label{AB_lm_d_f}
\end{align}
for all $w\in\Gamma TM$ with compact support then $\alpha_w=0$ and $\beta_w=0$.
\end{lemma}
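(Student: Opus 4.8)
The plan is a purely local argument. Since $\alpha$ and $\beta$ are $f$-linear they are tensorial, i.e.\ given by bundle morphisms $TM\to\Lambda^{n-1}M$ and $TM\to\Lambda^n M$; in particular $\alpha_w|_x$ and $\beta_w|_x$ depend only on $w|_x$. Hence it is enough to fix a coordinate chart $(x^1,\ldots,x^n)$ on an open $U\subset M$ and show that $\alpha_{\partial_b}=0$ on $U$ for each $b=1,\ldots,n$: once this holds, $\alpha_w=w^b\alpha_{\partial_b}=0$ on every chart and therefore everywhere, and then $\beta_w=d\alpha_w=0$ for compactly supported $w$, hence for all $w$ by tensoriality.

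The key step is to feed the hypothesis test fields adapted to a point. Fix $p\in U$ and an index $c$. Choose a bump function $\chi$ with compact support in $U$ and $\chi\equiv 1$ on a neighbourhood of $p$, and put $f=\chi\,(x^c-x^c(p))$, extended by $0$ to $M$. Then $f$ has compact support, $f(p)=0$ and $df|_p=dx^c|_p$. Applying $d\alpha_w=\beta_w$ to the compactly supported vector field $w=f\,\partial_b$ and using $f$-linearity ($\alpha_{f\partial_b}=f\,\alpha_{\partial_b}$, $\beta_{f\partial_b}=f\,\beta_{\partial_b}$) gives
\[
df\wedge\alpha_{\partial_b}+f\,d\alpha_{\partial_b}=d\big(f\,\alpha_{\partial_b}\big)=\beta_{f\partial_b}=f\,\beta_{\partial_b}.
\]
Evaluating at $p$, the two terms carrying the factor $f$ vanish and one is left with $dx^c|_p\wedge\alpha_{\partial_b}|_p=0$.

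Since $c$ was arbitrary this holds for $c=1,\ldots,n$, and it remains to observe that an $(n-1)$-covector annihilated by wedging with every $dx^c$ must vanish. Writing $\alpha_{\partial_b}|_p=\sum_c A^c\,i_{\partial_c}\Omega|_p$ with $\Omega=dx^1\wedge\cdots\wedge dx^n$ and using the identity $dx^c\wedge i_{\partial_{c'}}\Omega=\delta^c_{c'}\,\Omega$, the relations $dx^c|_p\wedge\alpha_{\partial_b}|_p=0$ become $A^c\,\Omega|_p=0$, so every $A^c$ vanishes and $\alpha_{\partial_b}|_p=0$. Since $p\in U$ and $b$ were arbitrary, $\alpha_{\partial_b}\equiv0$, which completes the reduction.

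I do not anticipate a real obstacle: the argument is elementary modulo the bookkeeping. The one point deserving attention is that the locally manufactured field $f\,\partial_b$ must be a genuine global (compactly supported) section of $TM$ for the hypothesis to apply — which is exactly the purpose of the cutoff $\chi$ and of the compact-support qualification in the statement. The two substantive observations are that $f(p)=0$ kills the undifferentiated contributions at $p$, and that separating components of an $(n-1)$-form against the $dx^c$ forces it to be zero.
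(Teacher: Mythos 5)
Your proof is correct, but it takes a genuinely different route from the paper's. The paper integrates: it applies Stokes' theorem to $d(\alpha_w)=\beta_w$ over an arbitrary region $U$ with boundary, first takes $w$ supported in a small interior set to force $\int_U w^a\beta_a=0$ and hence $\beta_w=0$, and then takes $w$ supported near a boundary point to force $\iota^\star\alpha_a=0$ for every hypersurface $\partial U$, concluding $\alpha_w=0$. You instead argue pointwise and in the opposite order: the bump field $w=f\,\partial_b$ with $f(p)=0$, $df|_p=dx^c|_p$ turns the hypothesis into $dx^c|_p\wedge\alpha_{\partial_b}|_p=0$ for every $c$, which kills $\alpha_{\partial_b}|_p$ via the identity $dx^c\wedge i_{\partial_{c'}}\Omega=\delta^c_{c'}\Omega$, and $\beta_w=d\alpha_w=0$ then follows for free. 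Both arguments hinge on the same essential ingredient --- $f$-linearity permits localization of the test field --- but yours is purely differential, avoids integration and the quantification over all domains $U$, and makes fully explicit the step the paper compresses into ``since we can choose any $\partial U$ we show that all the components of $\alpha_a$ vanish'' (knowing the pullback of an $(n-1)$-form to every hypersurface through a point does determine it, but your wedge computation settles this without appeal to that fact). The one hypothesis each proof uses identically is the restriction to compactly supported $w$, which in your version is exactly what licenses the cutoff $\chi$.
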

\begin{proof}
Let $w=w^a X_a$, $\alpha_a=\alpha_{X_a}$ and $\beta_a=\beta_{X_a}$.
Given any subset $U\subset M$ with boundary $\partial U$
\begin{align*}
\int_{\partial U} \iota^\star (w^a \alpha_a) = \int_U w^a \beta_a
\end{align*}
where $\iota:\partial U\to M$ is the embedding.
Assume first that $w$ has support away from the boundary $\partial U$
and in an arbitrary small region then
\begin{align*}
\int_U w^a \beta_a = 0
\end{align*}
and hence $\beta_a=0$. Thus
\begin{align*}
\int_{\partial U} w^a \iota^\star(\alpha_a) = 0
\end{align*}
For all subsets $U\subset M$. Considering $w$ to have a support on a
small set about a point in the boundary implies
$\iota^\star\alpha_a=0$. Since we can choose any $\partial U$ we show
that all the components of $\alpha_a=0$ and hence $\alpha_a=0$.
\end{proof}

\begin{lemma}
\label{lm_AB}
\textup{(\ref{AB_diVLam}) } implies \textup{(\ref{AB_ivLambda_A})} and
\textup{(\ref{AB_ivLambda_B})}.
\end{lemma}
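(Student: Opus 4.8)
The plan is to rewrite (\ref{AB_diVLam}) so that it exactly matches the hypothesis of Lemma \ref{lm_zero}, and then invoke that lemma. Introduce the $(n-1)$-form
$$
\alpha_v = i_v\LambdaM - \sum_{\ZZind=0}^{\NumZ+2\Numzeta} \Amap_v\Big(\GatZ{\Zalt_\ZZind},\Zalt_\ZZind\Big)
$$
and the $n$-form
$$
\beta_v = \sum_{\ZZind=0}^{\NumZ+2\Numzeta} \Bmap_v\Big(\GatZ{\Zalt_\ZZind},\Zalt_\ZZind\Big).
$$
The chain of equalities in (\ref{AB_diVLam}) says in particular that $d\,i_v\LambdaM = d\big(\sum_\ZZind \Amap_v(\GatZ{\Zalt_\ZZind},\Zalt_\ZZind)\big) + \sum_\ZZind \Bmap_v(\GatZ{\Zalt_\ZZind},\Zalt_\ZZind)$; moving the first term on the right-hand side across gives $d\alpha_v = \beta_v$, and this holds for every $v\in\Gamma TM$, hence in particular for every $v$ with compact support.

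Next I would verify that $\alpha_v$ and $\beta_v$ are $f$-linear in $v$. The term $i_v\LambdaM$ is $f$-linear in $v$ since interior contraction is. The maps $\Amap$ and $\Bmap$ were constructed to be $f$-linear in their vector slot (as recorded immediately after their definition), so the sums appearing in $\alpha_v$ and $\beta_v$ are $f$-linear in $v$ as well. This is precisely the point at which the decomposition (\ref{AB_AB_GatZ}) of each $\GatZ{\Zalt_\ZZind}\TprodL \Lie_v \Zalt_\ZZind$ into $d\,\Amap_v(\cdots)+\Bmap_v(\cdots)$ pays off: neither $\Lie_v\LambdaM$ nor the individual summands $\GatZ{\Zalt_\ZZind}\TprodL \Lie_v \Zalt_\ZZind$ are $f$-linear in $v$, but after the decomposition all of the non-$f$-linear behaviour is absorbed into an exact form, leaving $\alpha_v$ and $\beta_v$ genuinely $f$-linear.

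With $f$-linearity established, Lemma \ref{lm_zero} applied to the pair $(\alpha_v,\beta_v)$ yields $\alpha_v = 0$ and $\beta_v = 0$ for all $v$, which are exactly (\ref{AB_ivLambda_A}) and (\ref{AB_ivLambda_B}). I do not expect any real obstacle: the only step requiring care is the $f$-linearity bookkeeping just described, since a naive attempt to read off $i_v\LambdaM = \sum_\ZZind\Amap_v(\cdots)$ directly from $\Lie_v\LambdaM = \sum_\ZZind\GatZ{\Zalt_\ZZind}\TprodL\Lie_v\Zalt_\ZZind$ would fail — that identity is between objects that are merely $\Real$-linear in $v$, and the passage from $\Real$-linearity to the pointwise vanishing afforded by Lemma \ref{lm_zero} is exactly what the $\Amap$/$\Bmap$ decomposition makes legitimate.
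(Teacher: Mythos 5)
Your proof is correct and follows essentially the same route as the paper: both set $\alpha_v = i_v\LambdaM - \sum_\ZZind \Amap_v(\GatZ{\Zalt_\ZZind},\Zalt_\ZZind)$, recast (\ref{AB_diVLam}) as $d\alpha_v=\beta_v$, and invoke Lemma \ref{lm_zero}. Your explicit check of $f$-linearity (which the paper leaves implicit) is a welcome addition, and your sign convention for $\beta_v$ in fact matches the hypothesis of Lemma \ref{lm_zero} more literally than the paper's.
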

\begin{proof}
Since (\ref{AB_diVLam}) is true for all $v\in\Gamma TM$ it is true for
all $v=w\in\Gamma TM$ with compact support. Thus (\ref{AB_ivLambda_A}) and
(\ref{AB_ivLambda_B}) follow from lemma \ref{lm_zero} setting
\begin{align*}
\alpha_w
=
i_w \LambdaM - \sum_{\ZZind=0}^{\NumZ+2\Numzeta}
\Amap_w \Big(\GatZ{\Zalt_\ZZind},\Zalt_\ZZind\Big)
\qquadand
\beta_w
=
- \sum_{\ZZind=0}^{\NumZ+2\Numzeta}
\Bmap_w \Big(\GatZ{\Zalt_\ZZind},\Zalt_\ZZind\Big)
\end{align*}
\end{proof}

\begin{lemma}
\label{lm_alpha_V}
For any $\alpha\in\Gamma\Lambda^p M$ we have \textup{(\ref{Tensors_alpha_V})}
\end{lemma}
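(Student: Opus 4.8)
The plan is to verify (\ref{Tensors_alpha_V}) by a direct computation in a local coframe, reducing everything to the single elementary rule $e^a\Tprod X_b=\delta^a_b$ together with the defining formula (\ref{Tensors_def_icalV}) for $i_\Vasym$. First I would fix a local frame $\Set{X_1,\ldots,X_n}$ with dual coframe $\Set{e^1,\ldots,e^n}$ and write $\alpha=\sum_{J_1<\cdots<J_p}\alpha_{J_1\cdots J_p}\,e^{J_1}\wedge\cdots\wedge e^{J_p}$. Using (\ref{Tensors_identify_form_tens}) to express $\alpha$ as a tensor and (\ref{Tensors_general_V}) to express $\Vasym$ as a tensor, I would expand $\alpha\Tprod\Vasym$ by the inductive, $f$-bilinear definition of $\Tprod$. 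Because the only nonzero elementary contractions are $e^a\Tprod X_b=\delta^a_b$ and $\Tprod$ pairs tensor slots one-for-one, the resulting $(p!)^2$-fold sum over the two permutation groups survives only when the two strictly increasing multi-indices coincide and the two permutations agree; the residual sum is then $\sum_\sigma\epsilon(\sigma)^2=p!$, so the whole expression collapses to $\alpha\Tprod\Vasym=\tfrac{1}{p!}\sum_{I_1<\cdots<I_p}\alpha_{I_1\cdots I_p}\,\Vasym^{I_1\cdots I_p}$.

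Second I would evaluate the right-hand side straight from (\ref{Tensors_def_icalV}). The key sub-fact is that for strictly increasing multi-indices $I$ and $J$ one has $i_{X_{I_p}}\cdots i_{X_{I_1}}\big(e^{J_1}\wedge\cdots\wedge e^{J_p}\big)=1$ if $(I_1,\ldots,I_p)=(J_1,\ldots,J_p)$ and $0$ otherwise; this follows by applying the antiderivation rule for $i_{X_{I_1}}$ term by term, noting that the expression dies as soon as some $I_k$ is absent from $\Set{J_1,\ldots,J_p}$, and that otherwise the two $p$-element sets must be equal. Hence $i_{X_{I_p}}\cdots i_{X_{I_1}}\alpha=\alpha_{I_1\cdots I_p}$, and substituting into (\ref{Tensors_def_icalV}) gives $i_\Vasym\alpha=\tfrac{1}{p!}\sum_{I_1<\cdots<I_p}\Vasym^{I_1\cdots I_p}\,\alpha_{I_1\cdots I_p}$, which is exactly the expression already obtained for $\alpha\Tprod\Vasym$. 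This completes the verification.

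I expect the only real obstacle to be the bookkeeping of the $1/p!$ normalisations and of the permutation signs: one must notice that the $(p!)^2$ permutation pairs in the tensor expansion of $\alpha\Tprod\Vasym$ collapse to a single $p!$-fold sum of $\epsilon(\sigma)^2=1$, producing precisely the $1/p!$ that occurs in the definition (\ref{Tensors_def_icalV}) of $i_\Vasym$; mixing up the ordinary wedge and interior-product conventions on forms with the normalised tensor identification (\ref{Tensors_identify_form_tens}) is the easiest way to manufacture a spurious factorial.

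A frame-free alternative, which I would mention as a remark, combines Lemma \ref{lm_LagForm} (that is, relation (\ref{AB_Omega_X_alpha})) with (\ref{Tensors_Omega_Phi_Y}) to obtain $(\alpha\Tprod\Vasym)\,\Omega=\alpha\wedge i_\Vasym\Omega$ for an arbitrary top-form $\Omega$; a short induction on $p$, using that a wedge product of total degree exceeding $n$ vanishes together with the Leibniz rule for $i_v$, then gives $\alpha\wedge i_\Vasym\Omega=(i_\Vasym\alpha)\,\Omega$, and choosing $\Omega$ nowhere vanishing yields (\ref{Tensors_alpha_V}).
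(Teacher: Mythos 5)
Your main argument is correct and is essentially identical to the paper's own proof: both expand $\alpha$ and $\Vasym$ via (\ref{Tensors_identify_form_tens}) and (\ref{Tensors_general_V}), collapse the double permutation sum to $\tfrac{1}{p!}\sum_{I_1<\cdots<I_p}\alpha_{I_1\cdots I_p}\Vasym^{I_1\cdots I_p}$, and match this against the direct evaluation of $i_\Vasym\alpha$ from (\ref{Tensors_def_icalV}). One caution about your closing remark: the paper's proof of (\ref{AB_Omega_X_alpha}) in Lemma \ref{lm_LagForm} itself invokes (\ref{Tensors_alpha_V}), so the proposed ``frame-free alternative'' would be circular as the logic is organised here.
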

\begin{proof}
Setting
\begin{align*}
\alpha
=
\sum_{ I_1<\ldots< I_p}\alpha_{ I_1\cdots I_p}
e^{ I_1}\wedge\cdots\wedge e^{ I_p}
\end{align*}
From (\ref{Tensors_identify_form_tens}) and (\ref{Tensors_general_V})
we have
\begin{align*}
\alpha\Tprod \Vasym
&=
\Big(\frac1{p!}
\sum_{ I_1<\ldots< I_p} \sum_{\sigma\in S_p}\alpha_{ I_1\cdots I_p}
\epsilon(\sigma) e^{\sigma( I_1)}\otimes\cdots\otimes e^{\sigma( I_p)}
\Big)
\\&\qquad\qquad\Tprod
\Big(\frac1{p!}
\sum_{ J_1<\ldots< J_p} \Vasym^{ J_1\cdots J_p}    \sum_{\rho\in S_p}
\epsilon(\rho) X_{\rho( J_1)}\otimes\cdots\otimes X_{\rho( J_p)}\Big)
\\&=
\frac1{(p!)^2}
\sum_{ I_1<\ldots< I_p} \sum_{ J_1<\ldots< J_p}
\sum_{\sigma\in S_p} \sum_{\rho\in S_p}
\alpha_{ I_1\cdots I_p} \Vasym^{ J_1\cdots J_p}
\epsilon(\sigma)\epsilon(\rho)
\delta^{\sigma( I_1)}_{\rho( J_1)}\cdots \delta^{\sigma( I_p)}_{\rho( J_p)}
\\&=
\frac1{p!}
\sum_{ I_1<\ldots< I_p} \sum_{ J_1<\ldots< J_p}
\alpha_{ I_1\cdots I_p} \Vasym^{ J_1\cdots J_p}
\delta^{ I_1}_{ J_1}\cdots \delta^{ I_p}_{ J_p}
\\&=
\frac1{p!}
\sum_{ I_1<\ldots< I_p}
\alpha_{ I_1\cdots I_p} \Vasym^{ I_1\cdots I_p}
\end{align*}
while
\begin{align*}
i_\Vasym\alpha
&=
\frac1{p!}
\sum_{ J_1<\ldots< J_p} \Vasym^{ J_1\cdots J_p}
i_{X_{ J_p}}\cdots i_{X_{ J_1}}
\Big(
\sum_{ I_1<\ldots< I_p}\alpha_{ I_1\cdots I_p}
e^{ I_1}\wedge\cdots\wedge e^{ I_p}
\Big)
\\&=
\frac1{p!}
\sum_{ J_1<\ldots< J_p}\sum_{ I_1<\ldots< I_p}
\alpha_{ I_1\cdots I_p}\Vasym^{ J_1\cdots J_p}
i_{X_{ J_p}}\cdots i_{X_{ J_1}}
e^{ I_1}\wedge\cdots\wedge e^{ I_p}
\\&=
\frac1{p!}
\sum_{ J_1<\ldots< J_p}\sum_{ I_1<\ldots< I_p}
\alpha_{ I_1\cdots I_p}\Vasym^{ J_1\cdots J_p}
\delta^{ I_1}_{ J_1}\cdots \delta^{ I_p}_{ J_p}
\\&=
\frac1{p!}
\sum_{ I_1<\ldots< I_p}
\alpha_{ I_1\cdots I_p} \Vasym^{ I_1\cdots I_p}
\end{align*}

\end{proof}

\begin{lemma}
\label{lm_LagForm}
For any $\alpha\in\Gamma\Lambda^p M$, one has \textup{(\ref{AB_Omega_X_alpha})}
\end{lemma}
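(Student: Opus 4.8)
The plan is to turn the statement into a single interior-product identity and then check that identity on basis monomials. First I would use the definition (\ref{Tensors_Omega_Phi_Y}) of $\TprodL$, the symmetry of the total contraction $\Tprod$, and lemma \ref{lm_alpha_V} to rewrite the left-hand side of (\ref{AB_Omega_X_alpha}):
\[
(\Omega\otimes\Vasym)\TprodL\alpha = (\Vasym\Tprod\alpha)\,\Omega = (\alpha\Tprod\Vasym)\,\Omega = (i_\Vasym\alpha)\,\Omega .
\]
Thus it remains to prove the purely algebraic claim that $(i_\Vasym\alpha)\,\Omega = \alpha\wedge i_\Vasym\Omega$ for every $p$-form $\alpha$, every antisymmetric $p$-vector $\Vasym$, and every top-form $\Omega$.

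Next I would observe that both sides of this reduced identity are evaluated pointwise and are $f$-linear in each of $\Omega$, $\Vasym$, and $\alpha$ separately (on the right, this uses that each $i_{X_a}$, hence $i_\Vasym$, is $f$-linear on forms). So I would fix a point, pick mutually dual frames $\Set{X_a}$, $\Set{e^a}$ near it, and reduce to the case $\Omega = e^1\wedge\cdots\wedge e^n$, $\alpha = e^{J_1}\wedge\cdots\wedge e^{J_p}$ with $J_1<\cdots<J_p$, and $\Vasym = X_{I_1}\wedge\cdots\wedge X_{I_p}$ with $I_1<\cdots<I_p$; for such a $\Vasym$, definition (\ref{Tensors_def_icalV}) collapses to $i_\Vasym = \tfrac1{p!}\,i_{X_{I_p}}\cdots i_{X_{I_1}}$.

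Finally I would evaluate both sides on these monomials. The scalar $i_\Vasym\alpha = \tfrac1{p!}\,i_{X_{I_p}}\cdots i_{X_{I_1}}(e^{J_1}\wedge\cdots\wedge e^{J_p})$ equals $\tfrac1{p!}$ when $I=J$ and $0$ otherwise, while $i_\Vasym\Omega = \tfrac1{p!}\,\varepsilon\, e^{K_1}\wedge\cdots\wedge e^{K_{n-p}}$, where $K_1<\cdots<K_{n-p}$ lists the indices complementary to $\Set{I_1,\ldots,I_p}$ and $\varepsilon=\pm1$ is the sign of the permutation carrying $(1,\ldots,n)$ to $(I_1,\ldots,I_p,K_1,\ldots,K_{n-p})$; so $\alpha\wedge i_\Vasym\Omega$ vanishes unless $\Set{J_1,\ldots,J_p}=\Set{I_1,\ldots,I_p}$, i.e. $I=J$, in which case reordering the complementary factors back into place supplies a second factor of $\varepsilon$ and gives $\tfrac1{p!}\,\varepsilon^2\,\Omega=\tfrac1{p!}\,\Omega$. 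Comparing the two computations yields $(i_\Vasym\alpha)\,\Omega = \alpha\wedge i_\Vasym\Omega$ in every case, which is (\ref{AB_Omega_X_alpha}). The only genuine work is this sign bookkeeping; it can be sidestepped by instead arguing by induction on $p$, peeling off one contraction at a time and using that $i_v$ is a graded antiderivation together with the vanishing of a $p$-form wedged with an $(n-p+1)$-form, which makes the signs cancel automatically. I expect this sign bookkeeping, in whichever form, to be the only obstacle, and an entirely routine one.
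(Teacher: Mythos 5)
Your proposal is correct, and its first half coincides exactly with the paper's: both use (\ref{Tensors_Omega_Phi_Y}) together with lemma \ref{lm_alpha_V} to reduce (\ref{AB_Omega_X_alpha}) to the identity $(i_\Vasym\alpha)\,\Omega=\alpha\wedge i_\Vasym\Omega$. Where you diverge is in how that identity is verified. The paper stays frame-free: it proves the transfer rule
\begin{align*}
(i_{v_1}\cdots i_{v_q}\alpha)\wedge(i_{v_{q+1}}\cdots i_{v_p}\Omega)
=(-1)^{p-1}(i_{v_1}\cdots i_{v_{q-1}}\alpha)\wedge(i_{v_q}\cdots i_{v_p}\Omega)
\end{align*}
by anticommuting $i_{v_q}$ to the front and then using that $i_{v_q}$ is a graded antiderivation annihilating the (identically zero) degree-$(n+1)$ wedge product; iterating transfers all $p$ contractions from $\alpha$ to $\Omega$ at a total cost of $(-1)^{p(p-1)}=1$. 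This is precisely the ``peeling off one contraction at a time'' variant you mention only in passing at the end. Your primary route instead reduces by multilinearity to basis monomials and checks that both sides equal $\tfrac1{p!}\,\delta^I_J\,\Omega$; the sign bookkeeping you describe (the factor $\varepsilon$ appearing once from $i_\Vasym\Omega$ and once from reordering $e^I\wedge e^K$ back to $\Omega$, so that $\varepsilon^2=1$) is correct, and you correctly account for the $\tfrac1{p!}$ in the definition (\ref{Tensors_def_icalV}). The trade-off is the usual one: the paper's argument is shorter and coordinate-free, while yours is more concrete but requires choosing a frame and tracking permutation signs explicitly. Either is acceptable as a proof of the lemma.
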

\begin{proof}
For $0\le q\le p$ we have since $(i_{v_1} \cdots i_{v_q} \alpha )
\wedge (i_{v_{q+1}} \cdots i_{v_p} \Omega )\in\Gamma\Lambda^n M$
\begin{align*}
\lefteqn{
(i_{v_1} \cdots i_{v_q} \alpha )
\wedge
(i_{v_{q+1}} \cdots i_{v_p} \Omega )}\qquad\qquad&
\\
&=
(-1)^{q-1} i_{v_q} (i_{v_1} \cdots i_{v_{q-1}} \alpha )
\wedge
(i_{v_{q+1}} \cdots i_{v_p} \Omega )
\\&=
(-1)^{q-1} (-1)^{(p-q)}(i_{v_1} \cdots i_{v_{q-1}} \alpha )
\wedge
i_{v_q} (i_{v_{q+1}} \cdots i_{v_p} \Omega )
\\&=
(-1)^{p-1}(i_{v_1} \cdots i_{v_{q-1}} \alpha )
\wedge
(i_{v_{q}} \cdots i_{v_p} \Omega )
\end{align*}
Hence from (\ref{Tensors_Omega_Phi_Y}) and (\ref{Tensors_alpha_V})
\begin{align*}
(\Omega\otimes \Vasym)\TprodL \alpha
&=
(\Vasym\Tprod\alpha) \Omega
=
(i_\Vasym \alpha) \Omega
=
\frac{1}{p!} \sum_{ I_1<\cdots< I_p} \Vasym^{ I_1\cdots I_p}
(i_{X_{ I_p}} \cdots i_{X_{ I_1}} \alpha )
\wedge
\Omega
\\&=
\frac{1}{p!} \sum_{ I_1<\cdots< I_p} \Vasym^{ I_1\cdots I_p}
(-1)^{p(p-1)}\alpha
\wedge
(i_{X_{ I_p}} \cdots i_{X_{ I_1}}  \Omega )
\\&=
\alpha\wedge\Big(\frac{1}{p!} \sum_{ I_1<\cdots< I_p} \Vasym^{ I_1\cdots I_p}
i_{X_{ I_p}} \cdots i_{X_{ I_1}}  \Omega \Big)
\\&=
\alpha \wedge i_\Vasym \Omega
\end{align*}
hence (\ref{AB_Omega_X_alpha}).
\end{proof}

\begin{lemma}
\label{lm_variation}
\textup{(\ref{LagForm_def_delta_frac_ea})} holds.
\end{lemma}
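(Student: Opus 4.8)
The plan is to differentiate under the integral sign in (\ref{LagForm_dynamical_def}), split the resulting $\varepsilon$-derivative into its two constituent partial Gateaux derivatives, integrate the $d\beta$-term by parts, and close with the fundamental lemma of the calculus of variations. For the first step I would apply the chain rule to the integrand on the right of (\ref{LagForm_dynamical_def}): since $\LambdaM$ is a fibre bundle morphism depending smoothly on its arguments, and since $\zeta_\zeind$ and $d\zeta_\zeind$ occupy independent slots of $\LambdaM$, the derivative of $\LambdaM(g,\ldots,\zeta_\zeind+\varepsilon\beta,d\zeta_\zeind+\varepsilon d\beta,\ldots)$ at $\varepsilon=0$ is the sum of the partial Gateaux derivative in the $\zeta_\zeind$-slot applied to $\beta$ and the partial Gateaux derivative in the $d\zeta_\zeind$-slot applied to $d\beta$. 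By (\ref{AB_def_pfrac_zeta}) and (\ref{LagForm_def_pfrac_d_alpha_mu}) this yields the pointwise identity
\[
\dfrac{}{\varepsilon}\Big|_{\varepsilon=0}\LambdaM(g,\ldots,\zeta_\zeind+\varepsilon\beta,d\zeta_\zeind+\varepsilon d\beta,\ldots)=\beta\wedge\pfrac{\LambdaM}{\zeta_\zeind}+d\beta\wedge\pfrac{\LambdaM}{(d\zeta_\zeind)}.
\]

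Next I would integrate this over $M$ and rewrite the $d\beta$-term. Since $\beta$ has degree $p_\zeind$, the graded Leibniz rule for $d$ gives
\[
d\Big(\beta\wedge\pfrac{\LambdaM}{(d\zeta_\zeind)}\Big)=d\beta\wedge\pfrac{\LambdaM}{(d\zeta_\zeind)}+(-1)^{p_\zeind}\beta\wedge d\Big(\pfrac{\LambdaM}{(d\zeta_\zeind)}\Big),
\]
and, $\beta$ having compact support, so does $\beta\wedge\pfrac{\LambdaM}{(d\zeta_\zeind)}$, whence Stokes' theorem makes the integral of the left-hand side vanish. Substituting back, one obtains
\[
\int_M\beta\wedge\frac{\delta\LambdaM}{\delta\zeta_\zeind}=\int_M\beta\wedge\Big(\pfrac{\LambdaM}{\zeta_\zeind}+(-1)^{p_\zeind+1}d\Big(\pfrac{\LambdaM}{(d\zeta_\zeind)}\Big)\Big)
\]
for every compactly supported $\beta\in\Gamma\Lambda^{p_\zeind}M$.

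Finally I would invoke the fundamental lemma of the calculus of variations in the form: if $\gamma\in\Gamma\Lambda^{n-p_\zeind}M$ satisfies $\int_M\beta\wedge\gamma=0$ for all compactly supported $\beta\in\Gamma\Lambda^{p_\zeind}M$, then $\gamma=0$. This holds because the fibrewise wedge pairing $\Lambda^{p_\zeind}_xM\times\Lambda^{n-p_\zeind}_xM\to\Lambda^n_xM$ is non-degenerate, so a nonzero value of $\gamma$ at a point can be detected by choosing $\beta$ supported in a small neighbourhood of that point. Applying this with $\gamma$ equal to the difference of the two sides of (\ref{LagForm_def_delta_frac_ea}) proves the lemma. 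I expect the only steps needing genuine care to be the justification of differentiating under the integral sign (covered by smoothness of $\LambdaM$ together with compactness of the support of $\beta$) and this last localisation argument; the sign bookkeeping in the integration by parts is routine.
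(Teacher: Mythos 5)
Your proposal is correct and follows essentially the same route as the paper's own proof: split the $\varepsilon$-derivative into the two partial Gateaux derivatives via (\ref{AB_def_pfrac_zeta}) and (\ref{LagForm_def_pfrac_d_alpha_mu}), integrate the $d\beta$-term by parts using compact support, and conclude from arbitrariness of $\beta$. Your explicit statement of the non-degeneracy of the wedge pairing in the final localisation step is a slightly more careful version of the paper's ``since this is true for all $\beta$''.
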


\begin{proof}
From (\ref{AB_def_pfrac_zeta}) and
(\ref{LagForm_def_pfrac_d_alpha_mu})
\begin{align*}
\lefteqn{
\int_M\beta\wedge\frac{\delta\LambdaM}{\delta\zeta_\zeind}}
\quad&
\\
&=
\dfrac{}{\varepsilon}\Big|_{\varepsilon=0}
\int_M
\LambdaM(g,Z_1,\ldots,Z_\NumZ,\zeta_1,d\zeta_1,\ldots,
\zeta_\zeind+\varepsilon\beta,d\zeta_\zeind+\varepsilon d\beta,
\ldots,\zeta_\Numzeta,d\zeta_\Numzeta)
\\&=
\int_M \bigg(
\beta\wedge \pfrac{\LambdaM}{\zeta_\zeind}
+
d\beta\wedge \pfrac{\LambdaM}{(d\zeta_\zeind)}\bigg)
\\&=
\int_M \bigg(
\beta\wedge \pfrac{\LambdaM}{\zeta_\zeind}
+
d\Big(\beta\wedge \pfrac{\LambdaM}{\zeta_\zeind}\Big)
- (-1)^{p_\zeind}
\beta\wedge d\Big(\pfrac{\LambdaM}{\zeta_\zeind}\Big)
\bigg)
\\&=
\int_M
\beta\wedge \bigg(\pfrac{\LambdaM}{\zeta_\zeind}
+
(-1)^{p_\zeind+1}
d\Big(\pfrac{\LambdaM}{(d\zeta_\zeind)}\Big)
\bigg)
\end{align*}
since $\beta$ has compact support. Since this is true for all $\beta$
then (\ref{LagForm_def_delta_frac_ea}) follows.
\end{proof}

\begin{lemma}
\label{lm_Lie_v_beta_alpha}
For any two forms $\alpha\in\Gamma\Lambda^{n-p} M$ and
$\beta\in\Gamma\Lambda^{p} M$ then
\begin{align}
\Lie_v \beta\wedge\alpha
=
d(i_v \beta\wedge\alpha)
+
(-1)^p i_v \beta\wedge d\alpha +i_v d\beta\wedge\alpha
\label{lms_Lie_v_beta}
\end{align}
\end{lemma}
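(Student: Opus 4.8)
The plan is to reduce the identity to two standard facts: Cartan's homotopy formula $\Lie_v = d\,i_v + i_v\,d$ (equivalently the relation already quoted in the text as $\Lie_v\zeta_\zeind = i_v d\zeta_\zeind + d\,i_v\zeta_\zeind$), together with the graded Leibniz rule for the exterior derivative. Both are pointwise identities, so the whole lemma is a short sign-bookkeeping exercise with no analytic content.

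First I would apply Cartan's formula to the $p$-form $\beta$ and wedge with $\alpha\in\Gamma\Lambda^{n-p}M$, obtaining
\begin{align*}
\Lie_v\beta\wedge\alpha
=
i_v d\beta\wedge\alpha + (d\,i_v\beta)\wedge\alpha .
\end{align*}
The second step is to rewrite $(d\,i_v\beta)\wedge\alpha$. Since $i_v\beta\in\Gamma\Lambda^{p-1}M$, the graded Leibniz rule gives
\begin{align*}
d\big(i_v\beta\wedge\alpha\big)
=
(d\,i_v\beta)\wedge\alpha + (-1)^{p-1}\,i_v\beta\wedge d\alpha ,
\end{align*}
so that $(d\,i_v\beta)\wedge\alpha = d\big(i_v\beta\wedge\alpha\big) + (-1)^{p}\,i_v\beta\wedge d\alpha$, using $-(-1)^{p-1}=(-1)^{p}$. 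Substituting this back into the first display yields exactly (\ref{lms_Lie_v_beta}). If one prefers to avoid Cartan's formula as a black box, one can instead take it as the definition of $\Lie_v$ on forms and verify the Leibniz step directly on monomials $\beta = f\,dx^{I_1}\wedge\cdots\wedge dx^{I_p}$, but this is not needed here since $\Lie_v\zeta_\zeind = i_v d\zeta_\zeind + d\,i_v\zeta_\zeind$ is already invoked elsewhere in the paper.

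I do not anticipate a genuine obstacle; the only thing to watch is the sign in the Leibniz rule, which is why I keep track of the degree $p-1$ of $i_v\beta$ explicitly rather than the degree $p$ of $\beta$. As a sanity check, specializing to $\beta = d\zeta_\zeind$ (a $(p_\zeind+1)$-form) and using $d^2=0$ recovers (\ref{AB_Lie_v_dzeta}), while $\beta = \zeta_\zeind$ recovers (\ref{AB_Lie_v_zeta}), confirming the signs.
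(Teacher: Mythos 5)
Your proof is correct and follows exactly the paper's route: Cartan's formula $\Lie_v = d\,i_v + i_v\,d$ applied to $\beta$, followed by the graded Leibniz rule on $d(i_v\beta\wedge\alpha)$ with the sign $(-1)^{p-1}$ coming from the degree of $i_v\beta$. The sanity checks against (\ref{AB_Lie_v_zeta}) and (\ref{AB_Lie_v_dzeta}) are a nice addition but the argument itself is the one in the paper.
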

\begin{proof}
\begin{align*}
\Lie_v \beta\wedge\alpha
&=
d i_v \beta\wedge\alpha
+
i_v d \beta\wedge\alpha
=
d(i_v \beta\wedge\alpha)
+
(-1)^p i_v \beta\wedge d\alpha +i_v d\beta\wedge\alpha
\end{align*}
\end{proof}

\begin{lemma}
\label{lm_LvZ_dA_B}
For any tensors $Z\in\Gamma\tenbun^{{\slist}}M$ and
$\Psi\in\Gamma(\Lambda^nM\otimes\tenbun^{\cnj{\slist}}M)$ then
\begin{align}
\Psi\Tprod (\Lie_v Z) = d\big(\Amap_v(\Psi,Z)\big) + \Bmap_v(\Psi,Z)
\label{AB_thm_dA_B}
\end{align}
Furthermore this decomposition is unique in the sense that if
\begin{align}
\Psi\Tprod (\Lie_v Z) = d \alpha_v + \beta_v
\label{AB_thm_dalpha_beta}
\end{align}
where $\alpha_v$ and $\beta_v$ are `f'-linear in $v$ then
\begin{align}
\alpha_v = \Amap_v(\Psi,Z)
\qquadand
\beta_v = \Bmap_v(\Psi,Z)
\label{AB_thm_alpha_beta}
\end{align}

\end{lemma}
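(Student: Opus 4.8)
The plan is to establish the identity (\ref{AB_thm_dA_B}) by induction on the rank of $Z$ (the length of its degree list $\slist$), following the inductive structure used to define $\Amap$ and $\Bmap$, and then to read off the uniqueness statement (\ref{AB_thm_alpha_beta}) from Lemma \ref{lm_zero}.

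For the existence part I would first settle the base cases, noting that both sides of (\ref{AB_thm_dA_B}) are additive in $\Psi$, so one may take $\Psi=\Omega\otimes\Phi$ with $\Omega\in\Gamma\Lambda^n M$. If $Z=f$ is a $0$-form then $\Lie_v f=v(f)$ and the claim is immediate from (\ref{AB_def_AB_f}). If $Z=\alpha$ is a $1$-form, so that $\Phi=u\in\Gamma TM$, the key elementary fact is that $\gamma\wedge i_u\Omega=\gamma(u)\,\Omega$ for every $1$-form $\gamma$ (obtained by applying $i_u$ to the vanishing $(n+1)$-form $\gamma\wedge\Omega$); using it, $\Psi\TprodL\Lie_v\alpha=(\Lie_v\alpha)(u)\,\Omega=\Lie_v\alpha\wedge i_u\Omega$, and expanding $\Lie_v\alpha=d(i_v\alpha)+i_v(d\alpha)$ by Cartan's formula and the Leibniz rule for $d$ regroups the terms precisely as $d\big(\alpha(v)\,i_u\Omega\big)$ plus the two terms on the right of (\ref{AB_def_AB_alpha}). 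The vector case $Z=u$, $\Phi=\alpha$, reduces to the preceding one: writing $\Lie_v u=[v,u]$ and $\alpha([v,u])=v(\alpha(u))-(\Lie_v\alpha)(u)$, the $(\Lie_v\alpha)(u)\,\Omega$ piece is exactly what was just computed, and the leftover $v(\alpha(u))\,\Omega$ is the extra term distinguishing (\ref{AB_def_AB_u}).

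For the inductive step, split $\slist=[\slist',s]$ with $s$ a single entry, so that on decomposables $Z=Z'\otimes z$ and $\Psi=\Omega\otimes\Phi'\otimes\phi$ (extending by additivity), with $z$ a $1$-form or a vector. Using $\Lie_v(Z'\otimes z)=(\Lie_v Z')\otimes z+Z'\otimes(\Lie_v z)$ and the definition of $\TprodL$ on tensor products, $\Psi\TprodL\Lie_v Z$ breaks into $\big((\phi\Tprod z)\Omega\otimes\Phi'\big)\TprodL\Lie_v Z'$ and $\big((\Phi'\Tprod Z')\Omega\otimes\phi\big)\TprodL\Lie_v z$; the first is covered by the induction hypothesis (smaller rank) and the second by the base cases, and summing and invoking the Leibniz rules (\ref{AB_def_AB_Ten}) for $\Amap$ and $\Bmap$ reconstitutes $d\big(\Amap_v(\Psi,Z)\big)+\Bmap_v(\Psi,Z)$.

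Finally, for uniqueness, subtracting (\ref{AB_thm_dalpha_beta}) from (\ref{AB_thm_dA_B}) yields $d\big(\alpha_v-\Amap_v(\Psi,Z)\big)=\Bmap_v(\Psi,Z)-\beta_v$, in which both forms are $f$-linear in $v$ because $\Amap_v$ and $\Bmap_v$ are so by construction; Lemma \ref{lm_zero} then forces $\alpha_v=\Amap_v(\Psi,Z)$ and $\beta_v=\Bmap_v(\Psi,Z)$. I expect the only real difficulty to be bookkeeping in the base cases — tracking the degree-dependent signs $(-1)^n$ and $(-1)^{n+1}$ that appear when the $1$-forms $i_v\alpha$ and $i_v(d\alpha)$ are commuted past the $(n-1)$-form $i_u\Omega$ and past $d(i_u\Omega)$ — together with, in the inductive step, checking that the Leibniz-rule definition of $\Amap,\Bmap$ on a tensor product is independent of which splitting of $\slist$ one chooses; the rest is a direct unwinding of definitions.
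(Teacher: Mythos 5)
Your proposal is correct and follows essentially the same route as the paper's own proof: the same base cases for $0$-forms, $1$-forms and vectors (with the vector case reduced to the $1$-form case via $\alpha(\Lie_v u)=v(\alpha(u))-(\Lie_v\alpha)(u)$), the same induction over tensor products using the Leibniz rules (\ref{AB_def_AB_Ten}), and the same appeal to Lemma \ref{lm_zero} for uniqueness. The sign bookkeeping you flag is exactly where the paper's computation lives, and your identity $\gamma\wedge i_u\Omega=\gamma(u)\,\Omega$ is the same elementary fact the paper uses.
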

\begin{proof}
First using (\ref{AB_def_AB_f}) we prove that (\ref{AB_thm_dA_B}) for
$f\in\Gamma\tenbun^{()}M$
\begin{align*}
\Omega\Tprod(\Lie_v f)
=
\Omega\, v(f)
=
d\big(\Amap_v(\Omega,f)\big)
+
\Bmap_v(\Omega,f)
\end{align*}
Using (\ref{AB_def_AB_alpha}) this is true for 1-forms
$\alpha\in\Gamma\tenbun^{[\Tform]}M$
\begin{align*}
(\Omega\otimes u)\Tprod(\Lie_v \alpha)
&=
\Omega\,(i_u \Lie_v\alpha)
=
(-1)^{n+1}i_u \Omega\wedge d i_v\alpha
+
(-1)^{n+1}i_u \Omega\wedge i_v d\alpha
\\&=
d (i_u \Omega\wedge i_v\alpha)
-d i_u \Omega\wedge i_v\alpha
+
(-1)^{n+1}i_u \Omega\wedge i_v d\alpha
\\&=
d\big(\Amap_v(\Omega\otimes u,\alpha)\big)
+
\Bmap_v(\Omega\otimes u,\alpha)
\end{align*}
Now using (\ref{AB_def_AB_u}) we show that  for vectors
$u\in\Gamma\tenbun^{[\Tvec]}M$
\begin{align*}
(\Omega\otimes \alpha)&\Tprod(\Lie_v u)
=
\Omega\,\big(\alpha(\Lie_v u)\big)
=
\Omega\,\big(v\big(\alpha(u)\big)\big)
-
\Omega\,(i_u \Lie_v\alpha)
\\&=
\Omega\,\big(v\big(\alpha(u)\big)\big)
-d (i_u \Omega\wedge i_v\alpha)
+d i_u \Omega\wedge i_v\alpha
-
(-1)^{n+1}i_u \Omega\wedge i_v d\alpha
\\&=
d\big(\Amap_v(\Omega\otimes\alpha,u)\big)
+
\Bmap_v(\Omega\otimes\alpha,u)
\end{align*}
Using (\ref{AB_def_AB_Ten}) it follows that (\ref{AB_thm_dA_B}) is true
across tensor products. Assuming it true for $Z_1,Z_2$ then
\begin{align*}
\lefteqn{
(\Omega\otimes \Phi_1\otimes\Phi_2)\Tprod\big(\Lie_v (Z_1\otimes Z_2)\big)
}\qquad&\\
&=
(\Omega\otimes \Phi_1\otimes\Phi_2)\Tprod(\Lie_v Z_1\otimes Z_2)+
(\Omega\otimes \Phi_1\otimes\Phi_2)\Tprod(Z_1\otimes \Lie_v Z_2)
\\&=
\Omega (\Phi_1\Tprod \Lie_v Z_1) (\Phi_2\Tprod Z_2)+
\Omega (\Phi_1\Tprod Z_1) (\Phi_2\Tprod \Lie_v Z_2)
\\&=
(\Phi_2\Tprod Z_2)(\Omega\otimes\Phi_1)\Tprod (\Lie_v Z_1) +
(\Phi_1\Tprod Z_1)(\Omega\otimes\Phi_2)\Tprod (\Lie_v Z_2)
\\&=
d\big(\Amap_v((\Phi_2\Tprod Z_2)\Omega\otimes \Phi_1,Z_1)\big) +
\Bmap_v((\Phi_2\Tprod Z_2)\Omega\otimes \Phi_1,Z_1)
\\&\quad+
d\big(\Amap_v((\Phi_1\Tprod Z_1)\Omega\otimes \Phi_2,Z_2)\big) +
\Bmap_v((\Phi_1\Tprod Z_1)\Omega\otimes \Phi_2,Z_2)
\\&=
d\big(\Amap_v(\Omega\otimes \Phi_1\otimes\Phi_2,Z_1\otimes Z_2)\big) +
\Bmap_v(\Omega\otimes \Phi_1\otimes\Phi_2,Z_1\otimes Z_2)
\end{align*}
Hence (\ref{AB_thm_dA_B}) is true for all tensors $Z$.

If (\ref{AB_thm_dalpha_beta}) is true then from (\ref{AB_thm_dA_B})
one has
\begin{align*}
d\big(\Amap_v(\Psi,Z) - \alpha_v\big)
&=
-\big(\Bmap_v(\Psi,Z) - \beta_v\big)
\end{align*}
and hence from lemma \ref{lm_zero}  (\ref{AB_thm_alpha_beta})  follows.
\end{proof}

\begin{lemma}
\label{lm_Lie_v_beta_AB}
Given $\Omega\otimes\Vasym\in\Gamma(\Lambda^n M\otimes
\tenbun^{[\Tvec,\ldots,\Tvec]}M)$ where $\Vasym$ is antisymmetric and
$[\Tvec,\ldots,\Tvec]$ has length $p$ and
$\beta\in\Gamma\Lambda^p M$ then
\begin{align}
\Amap_v(\beta,\Omega\otimes\Vasym) = i_v \beta\wedge i_\Vasym \Omega
\label{AB_forms_Av}
\end{align}
and
\begin{align}
\Bmap_v(\beta,\Omega\otimes\Vasym)
=
(-1)^p i_v \beta\wedge d i_\Vasym \Omega +
i_v d\beta\wedge i_\Vasym \Omega
\label{AB_forms_Bv}
\end{align}
\end{lemma}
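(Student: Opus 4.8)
The plan is to read off both identities from the uniqueness clause of lemma~\ref{lm_LvZ_dA_B}. I would apply that lemma with $Z=\beta$, regarded as an element of $\Gamma\tenbun^{[\Tform,\ldots,\Tform]}M$ (list of length $p$) through the identification (\ref{Tensors_identify_form_tens}), and with $\Psi=\Omega\otimes\Vasym\in\Gamma(\Lambda^n M\otimes\tenbun^{[\Tvec,\ldots,\Tvec]}M)$; here $[\Tvec,\ldots,\Tvec]$ is the dual list $\cnj{[\Tform,\ldots,\Tform]}$, so the contraction $\Psi\TprodL(\Lie_v\beta)$ is defined. The lemma then gives
\[
(\Omega\otimes\Vasym)\TprodL(\Lie_v\beta)
=
d\big(\Amap_v(\beta,\Omega\otimes\Vasym)\big)
+
\Bmap_v(\beta,\Omega\otimes\Vasym),
\]
and, more importantly, it tells us that whenever the left-hand side is written as $d(\,\cdot\,)+(\,\cdot\,)$ with both summands $f$-linear in $v$, the first summand must equal $\Amap_v(\beta,\Omega\otimes\Vasym)$ and the second $\Bmap_v(\beta,\Omega\otimes\Vasym)$. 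So it is enough to produce one such decomposition with the desired right-hand sides.

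I would then compute the left-hand side directly. Since $\Lie_v\beta\in\Gamma\Lambda^p M$, relation (\ref{AB_Omega_X_alpha}) (i.e.\ lemma~\ref{lm_LagForm}) gives $(\Omega\otimes\Vasym)\TprodL(\Lie_v\beta)=\Lie_v\beta\wedge i_\Vasym\Omega$, and $i_\Vasym\Omega\in\Gamma\Lambda^{n-p}M$ by (\ref{Tensors_def_icalV}). Applying Cartan's identity $\Lie_v\beta=d\,i_v\beta+i_v d\beta$ and then the graded Leibniz rule for $d$ to the $(p-1)$-form $i_v\beta$,
\[
\Lie_v\beta\wedge i_\Vasym\Omega
=
d\big(i_v\beta\wedge i_\Vasym\Omega\big)
+
(-1)^{p}\, i_v\beta\wedge d\,i_\Vasym\Omega
+
i_v d\beta\wedge i_\Vasym\Omega .
\]
Both $i_v\beta\wedge i_\Vasym\Omega$ and $(-1)^{p}\,i_v\beta\wedge d\,i_\Vasym\Omega+i_v d\beta\wedge i_\Vasym\Omega$ are $f$-linear in $v$, since $i_v$ is. Matching this with the decomposition above and invoking the uniqueness clause of lemma~\ref{lm_LvZ_dA_B} yields exactly (\ref{AB_forms_Av}) and (\ref{AB_forms_Bv}).

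There is no real obstacle here; the only delicate points are the sign bookkeeping when commuting $d$ past the $(p-1)$-form $i_v\beta$ (which produces the factor $(-1)^{p}$) and keeping track of degrees so that each wedge product is genuinely a top-form. An alternative would be a direct induction on $p$ from the base case (\ref{AB_def_AB_alpha}) using the Leibnitz rules (\ref{AB_def_AB_Ten}), expanding $\beta$ and $\Vasym$ in a coframe as in (\ref{Tensors_identify_form_tens}) and (\ref{Tensors_general_V}); this is routine but much longer, so I would favour the uniqueness argument.
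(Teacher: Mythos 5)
Your proposal is correct and follows essentially the same route as the paper: compute $(\Omega\otimes\Vasym)\TprodL(\Lie_v\beta)$ via (\ref{AB_Omega_X_alpha}), expand $\Lie_v\beta\wedge i_\Vasym\Omega$ by Cartan's identity exactly as in lemma~\ref{lm_Lie_v_beta_alpha}, and then identify the two $f$-linear pieces by uniqueness (the paper cites lemma~\ref{lm_zero} directly, you cite the uniqueness clause of lemma~\ref{lm_LvZ_dA_B}, which is the same fact). No gaps.
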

\begin{proof}
From (\ref{AB_thm_dA_B}),(\ref{AB_Omega_X_alpha}) and
(\ref{lms_Lie_v_beta}) we have
\begin{align*}
d\big(\Amap_v(\beta,\Omega\otimes\Vasym)\big) +
\Bmap_v(\beta,\Omega\otimes\Vasym)
&=
(\Omega\otimes\Vasym)\TprodL(\Lie_v\beta)
=
(\Lie_v\beta)\wedge i_\Vasym\Omega
\\
&\hspace{-3em}=
d(i_v \beta\wedge i_\Vasym\Omega)
+
(-1)^p i_v \beta\wedge di_\Vasym\Omega +i_v d\beta\wedge i_\Vasym\Omega
\end{align*}
Now apply lemma \ref{lm_zero}.
\end{proof}

\begin{lemma}
\label{lm_Lie}
If the system is on $\LambdaM$-shell then \textup{(\ref{AB_tau_Mat_Shell})} and
\textup{(\ref{AB_Bg_D_tau_Met_Shell})} hold.
\end{lemma}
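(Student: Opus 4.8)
The plan is to derive both displayed identities directly from the off-shell relations \eqref{AB_tau_Mat} and \eqref{AB_Bg_D_tau_Met}, which are consequences of diffeomorphism invariance already in hand, by inserting the on-shell relation \eqref{LagForm_delta_frac_ea_res}, namely $\pfrac{\LambdaM}{\zeta_\zeind} = (-1)^{p_\zeind} d\big(\pfrac{\LambdaM}{(d\zeta_\zeind)}\big)$, which holds for every $\zeind$ when the matter system is on $\LambdaM$-shell. No metric properties and no further diffeomorphism argument are needed; everything reduces to the Leibniz rule for $d$ on a wedge product together with Cartan's identity $\Lie_v = d\, i_v + i_v d$.

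For \eqref{AB_Bg_D_tau_Met_Shell} I would substitute the shell relation into the three $\zeind$-summed terms of \eqref{AB_Bg_D_tau_Met}. The first term contains $d\pfrac{\LambdaM}{\zeta_\zeind} = (-1)^{p_\zeind} d^2\big(\pfrac{\LambdaM}{(d\zeta_\zeind)}\big) = 0$, so it drops out. The remaining two terms become $(-1)^{p_\zeind} i_v d\zeta_\zeind\wedge d\pfrac{\LambdaM}{(d\zeta_\zeind)}$ and $(-1)^{p_\zeind+1} i_v d\zeta_\zeind\wedge d\pfrac{\LambdaM}{(d\zeta_\zeind)}$, which are negatives of one another and cancel. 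Hence the whole $\zeind$-sum vanishes and \eqref{AB_Bg_D_tau_Met} collapses to \eqref{AB_Bg_D_tau_Met_Shell}.

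For \eqref{AB_tau_Mat_Shell} I would start from \eqref{AB_tau_Mat}; only the bracket $i_v\zeta_\zeind\wedge\pfrac{\LambdaM}{\zeta_\zeind} + i_v d\zeta_\zeind\wedge\pfrac{\LambdaM}{(d\zeta_\zeind)}$ needs reworking, since $i_v\LambdaM$ and the $\sum_\Zind$ term carry over verbatim. Using the shell relation for $\pfrac{\LambdaM}{\zeta_\zeind}$ and then the Leibniz expansion $d\big(i_v\zeta_\zeind \wedge \pfrac{\LambdaM}{(d\zeta_\zeind)}\big) = d i_v\zeta_\zeind \wedge \pfrac{\LambdaM}{(d\zeta_\zeind)} + (-1)^{p_\zeind-1} i_v\zeta_\zeind \wedge d\pfrac{\LambdaM}{(d\zeta_\zeind)}$, I can write $i_v\zeta_\zeind\wedge\pfrac{\LambdaM}{\zeta_\zeind} = (-1)^{p_\zeind} i_v\zeta_\zeind\wedge d\pfrac{\LambdaM}{(d\zeta_\zeind)} = -d\big(i_v\zeta_\zeind\wedge\pfrac{\LambdaM}{(d\zeta_\zeind)}\big) + d i_v\zeta_\zeind\wedge\pfrac{\LambdaM}{(d\zeta_\zeind)}$. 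Adding the $i_v d\zeta_\zeind\wedge\pfrac{\LambdaM}{(d\zeta_\zeind)}$ term and combining $(d i_v + i_v d)\zeta_\zeind = \Lie_v\zeta_\zeind$ turns the bracket into $-d\big(i_v\zeta_\zeind\wedge\pfrac{\LambdaM}{(d\zeta_\zeind)}\big) + \Lie_v\zeta_\zeind\wedge\pfrac{\LambdaM}{(d\zeta_\zeind)}$. Summing over $\zeind$ and restoring the overall minus sign in \eqref{AB_tau_Mat} produces exactly the $+d\big(\sum_\zeind i_v\zeta_\zeind\wedge\pfrac{\LambdaM}{(d\zeta_\zeind)}\big)$ and $-\sum_\zeind \Lie_v\zeta_\zeind\wedge\pfrac{\LambdaM}{(d\zeta_\zeind)}$ contributions appearing in \eqref{AB_tau_Mat_Shell}.

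The only real pitfall is the bookkeeping of form degrees and signs: $i_v\zeta_\zeind$ has degree $p_\zeind-1$, $\pfrac{\LambdaM}{(d\zeta_\zeind)}$ has degree $n-p_\zeind-1$, and one must keep the $(-1)^{p_\zeind}$ from the shell relation distinct from the $(-1)^{p_\zeind-1}$ in the Leibniz rule. Once these are tracked carefully the computation is purely mechanical, and the statement follows for every $v\in\Gamma TM$.
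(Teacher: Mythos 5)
Your proposal is correct and follows essentially the same route as the paper: for (\ref{AB_tau_Mat_Shell}) it is the identical substitution of (\ref{LagForm_delta_frac_ea_res}) followed by the Leibniz rule and Cartan's identity, and for (\ref{AB_Bg_D_tau_Met_Shell}) your direct cancellation (first term killed by $d^2=0$, the other two being negatives of each other) is just an unpacked version of the paper's regrouping of the three terms into multiples of $\frac{\delta\LambdaM}{\delta\zeta_\zeind}$ and $d\big(\frac{\delta\LambdaM}{\delta\zeta_\zeind}\big)$. The sign bookkeeping you flag is handled correctly throughout.
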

\begin{proof}
From (\ref{LagForm_delta_frac_ea_res})
\begin{align*}
i_v \zeta_\zeind\wedge \pfrac{\LambdaM}{\zeta_\zeind}&
+
i_v d\zeta_\zeind\wedge \pfrac{\LambdaM}{(d\zeta_\zeind)}
\\
&=
(-1)^{p_\zeind} i_v \zeta_\zeind\wedge d\Big(\pfrac{\LambdaM}{(d\zeta_\zeind)}\Big)
+
i_v d\zeta_\zeind\wedge \pfrac{\LambdaM}{(d\zeta_\zeind)}
\\&=
-
d\Big( i_v \zeta_\zeind\wedge \pfrac{\LambdaM}{(d\zeta_\zeind)}\Big)
+
d i_v \zeta_\zeind\wedge \pfrac{\LambdaM}{(d\zeta_\zeind)}
+
i_v d\zeta_\zeind\wedge \pfrac{\LambdaM}{(d\zeta_\zeind)}
\\&=
-
d\Big( i_v \zeta_\zeind\wedge \pfrac{\LambdaM}{(d\zeta_\zeind)}\Big)
+
\Lie_v \zeta_\zeind\wedge \pfrac{\LambdaM}{(d\zeta_\zeind)}
\end{align*}
Hence (\ref{AB_tau_Mat_Shell}) follows from (\ref{AB_tau_Mat}).
Likewise from (\ref{LagForm_def_delta_frac_ea}),
(\ref{LagForm_dynamical}) and $d^2=0$
\begin{align*}
(-1)^{p_\zeind} i_v \zeta_\zeind\wedge &
d\Big(\pfrac{\LambdaM}{\zeta_\zeind}\Big)
+
i_v d\zeta_\zeind\wedge\pfrac{\LambdaM}{\zeta_\zeind}
+
(-1)^{p_\zeind+1} i_v d\zeta_\zeind\wedge d\Big(\pfrac{\LambdaM}{(d\zeta_\zeind)}\Big)
\\&=
(-1)^{p_\zeind} i_v \zeta_\zeind\wedge d\Big(\frac{\delta\LambdaM}{\delta\zeta_\zeind}\Big) +
i_v d\zeta_\zeind\wedge\bigg(
\pfrac{\LambdaM}{\zeta_\zeind}
+
d\Big(\pfrac{\LambdaM}{(d\zeta_\zeind)}\Big)\bigg)
\\&=
(-1)^{p_\zeind} i_v \zeta_\zeind\wedge d\Big(\frac{\delta\LambdaM}{\delta\zeta_\zeind}\Big) +
i_v d\zeta_\zeind\wedge\frac{\delta\LambdaM}{\delta\zeta_\zeind}
\\&= 0
\end{align*}
hence (\ref{AB_Bg_D_tau_Met_Shell}) follows from (\ref{AB_Bg_D_tau_Met})
\end{proof}


\begin{lemma}
\label{lm_L_fv_alpha}
\begin{align}
\Lie_{(fv)} \alpha = f\, \Lie_v \alpha + df\wedge i_v\alpha
\quadtext{for}
f\in\Gamma\Lambda^0M\,,
v\in\Gamma TM
\textup{ and }
\alpha\in\Gamma\Lambda^pM
\label{lms_L_fv_alpha}
\end{align}
\end{lemma}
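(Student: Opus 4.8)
The plan is to derive (\ref{lms_L_fv_alpha}) directly from Cartan's identity $\Lie_w = i_w d + d\, i_w$, which is already used freely in the excerpt, together with two elementary structural facts about the operators involved: the interior contraction $i_w$ is $f$-linear in the vector argument, so $i_{fv}\alpha = f\, i_v\alpha$, and the exterior derivative satisfies the Leibniz rule, so that for a scalar $f\in\Gamma\Lambda^0 M$ and any form $\beta$ one has $d(f\beta) = df\wedge\beta + f\, d\beta$ with no extra sign (since $f$ has degree $0$).

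First I would expand $\Lie_{fv}\alpha = i_{fv} d\alpha + d\, i_{fv}\alpha$. The first term is $f\, i_v d\alpha$ by $f$-linearity of $i_{(-)}$. For the second term, write $i_{fv}\alpha = f\, i_v\alpha$ and apply the Leibniz rule to obtain $d(f\, i_v\alpha) = df\wedge i_v\alpha + f\, d\, i_v\alpha$. Adding the two contributions and grouping the terms with a factor $f$ gives $f\,(i_v d\alpha + d\, i_v\alpha) + df\wedge i_v\alpha$, and recognising $i_v d\alpha + d\, i_v\alpha = \Lie_v\alpha$ by Cartan's identity yields exactly $\Lie_{fv}\alpha = f\,\Lie_v\alpha + df\wedge i_v\alpha$.

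There is no substantive obstacle here: the statement is a one-line consequence of Cartan's formula and the $f$-linearity of $i_{(-)}$, and the only point requiring a moment's care is bookkeeping the (trivial) signs in the Leibniz step, which are all positive because $df$ is placed to the left of $i_v\alpha$ exactly as produced by $d(f\,i_v\alpha)$.
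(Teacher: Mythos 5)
Your proof is correct and is essentially identical to the paper's own argument: both expand $\Lie_{fv}\alpha$ via Cartan's identity, use $f$-linearity of the interior contraction, and apply the Leibniz rule to $d(f\,i_v\alpha)$ before regrouping with Cartan's identity. No gaps; nothing further is needed.
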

\begin{proof}
From Cartan's identity
\begin{align*}
\Lie_{(fv)} \alpha
&=
i_{(fv)} d \alpha
+
d i_{(fv)} d \alpha
=
f i_{v} d \alpha
+
d (f\, i_{v}\alpha)
\\&=
f i_{v} d \alpha
+
df\wedge i_{v}\alpha
+
f d i_{v}\alpha
=
f\, \Lie_v \alpha + df\wedge i_v\alpha
\end{align*}
\end{proof}

\begin{lemma}
\label{lm_T_tau_symmetry}
\textup{(\ref{Tensors_symm})} and \textup{(\ref{Tensors_symm_alt})}
are equivalent if $\Omega\in\Gamma\Lambda^n M$ is non-vanishing.

Furthermore if \textup{(\ref{Tensors_symm_alt})} is true for one
non-vanishing $\Omega$ it is true for all non-vanishing
$\Omega\in\Gamma\Lambda^n M$.
\end{lemma}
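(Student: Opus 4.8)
The plan is to reduce both sides of (\ref{Tensors_symm_alt}) to scalar multiples of $\Omega$ and then cancel $\Omega$. The key elementary fact is that for any $w\in\Gamma TM$ and any $\beta\in\Gamma\Lambda^1 M$ one has $\beta\wedge i_w\Omega = \beta(w)\,\Omega$ whenever $\Omega\in\Gamma\Lambda^n M$. This follows by applying the graded derivation $i_w$ to the identically vanishing $(n+1)$-form $\beta\wedge\Omega$:
\begin{align*}
0 = i_w(\beta\wedge\Omega) = \beta(w)\,\Omega - \beta\wedge i_w\Omega .
\end{align*}

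First I would apply this with $w=\Tden(u)$ and $\beta=\dual{v}$ to obtain $\dual{v}\wedge i_{\Tden(u)}\Omega = \dual{v}\big(\Tden(u)\big)\,\Omega$, and then use $\dual{v}\big(\Tden(u)\big) = g\big(v,\Tden(u)\big) = g\big(\Tden(u),v\big) = T(u,v)$, invoking the symmetry of $g$ and the definition $T(u,v)=g(\Tden(u),v)$. Interchanging $u$ and $v$ gives $\dual{u}\wedge i_{\Tden(v)}\Omega = T(v,u)\,\Omega$. Hence (\ref{Tensors_symm_alt}) is equivalent to the statement that $\big(T(u,v)-T(v,u)\big)\,\Omega = 0$ for all $u,v\in\Gamma TM$. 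Since $\Omega$ is nowhere vanishing, at each point of $M$ the scalar coefficient must vanish, so this is equivalent to $T(u,v)=T(v,u)$, which is (\ref{Tensors_symm}). This establishes the equivalence of (\ref{Tensors_symm}) and (\ref{Tensors_symm_alt}) for any fixed non-vanishing $\Omega$.

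For the last assertion I would observe that the scalar condition (\ref{Tensors_symm}) makes no reference to $\Omega$ at all: if (\ref{Tensors_symm_alt}) holds for one non-vanishing $\Omega\in\Gamma\Lambda^n M$, then by the equivalence just proved (\ref{Tensors_symm}) holds, and running the equivalence in the other direction with any other non-vanishing $\Omega'$ yields (\ref{Tensors_symm_alt}) for $\Omega'$. There is no genuine obstacle in this argument; the only step requiring care is the pointwise cancellation of $\Omega$, which is legitimate precisely because $\Omega$ is assumed nowhere vanishing, so that a top-form equation $f\,\Omega = 0$ forces $f=0$ at every point of $M$.
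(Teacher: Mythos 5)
Your proof is correct and follows essentially the same route as the paper's: both reduce $\dual{v}\wedge i_{\Tden(u)}\Omega$ to $T(u,v)\,\Omega$ via the identity $\beta\wedge i_{w}\Omega=\beta(w)\,\Omega$ and then cancel the nowhere-vanishing $\Omega$ pointwise. The only cosmetic difference is in the final assertion, where the paper rescales $\Omega$ directly by a nowhere-vanishing factor $J$, while you route through the $\Omega$-independence of (\ref{Tensors_symm}); both are immediate.
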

\begin{proof}
\begin{align*}
T(u,v)\Omega
=
g(\Tden(u),v)\Omega
=
i_{\Tden(u)} \dual{v} \, \Omega
=
\dual{v} \wedge i_{\Tden(u)} \Omega
\end{align*}
hence {(\ref{Tensors_symm})} and {(\ref{Tensors_symm_alt})}
are equivalent if $\Omega\in\Gamma\Lambda^n M$  is non-vanishing.

Substituting  $\Omega\to\hat\Omega=J\Omega$ where $J\in\Gamma\Lambda^0
M$ is non-vanishing then
\begin{align*}
\dual{v} \wedge i_{\Tden(u)} \hat\Omega-\dual{u} \wedge i_{\Tden(v)}
\hat\Omega
=
J\big(\dual{v} \wedge i_{\Tden(u)} \Omega-\dual{u} \wedge i_{\Tden(v)}
\Omega\big)
=
0
\end{align*}

\end{proof}

\begin{lemma}
\label{lm_tau_Tden}
Given a map $\tau:\Gamma TM\to \Gamma\Lambda^{(n-1)} M$,
$\tau:v\mapsto \tau_v$, let $\Tden:\Gamma TM\to\Gamma TM$ be defined
with respect to the non vanishing top form $\Omega$
via $\tau_v=i_{\Tden}\Omega$.


Using a coordinate frame $(x^1,\ldots,x^n)$ with
$\Tden(v)^a= dx^a(\Tden(v))$ and $\Omega=dx^1\wedge\cdots\wedge dx^n$ implies
\begin{align}
dx^a\wedge\tau_v=\Tden(v)^a \Omega
\label{lms_T_a_dxa}
\end{align}
\begin{align}
\Tden(v)^a = i_{\partial_n}\cdots i_{\partial_1} (dx^a\wedge\tau_v)
\label{lms_T_a_dxa_alt}
\end{align}
and
\begin{align}
d \tau_v = \partial_a(\Tden(v)^a) \Omega
\label{lms_d_tau_v}
\end{align}

Using an orthonormal coframe $\Set{e^1,\ldots,e^n}$ with
$\Tden(v)^a= e^a(\Tden(v))$ and $\Omega=\star1$ then
\begin{align}
e^a\wedge\tau_v=\Tden(v)^a \Omega
\label{lms_T_a_ea}
\end{align}
and
\begin{align}
\Tden(v)^a = \star^{-1} (dx^a\wedge\tau_v)
\label{lms_T_a_ea_alt}
\end{align}
\end{lemma}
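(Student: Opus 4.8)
The plan is to reduce every assertion to one elementary identity. Since $M$ is $n$-dimensional, the $(n+1)$-form $dx^a\wedge\Omega$ vanishes, so applying $i_{\partial_b}$ and using that $i_{\partial_b}$ is an antiderivation gives
\[
0=i_{\partial_b}(dx^a\wedge\Omega)=\delta^a_b\,\Omega-dx^a\wedge i_{\partial_b}\Omega,
\qquad\text{i.e.}\qquad
dx^a\wedge i_{\partial_b}\Omega=\delta^a_b\,\Omega .
\]
Writing $\Tden(v)=\Tden(v)^b\,\partial_b$ so that $\tau_v=i_{\Tden(v)}\Omega=\Tden(v)^b\,i_{\partial_b}\Omega$, wedging with $dx^a$ and using this identity yields (\ref{lms_T_a_dxa}) at once. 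All maps in sight are $f$-linear in $v$, so there is no loss in working at a fixed point.

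For (\ref{lms_T_a_dxa_alt}) I would apply $i_{\partial_n}\cdots i_{\partial_1}$ to both sides of (\ref{lms_T_a_dxa}). A one-line induction gives $i_{\partial_n}\cdots i_{\partial_1}\Omega=1$ — each $i_{\partial_k}$ simply deletes the leftmost surviving factor $dx^k$, with no sign — and since $\Tden(v)^a$ is a $0$-form it commutes through the contractions, so $\Tden(v)^a=i_{\partial_n}\cdots i_{\partial_1}(dx^a\wedge\tau_v)$.

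For (\ref{lms_d_tau_v}) I would compute $d\tau_v=d\!\big(\Tden(v)^b\,i_{\partial_b}\Omega\big)=d(\Tden(v)^b)\wedge i_{\partial_b}\Omega+\Tden(v)^b\,d(i_{\partial_b}\Omega)$. The second term is zero because $d(i_{\partial_b}\Omega)=\Lie_{\partial_b}\Omega-i_{\partial_b}d\Omega$ with $d\Omega=0$ and $\Lie_{\partial_b}\Omega=0$ for $\Omega=dx^1\wedge\cdots\wedge dx^n$ (each $\Lie_{\partial_b}dx^a=d\delta^a_b=0$). In the first term $d(\Tden(v)^b)=\partial_c(\Tden(v)^b)\,dx^c$, and $dx^c\wedge i_{\partial_b}\Omega=\delta^c_b\,\Omega$ by the identity above, giving $d\tau_v=\partial_b(\Tden(v)^b)\,\Omega$.

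The orthonormal-coframe statements (\ref{lms_T_a_ea})--(\ref{lms_T_a_ea_alt}) follow the same pattern: with $\Omega=\star 1=e^1\wedge\cdots\wedge e^n$ and $\{X_a\}$ the frame dual to $\{e^a\}$, the $(n+1)$-form argument gives $e^a\wedge i_{X_b}\Omega=\delta^a_b\,\Omega$, hence $e^a\wedge\tau_v=\Tden(v)^a\,\Omega$; applying $\star^{-1}$ (for which $\star^{-1}\star=\mathrm{id}$, so the Lorentzian sign of $\star\star$ is irrelevant) and using $\Omega=\star 1$ gives (\ref{lms_T_a_ea_alt}). The only places that warrant care are the bookkeeping in the iterated contraction $i_{\partial_n}\cdots i_{\partial_1}$ and the vanishing of the $\Lie_{\partial_b}\Omega$ term; neither is a genuine obstacle, so the lemma is essentially a sequence of applications of the single identity $dx^a\wedge i_{\partial_b}\Omega=\delta^a_b\,\Omega$.
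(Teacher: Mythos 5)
Your proof is correct and takes essentially the same route as the paper's: both arguments reduce every claim to the single identity $dx^a\wedge i_{\partial_b}\Omega=\delta^a_b\,\Omega$ (and its orthonormal analogue) and then apply it to $\tau_v=\Tden(v)^b\,i_{\partial_b}\Omega$, with the Leibniz rule and $d(i_{\partial_b}\Omega)=0$ handling (\ref{lms_d_tau_v}). The only cosmetic difference is that you derive the key identity from $i_{\partial_b}(dx^a\wedge\Omega)=0$ rather than by the paper's explicit sign-counting expansion of $\Omega$.
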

\begin{proof}
Using the coordinate frame $\tau_v=i_{\Tden(v)}\Omega=\Tden(v)^a\,
i_{\partial_a} \Omega$
\begin{align*}
dx^a\wedge\tau_v
=
\Tden(v)^b \, dx^a\wedge  i_{\partial_b}\Omega
=
\Tden(v)^b \, \delta^a_b \Omega
=
\Tden(v)^a \Omega
\end{align*}
since clearly
\begin{align*}
dx^a\wedge  i_{\partial_b}\Omega
&=
dx^a\wedge  i_{\partial_b} dx^1\wedge\cdots\wedge dx^n
\\&=
(-1)^{b-1} dx^a\wedge
dx^1\wedge\cdots\wedge dx^{b-1}\wedge dx^{b+1}\wedge\cdots\wedge dx^n
\\&=
 dx^1\wedge\cdots\wedge dx^{b-1}\wedge dx^a\wedge
dx^{b+1}\wedge\cdots\wedge dx^n
=
\delta^a_b dx^1\wedge\cdots\wedge dx^n
\end{align*}
Hence (\ref{lms_T_a_dxa}) and since $i_{\partial_n}\cdots
i_{\partial_1} dx^1\wedge\cdots\wedge dx^n=1$
(\ref{lms_T_a_dxa_alt}).
Also
\begin{align*}
d \tau_v
&=
d(i_{\Tden(v)} \Omega)
=
d( \Tden(v)^a i_{\partial_a} \Omega)
=
d( \Tden(v)^a) \wedge i_{\partial_a} \Omega +
\Tden(v)^a \, d(i_{\partial_a} \Omega)
\\&=
\partial_b( \Tden(v)^a) \, dx^b \wedge i_{\partial_a} \Omega
=
\partial_a( \Tden(v)^a) \Omega
\end{align*}
Equation (\ref{lms_T_a_ea}) is proved similar to
(\ref{lms_T_a_dxa}) and (\ref{lms_T_a_ea_alt}) is trivial.
\end{proof}

\begin{lemma}
\label{lm_tau_met}
\textup{(\ref{AB_tau_Ein})} holds
\end{lemma}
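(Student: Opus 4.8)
The plan is to invoke the uniqueness clause of lemma \ref{lm_LvZ_dA_B}. By (\ref{AB_AB_GatZ}) one already knows that the left side of that identity, namely $\GatZ{g}\TprodL\Lie_v g$, decomposes as $d\big(\Amap_v(\GatZ{g},g)\big)+\Bmap_v(\GatZ{g},g)$ with both summands $f$-linear in $v$, and that among \emph{all} such decompositions the ``$d$-part'' is determined. So it suffices to produce \emph{one} identity $\GatZ{g}\TprodL\Lie_v g=d\tau^\Met_v+\beta_v$ in which $\tau^\Met_v$ and some top-form $\beta_v$ are each $f$-linear in $v$; this will force $\Amap_v(\GatZ{g},g)=\tau^\Met_v$.

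First I would put $\GatZ{g}$ and $\tau^\Met_v$ into convenient closed form. Writing $\GatZ{g}=(\star 1)\otimes Q$ with $Q\in\Gamma\tenbun^{[\Tvec,\Tvec]}M$ (and $Q$ symmetric, since $g$ is a symmetric tensor), and letting $Q_v\in\Gamma TM$ be the vector obtained by contracting $Q$ on one slot with $\dual{v}$ (so $Q_v$ is $f$-linear in $v$), one reads off from (\ref{AB_def_T_Ein_alt}) and (\ref{Tensors_Omega_Phi_Y}) that $T^\Met(v,w)=2\star^{-1}\big((Q\Tprod(\dual{v}\otimes\dual{w}))\star 1\big)=2\,g(Q_v,w)$, hence $T^\Met(v,-)=2\dual{Q_v}$. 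Using (\ref{AB_def_tau_Ein}) and the standard identity $\star\dual{W}=i_W\star 1$ this gives
$$\tau^\Met_v=\star\big(T^\Met(v,-)\big)=2\,i_{Q_v}\star 1,$$
which is manifestly $f$-linear in $v$, and since $d\star 1=0$,
$$d\tau^\Met_v=2\,d\,i_{Q_v}\star 1=2\,\Lie_{Q_v}\star 1.$$

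Next I would compare this with $\GatZ{g}\TprodL\Lie_v g=(Q\Tprod\Lie_v g)\,\star 1$. Introducing any torsion-free metric-compatible connection $\nabla$ as an auxiliary device and using $(\Lie_v g)_{ab}=\nabla_a\dual{v}_b+\nabla_b\dual{v}_a$, $\Lie_W\star 1=(\nabla_a W^a)\star 1$ and the symmetry of $Q$, a short computation gives $Q\Tprod\Lie_v g=2Q^{ab}\nabla_a\dual{v}_b$ and $\Lie_{Q_v}\star 1=\big(Q^{ab}\nabla_a\dual{v}_b+(\nabla_a Q^{ba})\dual{v}_b\big)\star 1$, so that
$$\GatZ{g}\TprodL\Lie_v g-d\tau^\Met_v=-2(\nabla_a Q^{ba})\,\dual{v}_b\,\star 1=:\beta_v.$$
Since $\beta_v$ contains no derivatives of $v$ it is $f$-linear in $v$; thus $\GatZ{g}\TprodL\Lie_v g=d\tau^\Met_v+\beta_v$ with both pieces $f$-linear in $v$, and the uniqueness part of lemma \ref{lm_LvZ_dA_B} yields $\tau^\Met_v=\Amap_v(\GatZ{g},g)$, i.e. (\ref{AB_tau_Ein}). (As a by-product $\beta_v=\Bmap_v(\GatZ{g},g)$, which dovetails with lemma \ref{lm_Dtau_met}.)

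The fiddliest point is the cancellation in the third paragraph: one must keep track of which slot of $Q$ is contracted with $\dual{v}$ in $Q_v$ and apply the symmetry of $Q$ at the right moment so that the $\nabla v$ terms in $Q\Tprod\Lie_v g$ and in $2\Lie_{Q_v}\star 1$ cancel exactly, leaving only the algebraic-in-$v$ remainder $\beta_v$. A purely algebraic, connection-free alternative is to pick a dual frame $\{X_a\},\{e^a\}$, write $g=g_{ab}e^a\otimes e^b$, and expand $\Amap_v\big((\star 1)\otimes Q,g\big)$ directly via the Leibniz rule (\ref{AB_def_AB_Ten}) together with the base formula (\ref{AB_def_AB_alpha}) — noting that it is the $1$-form case (\ref{AB_def_AB_alpha}), not the vector case (\ref{AB_def_AB_u}), that is relevant since the tensor slots of $g$ are covectors — and then using $X_a\Tprod e^c=\delta^c_a$, $i_{X_b}\star 1=\star\dual{X_b}$ and the symmetry of $Q$ to obtain $2Q^{ab}\dual{v}_a\star\dual{X_b}=\tau^\Met_v$ again.
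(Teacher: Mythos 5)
Your proof is correct, but it takes a genuinely different route from the paper's. The paper proves (\ref{AB_tau_Ein}) by brute force: it writes $2\GatZ{g}=T^{ab}(\star 1)\otimes X_a\otimes X_b$ in a $g$-orthonormal frame, writes $g=e_c\otimes e^c$, and expands $\Amap_v\big(\GatZ{g},g\big)$ directly through the Leibniz rule (\ref{AB_def_AB_Ten}) and the $1$-form base case (\ref{AB_def_AB_alpha}), using the symmetry $T^{ab}=T^{ba}$ to collapse the two resulting terms into $2v^cT_{ac}\star e^a=2\tau^\Met_v$ --- essentially the ``connection-free alternative'' you sketch in your final paragraph. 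Your main argument instead exhibits one explicit decomposition $\GatZ{g}\TprodL\Lie_v g=d\tau^\Met_v+\beta_v$ with both pieces $f$-linear in $v$ (via the Levi-Civita connection, $(\Lie_vg)_{ab}=\nabla_a\dual{v}_b+\nabla_b\dual{v}_a$ and $d\,i_{Q_v}\!\star 1=(\nabla_aQ_v^a)\star 1$) and then invokes the uniqueness clause of lemma \ref{lm_LvZ_dA_B}; since $\tau^\Met_v=2\,i_{Q_v}\!\star 1$ and $\beta_v=-2(\nabla_aQ^{ab})\dual{v}_b\star 1$ contain no derivatives of $v$, the hypotheses of that clause are met and the conclusion follows. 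What your route buys is that it bypasses the inductive unwinding of $\Amap$ and $\Bmap$ entirely and simultaneously identifies $\Bmap_v\big(\GatZ{g},g\big)=\beta_v=-(\nabla_aT^{ab})\dual{v}_b\star 1$, i.e.\ it proves lemma \ref{lm_Dtau_met} in the same stroke; the cost is the auxiliary connection and the (harmless, and shared with the paper) convention that $\GatZ{g}=(\star 1)\otimes Q$ with $Q$ symmetric. The index bookkeeping you flag as delicate does work out: $Q\Tprod\Lie_vg=2Q^{ab}\nabla_a\dual{v}_b$ cancels the derivative-of-$v$ term in $2\Lie_{Q_v}\!\star 1$ exactly because $Q$ is symmetric.
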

\begin{proof}
Using a $g$-orthonormal frame $X_a$ and dual  coframe $e^a$,  set
$\eta_{ab}=g(X_a,X_b)\in\Real$ and
$T^\Met=T_{ab} e^a\otimes e^b$.  Then from (\ref{AB_def_T_Ein_alt})
\begin{align*}
T_{ab} \star 1 =
\star T^\Met(X_a,X_b)
= 2\, \GatZ{g}\TprodL (e_a\otimes e_b)
\end{align*}
hence
\begin{align*}
2\GatZ{g}
=
T^{ab} (\star1)\otimes X_a\otimes X_b
\end{align*}
From the algebraic symmetry of $T^\Met$, $T^{ab}=T^{ba}$.

From (\ref{AB_def_tau_Ein}) with $v=v^a X_a$
\begin{align*}
\tau^\Met_v=
\star (T^\Met({v},-))
=
\star  \Big((T_{ab} e^a\otimes e^b)({v},-)\Big)
=
v^a T_{ab} \star  e^b
\end{align*}
thus using (\ref{AB_def_AB_Ten}) and (\ref{AB_def_AB_alpha})
with $\Omega=\star1$ yields
\begin{align*}
2\Amap_v\Big(&\GatZ{g},g\Big)
=
\Amap_v\Big(T^{ab} (\star1)\otimes X_a\otimes X_b,
e_c\otimes e^c\Big)
\\&=
\Amap_v\Big((X_b\Tprod e^c)(\star1)\otimes (T^{ab} X_a),e_c\Big) +
\Amap_v\Big((T^{ab} X_a\Tprod e_c)(\star1)\otimes X_b,e^c\Big)
\\&=
\Amap_v\Big((\star1)\otimes (T^{ac} X_a),e_c\Big) +
\Amap_v\Big(\eta_{ac}T^{ab} (\star1)\otimes X_b,e^c\Big)
\\&=
e_c(v) i_{T^{ac} X_a} \star 1 +
e^c(v) T^a{}_c i_{X_a} \star 1
=
v_c T^{ac} \star e_a + v^c T^{a}{}_c \star e_a
\\&=
2v^c T_{ac} \star e^a
=
2\tau_v
\end{align*}
\end{proof}

\begin{lemma}
\label{lm_Dtau_met}
\textup{(\ref{AB_Dtau_Ein})} holds
\end{lemma}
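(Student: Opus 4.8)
The plan is to strip $(D\tau^\Met)_v$ down to an ordinary exterior derivative plus a ``frame'' term, to evaluate that exterior derivative by combining lemma \ref{lm_tau_met} with the decomposition (\ref{AB_AB_GatZ}), and finally to identify the frame term with $\GatZ{g}\TprodL\Lie_v g$, so that everything cancels except the asserted $-\Bmap_v(\GatZ{g},g)$.

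First I would prove the purely formal identity
\[
(D\tau^\Met)_v \;=\; d\tau^\Met_v \;-\; \Lie_v e^a\wedge\tau^\Met_{X_a},
\]
valid in any local coframe $\{e^a\}$ with dual frame $\{X_a\}$. This comes straight out of the definition (\ref{AB_def_Dtau}): since $\tau^\Met$ is $f$-linear we have $\tau^\Met_v=(i_v e^a)\,\tau^\Met_{X_a}$, so the Leibniz rule gives $(i_v e^a)\,d\tau^\Met_{X_a}=d\tau^\Met_v-d(i_v e^a)\wedge\tau^\Met_{X_a}$; substituting this into (\ref{AB_def_Dtau}) and using Cartan's identity $\Lie_v e^a=d\,i_v e^a+i_v de^a$ gives the displayed formula. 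Next, by lemma \ref{lm_tau_met} we have $\tau^\Met_v=\Amap_v(\GatZ{g},g)$, so (\ref{AB_AB_GatZ}) with $\Zalt_\ZZind=g$ reads $\GatZ{g}\TprodL\Lie_v g = d\tau^\Met_v + \Bmap_v(\GatZ{g},g)$, i.e.
\[
d\tau^\Met_v \;=\; \GatZ{g}\TprodL\Lie_v g \;-\; \Bmap_v(\GatZ{g},g).
\]
Putting the two displays together, the lemma is reduced to the single identity $\Lie_v e^a\wedge\tau^\Met_{X_a} = \GatZ{g}\TprodL\Lie_v g$.

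This remaining identity is the heart of the matter, and I would verify it in a $g$-orthonormal coframe, so that $g=\eta_{ab}\,e^a\otimes e^b$ with $\eta_{ab}$ constant and $T^\Met=T_{ab}\,e^a\otimes e^b$. From the calculation in the proof of lemma \ref{lm_tau_met} we may take $2\,\GatZ{g}=T^{ab}\,(\star 1)\otimes X_a\otimes X_b$ with $T^{ab}=T^{ba}$ and $\tau^\Met_{X_a}=T_{ab}\star e^b$. Then $\GatZ{g}\TprodL\Lie_v g=\tfrac12 T^{ab}(\Lie_v g)(X_a,X_b)\,\star 1$, while the Leibniz rule applied to $g=\eta_{ab}e^a\otimes e^b$ (the coefficients being constant) gives $(\Lie_v g)(X_a,X_b)=\eta_{cb}(\Lie_v e^c)(X_a)+\eta_{ac}(\Lie_v e^c)(X_b)$. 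On the other side, expanding $\Lie_v e^a$ in the coframe and using $e^c\wedge\star e^b=\eta^{cb}\star 1$ gives $\Lie_v e^a\wedge\tau^\Met_{X_a}=T_{ab}\,(\Lie_v e^a)(X_c)\,\eta^{cb}\,\star 1$. After raising and lowering with $\eta$, and using the symmetry $T^{ab}=T^{ba}$ to fuse the two terms of $(\Lie_v g)(X_a,X_b)$ into one, both sides equal $T_a{}^b(\Lie_v e^a)(X_b)\,\star 1$; equivalently, one can differentiate $\dual{g}\circ g=\text{Id}$ to get $\dual{g}(\Lie_v e^a,e^b)+\dual{g}(e^a,\Lie_v e^b)=\eta^{ac}\eta^{bd}(\Lie_v g)_{cd}$ and feed this into the left-hand side.

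Combining the three ingredients then gives $(D\tau^\Met)_v = \big(\GatZ{g}\TprodL\Lie_v g - \Bmap_v(\GatZ{g},g)\big) - \GatZ{g}\TprodL\Lie_v g = -\Bmap_v(\GatZ{g},g)$, which is (\ref{AB_Dtau_Ein}). The only place where anything must actually be computed is the index bookkeeping in the third paragraph, and it goes through precisely because $T^\Met$ is algebraically symmetric — a consequence of the symmetry of $g$ — which is exactly what permits the representation $2\GatZ{g}=T^{ab}(\star 1)\otimes X_a\otimes X_b$ with symmetric $T^{ab}$; without symmetry the two terms of $(\Lie_v g)(X_a,X_b)$ would not combine and the identity would fail.
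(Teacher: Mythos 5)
Your proof is correct, but it follows a genuinely different route from the paper's. The paper computes $\Bmap_v\big(\GatZ{g},g\big)$ head-on: it writes $2\GatZ{g}=T^{ab}(\star 1)\otimes X_a\otimes X_b$ in an orthonormal frame, unpacks the Leibnitz rule (\ref{AB_def_AB_Ten}) and the $1$-form rule (\ref{AB_def_AB_alpha}) for $\Bmap_v$, uses the symmetry $T^{ab}=T^{ba}$ to fuse the two resulting terms, and recognises the outcome as $-2(D\tau^\Met)_v$ directly from the definition (\ref{AB_def_Dtau}). You never open up the definition of $\Bmap_v$ at all: you combine the already-proven lemma \ref{lm_tau_met} with the general decomposition (\ref{AB_AB_GatZ}) (i.e.\ lemma \ref{lm_LvZ_dA_B}) to get $d\tau^\Met_v=\GatZ{g}\TprodL\Lie_v g-\Bmap_v\big(\GatZ{g},g\big)$, rewrite (\ref{AB_def_Dtau}) as $(D\tau^\Met)_v=d\tau^\Met_v-\Lie_v e^a\wedge\tau^\Met_{X_a}$ using $f$-linearity of $\tau^\Met$ and Cartan's identity, and thereby reduce the lemma to the single identity $\Lie_v e^a\wedge\tau^\Met_{X_a}=\GatZ{g}\TprodL\Lie_v g$, which you verify by an orthonormal-frame index computation that, like the paper's, hinges on the algebraic symmetry of $T^\Met$ (both sides reduce to $T^a{}_c\,(\Lie_v e^c)(X_a)\,\star 1$). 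Your version buys a cleaner conceptual statement --- the failure of $\tau^\Met_v$ to be closed is exactly $\GatZ{g}\TprodL\Lie_v g$ minus a frame-correction term --- and reuses the machinery of lemma \ref{lm_LvZ_dA_B} rather than effectively re-deriving a special case of it; the paper's version is more self-contained in that it exercises the inductive definition of $\Bmap_v$ explicitly. All three of your steps check out.
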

\begin{proof}
  Again using an orthonormal frame and setting $T^\Met=T_{ab}
  e^a\otimes e^b$ and $v=v^a X_a$ as in lemma \ref{lm_tau_met} and
  using (\ref{AB_def_AB_Ten}), (\ref{AB_def_AB_alpha}), the algebraic
  symmetry of $T^\Met$ and (\ref{AB_def_Dtau}) yields
\begin{align*}
2\Bmap_v\Big(&\GatZ{g},g\Big)
=
\Bmap_v\Big(T^{ab} (\star1)\otimes X_a\otimes X_b,
e_c\otimes e^c\Big)
\\&=
\Bmap_v\Big((X_b\Tprod e^c)(\star1)\otimes (T^{ab} X_a),e_c\Big) +
\Bmap_v\Big((T^{ab} X_a\Tprod e_c)(\star1)\otimes X_b,e^c\Big)
\\&=
\Bmap_v\Big((\star1)\otimes (T^{ac} X_a),e_c\Big) +
\Bmap_v\Big(T_c{}^{b} (\star1)\otimes X_b,e^c\Big)
\\&=
(-1)^{n-1} T^{ac} (\star e_a) \wedge  i_v d e_c
-
d(T^{ac} \star e_a ) \wedge i_v e_c
\\&\qquad+
(-1)^{n-1} T_c{}^{b} (\star e_b)\wedge i_v d e^c
-
d(T_c{}^{b} \star e_b ) \wedge i_v e^c
\\&=
2\Big(
i_v d e^c \wedge  T_{ac} (\star e^a)
-
d(T_{ac} \star e^a ) \wedge i_v e^c\Big)
\\&=
-2\Big(
v^c d(\tau^\Met_{X_c})
-
i_v d e^c \wedge  \tau^\Met_{X_c}\Big)
=
-2(D\tau^\Met)_v
\end{align*}
\end{proof}

\def\Inc{{\textup{Inc}}}
\def\Free{{\textup{Free}}}
\def\len{{\textup{len}}}
\def\Ihat{{\hat I}}
\def\Jhat{{\hat J}}
\def\Vmul{{\cal V}}

\label{ch_Id}

The remainder of this appendix is to derive
(\ref{Tensors_T_Neot_coords}) in the text.  Since it is convenient not to use the
summation convention for multiindex summations, the following lemmas
require the introduction of some new notation.

Given positive integers $n$ and $r$, where $r\le n$ then denote
$\Inc(r,n)$ as the  set of lists of $r$ elements, which are increasing
\begin{align*}
  \Inc(r,n)=\Set{I:\Set{1,\ldots,r}\to \Set{1,\ldots,n}\,\big|\,
    I_1<I_2<\cdots<I_r}
\end{align*}
We use the notation $I\in\Inc(r,n)$ with $\len(I)=r$. For example
$\Inc(2,3)=\Set{[1,2],[1,3],[2,3]}$.
The sum over the set $\Inc(r,n)$ is then written
\begin{align*}
\sum_{I\in\Inc(r,n)}
=
\sum_{1\le I_1<I_2<\cdots<I_r\le n}
=
\sum_{I_1=1}^n \sum_{I_2=I_1+1}^n\cdots \sum_{I_r=I_{r-1}+1}^n
\end{align*}

Given positive integers $n$ and $r$, $\Free(r,n)$ is the set of lists of $r$ elements not necessarily increasing
\begin{align*}
\Free(r,n)=\Set{\Ihat:\Set{1,\ldots,r}\to \Set{1,\ldots,n}}
\end{align*}
We use the notation $\Ihat\in\Free(r,n)$. For example
\begin{align*}
\Free(2,3)=\Set{[1,1],[1,2],[1,3],[2,1],[2,2],[2,3],[3,1],[3,2],[3,3]}
\end{align*}
The sum over the set $\Free(r,n)$ is then written
\begin{align*}
\sum_{I\in\Free(r,n)}
=
\sum_{1\le I_1,\ldots,I_r\le n}
=
\sum_{I_1=1}^n \sum_{I_2=1}^n\cdots \sum_{I_r=1}^n
\end{align*}

Let $\Set{e^a,a=1,\ldots,n}$ be a coframe (not necessarily
orthonormal) and $X_a$ its dual. Let
\begin{align*}
e^I=e^{I_1}\wedge\cdots\wedge e^{I_r}
\end{align*}
and let $i_a=i_{X_a}$ and
\begin{align}
i_I=i_{I_r}\cdots i_{I_1}
\label{Id_def_iI}
\end{align}
Clearly for $\alpha_I$ antisymmetric then
\begin{align}
\sum_{I\in\Inc(r,n)} \alpha_I e^I
&=
\frac{1}{r!}\sum_{\Ihat\in\Free(r,n)} \alpha_I e^I
\label{Id_compare_sums}
\end{align}

For free list $\Ihat\in\Free(r,n)$ let
$\epsilon(\Ihat)$ is the signature of $\Ihat$ and $\Inc(\Ihat)$
is the increasing form of $\Ihat$. Here $\epsilon(\Ihat)=0$ if $\Ihat$
contains repeated indices.
Thus
\begin{align}
e^{\Ihat}=\epsilon(\Ihat) e^{\Inc(\Ihat)}
\label{Id_e_Ihat_eps}
\end{align}
For free lists $\Ihat,\Jhat\in\Free(r,n)$ let
\begin{align*}
\delta_\Jhat^\Ihat
=
\begin{cases}
\epsilon(\Ihat)\epsilon(\Jhat)
&\quadtext{if}
\Inc(\Ihat)=\Inc(\Jhat)
\\
0 &
\quadtext{Otherwise}
\end{cases}
\end{align*}
Thus for increasing lists $I,J\in\Inc(r,n)$ then
$\delta_J^I=1$ if $I=J$ and $\delta_J^I=0$ if $I\ne J$.

We use concatenation to represent the combining of lists, so that if
$\Ihat\in\Free(r,n)$ and $\Jhat\in\Free(s,n)$ then
$\Ihat\Jhat\in\Free(r+s,n)$ in the natural way, i.e.
\begin{align*}
(\Ihat\Jhat)_\mu =
\begin{cases}
\Ihat_\mu &\quadtext{if} \mu\le r
\\
\Jhat_{\mu-r} &\quadtext{if} \mu> r
\end{cases}
\end{align*}
Likewise if $a\in\Set{1,\ldots,n}$ and $I\in\Inc(r,n)$ then
$aI\in\Free(r+1,n)$.
We use the backslash to represent the removal of an element from a
list. That is for $\Jhat\in\Free(r,n)$ and $1\le s\le r$ then $J\backslash
J_s\in\Free(r-1,n)$ is given by
\begin{align*}
(\Jhat\backslash J_s)_\mu =
\begin{cases}
\Jhat_\mu &\quadtext{if} \mu< s
\\
\Jhat_{\mu+1} &\quadtext{if} \mu\ge s
\end{cases}
\end{align*}


\begin{lemma}
\begin{align}
e^a \wedge i_a \alpha = p \alpha
\qquadtext{for} \alpha=\Gamma\Lambda^p M
\label{Id_iaea}
\end{align}
\end{lemma}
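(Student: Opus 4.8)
The plan is to reduce to a basis monomial and count factors. The operation $\alpha\mapsto e^a\wedge i_a\alpha$ (with implicit sum over $a=1,\ldots,n$, and $i_a=i_{X_a}$) is $f$-linear in $\alpha$, since each $i_{X_a}$ is an $f$-linear graded derivation and wedging with $e^a$ on the left is $f$-linear; the right-hand side $p\,\alpha$ is obviously $f$-linear as well. Working in a local coframe, write $\alpha=\sum_{I\in\Inc(p,n)}\alpha_I\,e^{I_1}\wedge\cdots\wedge e^{I_p}$ with $\alpha_I\in\Gamma\Lambda^0M$; by $f$-linearity it then suffices to prove $(\ref{Id_iaea})$ when $\alpha=e^{I_1}\wedge\cdots\wedge e^{I_p}$ for a fixed increasing list $I=[I_1,\ldots,I_p]$.

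For such a monomial, first I would apply the graded Leibniz rule for $i_{X_a}$ together with $e^a(X_b)=\delta^a_b$ to get $i_{X_a}(e^{I_1}\wedge\cdots\wedge e^{I_p})=\sum_{k=1}^{p}(-1)^{k-1}\delta^{I_k}_a\,e^{I_1}\wedge\cdots\wedge\widehat{e^{I_k}}\wedge\cdots\wedge e^{I_p}$. Summing against $e^a$ over $a$, only the term $a=I_k$ survives in the $k$-th summand, so $e^a\wedge i_a\alpha=\sum_{k=1}^{p}(-1)^{k-1}\,e^{I_k}\wedge e^{I_1}\wedge\cdots\wedge\widehat{e^{I_k}}\wedge\cdots\wedge e^{I_p}$. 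The key step is then to transport $e^{I_k}$ back into its original slot: moving it past the $k-1$ one-forms $e^{I_1},\ldots,e^{I_{k-1}}$ contributes a sign $(-1)^{k-1}$, so each summand equals $(-1)^{k-1}(-1)^{k-1}e^{I_1}\wedge\cdots\wedge e^{I_p}=\alpha$. Hence the sum over $k=1,\ldots,p$ equals $p\,\alpha$, and $f$-linearity extends this to all $\alpha\in\Gamma\Lambda^pM$.

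There is no genuine obstacle here; the only point requiring care is the sign bookkeeping when reinserting $e^{I_k}$, which I have just checked cancels cleanly. As an alternative I would keep in reserve an induction on $p$: the cases $p=0$ (both sides vanish, as $i_a$ annihilates $0$-forms) and $p=1$ (where $e^a\wedge i_a\alpha=e^a\,\alpha(X_a)=\alpha$) are immediate, and for the step one writes a local $p$-form as a sum of terms $\beta\wedge\gamma$ with $\beta\in\Gamma\Lambda^1M$ and $\gamma\in\Gamma\Lambda^{p-1}M$, uses $i_a(\beta\wedge\gamma)=(i_a\beta)\gamma-\beta\wedge i_a\gamma$, the $p=1$ identity $e^a(i_a\beta)=\beta$, the inductive hypothesis on $\gamma$, and $e^a\wedge\beta=-\beta\wedge e^a$ to obtain $e^a\wedge i_a(\beta\wedge\gamma)=\beta\wedge\gamma+(p-1)\,\beta\wedge\gamma=p\,\beta\wedge\gamma$.
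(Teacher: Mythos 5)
Your argument is correct, but your primary route differs from the paper's. You reduce to a basis monomial $e^{I_1}\wedge\cdots\wedge e^{I_p}$ by $f$-linearity, expand $i_{X_a}$ via the graded Leibniz rule, and verify that the two signs $(-1)^{k-1}$ (one from the antiderivation expansion, one from transporting $e^{I_k}$ back into its slot) cancel, leaving $p$ copies of $\alpha$. The sign bookkeeping is right. The paper instead proves the identity by induction on $p$: the inductive step writes a $(p+1)$-form as $\beta\wedge\alpha$ with $\beta\in\Gamma\Lambda^1M$, applies $i_a(\beta\wedge\alpha)=(i_a\beta)\alpha-\beta\wedge i_a\alpha$, uses $(i_a\beta)\,e^a=\beta$ and the anticommutation $e^a\wedge\beta=-\beta\wedge e^a$ to get $\beta\wedge\alpha+p\,\beta\wedge\alpha$. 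That is word for word the alternative you keep in reserve at the end, so you have in effect given both proofs. The trade-off is the usual one: the induction hides all permutation signs inside the graded-derivation property and is shorter to write rigorously, while your direct computation makes the mechanism (each of the $p$ indices contributing one copy of $\alpha$) completely explicit. Either is acceptable; if you write up the monomial version you should state, as you implicitly do, that the sum $\sum_{I\in\Inc(p,n)}\alpha_I e^I$ is locally finite and that $f$-linearity of $\alpha\mapsto e^a\wedge i_a\alpha$ justifies the reduction.
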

\begin{proof}
Clearly true for $p=0$. By induction, assume true for
$\alpha\in\Gamma\Lambda^p M$ let $\beta\in\Gamma\Lambda^1 M$.
\begin{align*}
e^a \wedge i_a (\beta\wedge\alpha)
&=
e^a\wedge(i_a\beta\wedge\alpha)-e^a\wedge(\beta\wedge i_a\alpha)
=
\beta\wedge\alpha+ \beta\wedge e^a\wedge i_a\alpha
\\&=
(p+1)\beta\wedge\alpha
\end{align*}
\end{proof}

\begin{lemma}
\begin{align}
\sum_{I\in\Inc(r,n)} i_I(e^I\wedge\alpha)
=
\frac{1}{r!}\sum_{\Ihat\in\Free(r,n)} i_\Ihat(e^\Ihat\wedge\alpha)
=
\binom{n-p}{r}\alpha
\quadtext{for} \alpha=\Gamma\Lambda^p M
\label{Id_iIeI}
\end{align}

\end{lemma}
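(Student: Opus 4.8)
The plan is to prove (\ref{Id_iIeI}) by induction on $r$, carrying the induction in the free-list form $\sum_{\Ihat\in\Free(r,n)} i_\Ihat(e^\Ihat\wedge\alpha)$. First I would dispose of the left-hand equality. For a free list $\Ihat\in\Free(r,n)$ with a repeated entry the factor $e^\Ihat$ vanishes, while if $\Ihat$ has distinct entries both $e^\Ihat=\epsilon(\Ihat)\,e^{\Inc(\Ihat)}$ (by (\ref{Id_e_Ihat_eps})) and $i_\Ihat=i_{\Ihat_r}\cdots i_{\Ihat_1}=\epsilon(\Ihat)\,i_{\Inc(\Ihat)}$ (the interior products anticommute exactly as the $e^a$ do), so the two signs cancel and $i_\Ihat(e^\Ihat\wedge\alpha)=i_{\Inc(\Ihat)}(e^{\Inc(\Ihat)}\wedge\alpha)$ depends only on $\Inc(\Ihat)$. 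Hence each $I\in\Inc(r,n)$ is counted exactly $r!$ times, which gives the first equality; this is the same counting used in (\ref{Id_compare_sums}).

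It then remains to show $P_r(\alpha):=\sum_{I\in\Inc(r,n)} i_I(e^I\wedge\alpha)=\binom{n-p}{r}\alpha$ for $\alpha\in\Gamma\Lambda^p M$. The base case $r=0$ holds trivially since $e^{[\,]}=1$, $i_{[\,]}=\mathrm{id}$ and $\binom{n-p}{0}=1$. I would also record the auxiliary identity $\sum_{a=1}^{n} i_a(e^a\wedge\alpha)=(n-p)\alpha$, which follows because $i_a(e^a\wedge\alpha)=\alpha-e^a\wedge i_a\alpha$ for each fixed $a$ (no sum), so that summing over $a$ and using (\ref{Id_iaea}) gives $n\alpha-p\alpha$. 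This is both the $r=1$ case and the engine of the inductive step.

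For the step, assume the result at $r-1$ (for every degree) and peel off the \emph{last} index of a free list: writing $\Ihat=\Jhat b$ with $\Jhat\in\Free(r-1,n)$ and $b\in\Set{1,\ldots,n}$, definition (\ref{Id_def_iI}) gives $i_\Ihat=i_b\,i_\Jhat$ and $e^\Ihat=e^\Jhat\wedge e^b$, hence $i_\Ihat(e^\Ihat\wedge\alpha)=i_b\big(i_\Jhat(e^\Jhat\wedge(e^b\wedge\alpha))\big)$. Summing over $\Jhat$ and applying the inductive hypothesis (in free-list form) to the $(p+1)$-form $e^b\wedge\alpha$ yields $\sum_{\Jhat\in\Free(r-1,n)} i_\Jhat(e^\Jhat\wedge(e^b\wedge\alpha))=(r-1)!\binom{n-p-1}{r-1}(e^b\wedge\alpha)$; the coefficient is a constant, so $i_b$ passes through it, and summing over $b$ with the auxiliary identity gives $r!\,P_r(\alpha)=(r-1)!\binom{n-p-1}{r-1}\sum_b i_b(e^b\wedge\alpha)=(r-1)!\binom{n-p-1}{r-1}(n-p)\alpha$. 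Dividing by $r!$ and using $\tfrac{n-p}{r}\binom{n-p-1}{r-1}=\binom{n-p}{r}$ closes the induction.

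The step I expect to need the most care is the last-index peeling: one must exploit the ordering in $i_I=i_{I_r}\cdots i_{I_1}$ so that deleting the final entry $b$ extracts $i_b$ on the \emph{outside}, where it commutes freely past the constant coefficient produced by the inductive hypothesis; peeling the first index instead would leave an $i_b$ buried inside a Leibniz expansion and force a messier, degree-unshifted recursion. With the correct choice all signs are automatic, and the only input beyond (\ref{Id_iaea}) is the elementary binomial identity above. (Degenerate ranges are harmless: if $p=n$ then $e^b\wedge\alpha=0$ and both sides vanish for $r\ge 1$, and if $r>n-p$ the binomial coefficients are zero throughout the recursion.)
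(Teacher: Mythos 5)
Your proposal is correct and follows essentially the same route as the paper: both reduce the claim to the single-index identity $\sum_a i_a(e^a\wedge\alpha)=(n-p)\alpha$ and then apply it repeatedly across the $r$ free indices, producing the telescoping product $(n-p)(n-p-1)\cdots(n-p-r+1)$ divided by $r!$. The only cosmetic differences are that you formalize the iteration as an induction peeling the last index (with the degree shift to $e^b\wedge\alpha$) where the paper peels indices one at a time from the other end, and that you spell out the counting behind the first equality, which the paper dismisses as clear.
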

\begin{proof}
Clearly the first two expression are equivalent.

Consider $r=1$ then
\begin{align*}
i_a (e^a\wedge\alpha)
&=
\delta_a^a \alpha
-
e^a\wedge i_a\alpha
=
(n-p)\alpha
\end{align*}

Then the left hand side of (\ref{Id_iIeI}) becomes
\begin{align*}
\frac{1}{r!}\sum_{\Ihat\in\Free(r,n)} i_\Ihat(e^\Ihat\wedge\alpha)
&=
\frac{1}{r!}i_{\Ihat_r}\cdots i_{\Ihat_1} ( e^{\Ihat_1}\wedge\cdots\wedge
e^{\Ihat_r}\wedge\alpha)
\\&=
\frac{1}{r!}(n-(p+r-1))i_{\Ihat_r}\cdots i_{\Ihat_2} ( e^{\Ihat_2}\wedge\cdots\wedge
e^{\Ihat_r}\wedge\alpha)
\\&=
\cdots=
\frac{1}{r!}(n-(p+r-1))\cdots(n-p) \alpha
\\&=
\frac{1}{r!}(n-p-r+1)\cdots(n-p) \alpha
=
\frac{(n-p)!}{r!(n-p-r)!} \alpha
\end{align*}
\end{proof}

\begin{corrol}
If $\deg(e^I\wedge\alpha)=n$ then
\begin{align}
\sum_{I\in\Inc(r,n)} i_I (e^I\wedge\alpha)=\alpha
\label{Id_iIeI_n}
\end{align}

\end{corrol}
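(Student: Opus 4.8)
The plan is to read this off directly from the preceding lemma, equation \textup{(\ref{Id_iIeI})}, which asserts that
\begin{align*}
\sum_{I\in\Inc(r,n)} i_I(e^I\wedge\alpha) = \binom{n-p}{r}\alpha
\qquadtext{for}
\alpha\in\Gamma\Lambda^p M.
\end{align*}
So the only work is to pin down the degree $p$ of $\alpha$ from the hypothesis and evaluate the resulting binomial coefficient.

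First I would observe that $e^I = e^{I_1}\wedge\cdots\wedge e^{I_r}$ is a $p'$-form with $p' = r$ (it is a wedge of $r$ distinct covectors, and is nonzero for each $I\in\Inc(r,n)$). Hence for $\alpha\in\Gamma\Lambda^p M$ the product $e^I\wedge\alpha$ has degree $r+p$. The stated hypothesis $\deg(e^I\wedge\alpha)=n$ therefore forces $r+p=n$, i.e. $p=n-r$. Substituting into the lemma, the coefficient becomes $\binom{n-p}{r}=\binom{r}{r}=1$, which gives $\sum_{I\in\Inc(r,n)} i_I(e^I\wedge\alpha)=\alpha$, as claimed.

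There is no real obstacle here: the statement is an immediate specialisation of \textup{(\ref{Id_iIeI})} to the top-degree case. The only mild subtlety worth a sentence in the write-up is that the hypothesis is phrased in terms of $\deg(e^I\wedge\alpha)$ rather than directly fixing $\deg\alpha$; one should note that since this degree is independent of which $I\in\Inc(r,n)$ is chosen (all the $e^I$ have degree $r$), the condition unambiguously determines $\deg\alpha=n-r$, after which the binomial identity does all the remaining work.
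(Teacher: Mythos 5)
Your proposal is correct and is exactly the paper's argument: the corollary is the specialisation of \textup{(\ref{Id_iIeI})} to the case $n-p=r$, where the binomial coefficient $\binom{n-p}{r}$ collapses to $1$. The paper's own proof consists of precisely this observation, so nothing further is needed.
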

\begin{proof}
Since $n-p=r$.
\end{proof}

\begin{lemma}
For increasing lists $I,J\in\Inc(r,n)$ and $\Omega\in\Gamma\Lambda^n M$ then
\begin{align}
e^I\wedge i_J \Omega = \Omega \delta_J^I
\label{Id_eIiJ_Om}
\end{align}
\end{lemma}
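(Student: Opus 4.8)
The plan is to reduce to a single standard basis computation. Since $\{e^a\}$ is a coframe, $\Omega_0:=e^1\wedge\cdots\wedge e^n$ is nowhere vanishing, so any $\Omega\in\Gamma\Lambda^n M$ may be written $\Omega=h\,\Omega_0$ with $h\in\Gamma\Lambda^0 M$. Both sides of \textup{(\ref{Id_eIiJ_Om})} are linear over $\Gamma\Lambda^0 M$ in $\Omega$ (the interior product $i_J$ and the left multiplication $e^I\wedge(-)$ are), so it suffices to prove $e^I\wedge i_J\Omega_0=\delta^I_J\,\Omega_0$ for all $I,J\in\Inc(r,n)$.

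First I would record the auxiliary fact that $i_J\Omega_0=\varepsilon_J\,e^{\bar J}$, where $\bar J\in\Inc(n-r,n)$ is the increasing list complementary to $J$ in $\{1,\ldots,n\}$ and $\varepsilon_J=\pm1$ is the signature of the permutation sending $(1,\ldots,n)$ to $(J_1,\ldots,J_r,\bar J_1,\ldots,\bar J_{n-r})$. This follows by peeling off one interior product at a time in the definition $i_J=i_{J_r}\cdots i_{J_1}$ of \textup{(\ref{Id_def_iI})}, using the elementary identity $i_{X_j}(e^1\wedge\cdots\wedge e^n)=(-1)^{j-1}\,e^1\wedge\cdots\wedge\widehat{e^{\,j}}\wedge\cdots\wedge e^n$: the accumulated sign is exactly the shuffle sign $\varepsilon_J$. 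The same shuffle shows $e^J\wedge e^{\bar J}=\varepsilon_J\,\Omega_0$.

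It then remains to wedge: $e^I\wedge i_J\Omega_0=\varepsilon_J\,(e^I\wedge e^{\bar J})$. If the index sets $I$ and $\bar J$ overlap, then $e^I\wedge e^{\bar J}$ contains a repeated factor $e^a$ and vanishes; since $|I|=r$ and $|\bar J|=n-r$, the sets $I$ and $\bar J$ are disjoint precisely when $I=\{1,\ldots,n\}\setminus\bar J=J$. In that case $e^I\wedge e^{\bar J}=e^J\wedge e^{\bar J}=\varepsilon_J\,\Omega_0$, so $e^I\wedge i_J\Omega_0=\varepsilon_J^2\,\Omega_0=\Omega_0$, while for $I\ne J$ it is $0$; both cases are summarised by $\delta^I_J\,\Omega_0$, which completes the argument after the reduction above.

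The only delicate point is the sign bookkeeping: one must check that the sign $\varepsilon_J$ picked up in stripping the interior products off $\Omega_0$ is the \emph{same} $\varepsilon_J$ that reappears when $e^J$ and $e^{\bar J}$ are reassembled into $\Omega_0$, so that the two factors cancel. If one prefers to avoid $\Omega_0$ and signs, an alternative is induction on $r$: split $I=I''I_{r+1}$ and $J=J''J_{r+1}$, use the derivation rule for the interior product $i_{J_{r+1}}$ to move it past $e^{I''}$, and apply the inductive hypothesis together with the $r=1$ case and corollary \textup{(\ref{Id_iIeI_n})}; the cross terms drop out because they would require a list containing $I_{r+1}$ to equal $J''$, all of whose entries are strictly smaller than $I_{r+1}$. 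I expect the direct $\Omega_0$ computation to be the cleaner of the two to write up.
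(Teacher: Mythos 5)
Your proof is correct, but it follows a different route from the one in the paper. The paper's proof iterates the single-index identity $e^a\wedge i_b\Omega=\delta^a_b\,\Omega$ (established inside lemma \ref{lm_tau_Tden}), peeling off one pair $(e^{I_s},i_{J_s})$ at a time so that no explicit signs ever appear; the price is that the intermediate step $e^{I_r}\wedge i_{J_r}\omega=\delta^{I_r}_{J_r}\,\omega$ with $\omega=i_{J_{r-1}}\cdots i_{J_1}\Omega$ is not a top-degree identity, and the antiderivation cross term $i_{J_r}(e^{I_r}\wedge\omega)$ is killed only after wedging with the remaining factors $e^{I_1}\wedge\cdots\wedge e^{I_{r-1}}$ (using that $I$ and $J$ are increasing), a point the paper leaves implicit. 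Your argument instead reduces by $f$-linearity to $\Omega_0=e^1\wedge\cdots\wedge e^n$, computes $i_J\Omega_0=\varepsilon_J\,e^{\bar J}$ once and for all, and settles the $I\neq J$ case by pure disjointness of index sets. The one delicate point you flag — that the sign accumulated in stripping off the interior products equals the sign in $e^J\wedge e^{\bar J}=\varepsilon_J\,\Omega_0$ — does check out: both equal $(-1)^{\sum_s(J_s-s)}$, the first because removing $e^{J_s}$ at stage $s$ costs $(-1)^{J_s-s}$ (the $s-1$ earlier factors having already been removed), the second because $\sum_s(J_s-s)$ counts the inversions of the shuffle $(J,\bar J)$. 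So the two factors of $\varepsilon_J$ do cancel and the computation closes. What each approach buys: the paper's is shorter and sign-free but relies on an unstated cancellation; yours is fully explicit, self-contained, and makes the mechanism ($I$ must exhaust the complement of $\bar J$) transparent, at the cost of the shuffle-sign bookkeeping.
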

\begin{proof}
In the proof of lemma \ref{lm_tau_Tden} we show
\begin{align*}
e^a\wedge i_b\Omega = \delta^a_b \Omega
\end{align*}
So
\begin{align*}
e^I\wedge i_J \Omega
&=
e^{I_1}\wedge\cdots\wedge e^{I_r}\wedge i_{J_r}\cdots i_{J_1} \Omega
=
e^{I_1}\wedge\cdots\wedge e^{I_{r-1}}\wedge i_{J_{r-1}}\cdots i_{J_1} \Omega
\delta^{I_r}_{J_r}
\\&=
\Omega
\delta^{I_r}_{J_r}\delta^{I_{r-1}}_{J_{r-1}}\cdots \delta^{I_{1}}_{J_{1}}
=
\Omega \delta_J^I
\end{align*}
\end{proof}

\begin{corrol}
For free lists $\Ihat,\Jhat\in\Free(r,n)$
and $\Omega\in\Gamma\Lambda^n M$ then
\begin{align}
e^\Ihat\wedge i_\Jhat \Omega = \Omega \delta_\Jhat^\Ihat
\label{Id_eIiJhat_Om}
\end{align}
\end{corrol}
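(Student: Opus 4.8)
The plan is to reduce the free-list statement to the increasing-list identity \textup{(\ref{Id_eIiJ_Om})} already proved, keeping track of signs via the signature conventions introduced for $\Inc$ and $\Free$. First I would dispose of the degenerate cases. If $\Ihat$ contains a repeated entry then $e^{\Ihat}=e^{\Ihat_1}\wedge\cdots\wedge e^{\Ihat_r}$ has two equal factors and hence vanishes; if instead $\Jhat$ contains a repeated entry then, using that $i_{X_a}$ is an odd graded derivation so that distinct interior contractions anticommute and $(i_{X_a})^2=0$, the operator $i_\Jhat=i_{\Jhat_r}\cdots i_{\Jhat_1}$ is identically zero. In either case the left-hand side of \textup{(\ref{Id_eIiJhat_Om})} is $0$, and the right-hand side is also $0$ because $\epsilon(\Ihat)=0$ or $\epsilon(\Jhat)=0$ forces $\delta^{\Ihat}_\Jhat=0$ by definition.

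For the non-degenerate case, where all entries of $\Ihat$ are distinct and all entries of $\Jhat$ are distinct, I would write $I=\Inc(\Ihat)$ and $J=\Inc(\Jhat)$ and use two reordering facts: the identity $e^{\Ihat}=\epsilon(\Ihat)\,e^{I}$ from \textup{(\ref{Id_e_Ihat_eps})}, and the analogous identity $i_\Jhat\Omega=\epsilon(\Jhat)\,i_{J}\Omega$, which follows since re-ordering the factors of $i_{\Jhat_r}\cdots i_{\Jhat_1}$ into increasing order $i_{J_r}\cdots i_{J_1}$ introduces exactly the sign $\epsilon(\Jhat)$ by anticommutativity of the $i_{X_a}$. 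Combining these with \textup{(\ref{Id_eIiJ_Om})} gives
\[
e^{\Ihat}\wedge i_\Jhat\Omega
= \epsilon(\Ihat)\epsilon(\Jhat)\, e^{I}\wedge i_{J}\Omega
= \epsilon(\Ihat)\epsilon(\Jhat)\, \delta^{I}_{J}\,\Omega .
\]

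Finally I would match this against the definition of $\delta^{\Ihat}_\Jhat$: since $I$ and $J$ are increasing, $\delta^{I}_{J}=1$ exactly when $I=J$, i.e. when $\Inc(\Ihat)=\Inc(\Jhat)$, and $\delta^{I}_{J}=0$ otherwise, so $\epsilon(\Ihat)\epsilon(\Jhat)\,\delta^{I}_{J}=\delta^{\Ihat}_\Jhat$ in every case, yielding \textup{(\ref{Id_eIiJhat_Om})}. I do not expect any genuine obstacle here; the only point needing explicit mention is the sign identity $i_\Jhat\Omega=\epsilon(\Jhat)\,i_{\Inc(\Jhat)}\Omega$, which is an immediate consequence of $i_{X_a}i_{X_b}=-i_{X_b}i_{X_a}$ and $(i_{X_a})^2=0$, so the corollary is essentially bookkeeping on top of the preceding lemma.
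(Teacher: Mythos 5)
Your argument is correct and is essentially the paper's own proof: reduce to increasing lists via $e^{\Ihat}=\epsilon(\Ihat)e^{\Inc(\Ihat)}$ and $i_\Jhat\Omega=\epsilon(\Jhat)i_{\Inc(\Jhat)}\Omega$, then invoke (\ref{Id_eIiJ_Om}) and the definition of $\delta^{\Ihat}_{\Jhat}$. Your explicit treatment of the repeated-index cases is a minor extra care step that the paper's one-line proof leaves implicit (it is subsumed by $\epsilon=0$), but the route is the same.
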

\begin{proof}
\begin{align*}
e^\Ihat\wedge i_\Jhat \Omega
=
\epsilon(\Ihat)\epsilon(\Jhat)
e^{\Inc(\Ihat)}\wedge i_{\Inc(\Jhat)}
\Omega
=
\epsilon(\Ihat)\epsilon(\Jhat)
\delta^{\Inc(\Ihat)}_{\Inc(\Jhat)}
\Omega
=
\delta_\Jhat^\Ihat
\Omega
\end{align*}
\end{proof}

\begin{lemma}
Let $I\in\Inc(r,n)$ and $J\in\Inc(r+1,n)$ and
$\Omega\in\Gamma\Lambda^n M$ then
\begin{align}
e^I\wedge i_J \Omega
=
\sum_{s=1}^{r+1} (-1)^{s-1} \delta^I_{J\backslash J_s} i_{J_s} \Omega
\label{Id_eIiJ+_Om}
\end{align}
\end{lemma}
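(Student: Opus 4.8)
\emph{Proof proposal.} The plan is to reduce this identity between $(n-1)$-forms to a family of scalar identities by wedging with the coframe elements $e^b$, and then to evaluate each side using the equal-length contraction identity already established. First I would record the elementary linear-algebra fact that an $(n-1)$-form $\eta$ with $e^b\wedge\eta=0$ for all $b\in\{1,\ldots,n\}$ vanishes identically (pointwise, the map $\eta\mapsto(e^b\wedge\eta)_b$ is injective on $\Lambda^{n-1}$). Hence it suffices to prove, for each fixed $b$,
\begin{align*}
e^b\wedge e^I\wedge i_J\Omega
=
\sum_{s=1}^{r+1}(-1)^{s-1}\,\delta^I_{J\backslash J_s}\,\bigl(e^b\wedge i_{J_s}\Omega\bigr).
\end{align*}

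For the right-hand side I would use the single-index identity $e^b\wedge i_{J_s}\Omega=\delta^b_{J_s}\Omega$ from the proof of lemma \ref{lm_tau_Tden}: the sum then collapses to $0$ unless $b\in J$, in which case, writing $b=J_{s_0}$, it equals $(-1)^{s_0-1}\delta^I_{J\backslash J_{s_0}}\Omega$, hence equals $(-1)^{s_0-1}\Omega$ precisely when moreover $I=J\backslash J_{s_0}$, and $0$ otherwise. For the left-hand side, since $e^b\wedge e^I=e^{bI}$ for the free list $bI=(b,I_1,\ldots,I_r)$, the corollary (\ref{Id_eIiJhat_Om}) applied to $bI$ and $J$ (both of length $r+1$) gives $e^b\wedge e^I\wedge i_J\Omega=\delta^{bI}_J\,\Omega$, which vanishes unless $b\notin I$ and $I\cup\{b\}=J$ as sets, in which case it equals $\epsilon(bI)\,\Omega$.

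Comparing the two evaluations, both sides vanish except in the single case $b\notin I$ with $J=I\cup\{b\}$; there $s_0$ is the position of $b$ in the increasing list $J$ and $J\backslash J_{s_0}=I$, so it remains only to check the sign identity $\epsilon(b,I_1,\ldots,I_r)=(-1)^{s_0-1}$. This is immediate: sorting the list $(b,I_1,\ldots,I_r)$ moves $b$ past exactly the entries $I_j$ with $I_j<b$, and there are precisely $s_0-1$ of these since $b$ occupies position $s_0$ in $J$. The only even mildly delicate point is this permutation-sign count; everything else is a direct application of (\ref{Id_eIiJhat_Om}) and the single-index identity, together with the reduction obtained by wedging with $e^b$.
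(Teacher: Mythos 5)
Your proof is correct, but it takes a genuinely different route from the paper's. The paper proves (\ref{Id_eIiJ+_Om}) by induction on $r$: it splits $i_J = i_{J_{r+1}} i_{J\backslash J_{r+1}}$, pushes $i_{J_{r+1}}$ through the wedge as an anti-derivation, and uses the auxiliary observation $i_{J_{r+1}} e^I = (-1)^{r-1}\delta^{I_r}_{J_{r+1}} e^{I\backslash I_r}$ together with the equal-length identity (\ref{Id_eIiJ_Om}) and the induction hypothesis. You instead avoid induction entirely: you wedge both sides with $e^b$, note that the map $\eta\mapsto(e^b\wedge\eta)_{b=1,\ldots,n}$ is injective on $(n-1)$-forms, and then evaluate both sides of the resulting top-form identity directly — the left via the free-list corollary (\ref{Id_eIiJhat_Om}) applied to $e^{bI}\wedge i_J\Omega = \delta^{bI}_J\Omega$, the right via the single-index identity $e^b\wedge i_{J_s}\Omega=\delta^b_{J_s}\Omega$ — reducing everything to the sign count $\epsilon(bI)=(-1)^{s_0-1}$, which you verify correctly. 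Your route buys a shorter, non-inductive argument that leans more heavily on the already-established equal-length case (note that (\ref{Id_eIiJhat_Om}) does appear before this lemma in the paper, so the dependency is legitimate); the paper's induction is more self-contained at the level of each step but requires tracking signs through the anti-derivation property at every stage. Both are complete and correct.
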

\begin{proof}
First observe that for the left hand side of (\ref{Id_eIiJ+_Om}) to be
non zero then $I\subset J$. Since $I$ and $J$ are increasing then
$J_{r+1}\ne I_s$ for $s=1,\ldots,r-1$. Thus there are two cases,
either $J_{r+1}=I_r$ or $J_{r+1}$ does not equal any $I_s$. I.e.
\begin{align}
i_{J_{r+1}} e^I = (-1)^{r-1}\delta_{J_{r+1}}^{I_{r}}e^{I\backslash I_{r}}
\label{Id_lm_inc_diff}
\end{align}
By induction on $r$. Clearly true when $I\in\Inc(0,n)$.
Assume true when $I\in\Inc(r-1,n)$ then
\begin{align*}
e^I\wedge i_J \Omega
&=
e^I\wedge i_{J_{r+1}} i_{J\backslash J_{r+1}} \Omega
\\&=
(-1)^r i_{J_{r+1}} (e^I\wedge  i_{J\backslash J_{r+1}} \Omega) +
(-1)^{r-1} i_{J_{r+1}} e^I\wedge  i_{J\backslash J_{r+1}} \Omega
\\&\hspace{18em}
\text{from (\ref{Id_eIiJ_Om}) and (\ref{Id_lm_inc_diff})}
\\&=
(-1)^r\delta^I_{J\backslash J_{r+1}} i_{J_{r+1}} \Omega +
\delta_{J_{r+1}}^{I_{r}} e^{I\backslash I_{r}}\wedge  i_{J\backslash J_{r+1}} \Omega
\\&\hspace{18em}\text{from induction hypothesis}
\\&=
(-1)^r\delta^I_{J\backslash J_{r+1}} i_{J_{r+1}} \Omega +
\delta_{J_{r+1}}^{I_{r}}
\Big(
\sum_{s=1}^{r} (-1)^{s-1}
\delta^{I\backslash I_{r}}_{J\backslash J_{r+1}\backslash J_s} i_{J_s} \Omega
\Big)
\\&=
(-1)^r\delta^I_{J\backslash J_{r+1}} i_{J_{r+1}} \Omega +
\sum_{s=1}^{r} (-1)^{s-1}
\delta^{I}_{J\backslash J_s} i_{J_s} \Omega
\\&=
\sum_{s=1}^{r+1} (-1)^{s-1}
\delta^{I}_{J\backslash J_s} i_{J_s} \Omega
\end{align*}

\end{proof}

\begin{lemma}
Let $\Vmul^J$ be a multiindex object where
$J\in\Inc(r+1,n)$ and let
$I\in\Inc(r,n)$ and $\Omega\in\Gamma\Lambda^n M$ then
\begin{align}
\sum_{J\in\Inc(r+1,n)} \Vmul^J \sum_{s=1}^{r+1} (-1)^{s-1}
\delta^{I}_{J\backslash J_s} i_{J_s} \Omega
=
\sum_{a=1}^n \Vmul^{\Inc(aI)} \epsilon(aI) i_a\Omega
\label{Id_eIiJ+_alt}
\end{align}
\end{lemma}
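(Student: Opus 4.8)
The plan is to fix $I\in\Inc(r,n)$ and reorganise the double sum on the left-hand side of (\ref{Id_eIiJ+_alt}) according to which index of $J$ is being removed. The key observation is that $J\backslash J_s$ is a sublist of the increasing list $J$, hence is itself an increasing list of length $r$; since $I$ is also an increasing list of length $r$, the symbol $\delta^I_{J\backslash J_s}$ equals $1$ when $J\backslash J_s=I$ and $0$ otherwise. So only the pairs $(J,s)$ with $J\backslash J_s=I$ survive in the sum.

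Next I would set up the bijection between the surviving pairs and an indexing set matching the right-hand side. A pair $(J,s)$ with $J\in\Inc(r+1,n)$, $1\le s\le r+1$ and $J\backslash J_s=I$ is determined by the single element $a:=J_s$, which must satisfy $a\notin\Set{I_1,\ldots,I_r}$ (otherwise $J$ would have a repeated entry); conversely, for any $a\notin I$ the list $aI=[a,I_1,\ldots,I_r]$ has $r+1$ distinct entries, $J:=\Inc(aI)\in\Inc(r+1,n)$, and with $s$ the position of $a$ inside $J$ one recovers $J_s=a$ and $J\backslash J_s=I$. Thus the surviving $(J,s)$ are in bijection with $\Set{a:1\le a\le n,\ a\notin I}$. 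Since $\epsilon(aI)=0$ whenever $a\in I$, the right-hand side of (\ref{Id_eIiJ+_alt}) is also effectively a sum over $a\notin I$, so it suffices to match the two sides term by term under this bijection, noting that for the matched pair one has $J_s=a$ and $\Vmul^{J}=\Vmul^{\Inc(aI)}$.

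The remaining point is the comparison of signs. With $J=\Inc(aI)$, the position $s$ of $a$ in $J$ equals $1+\#\Set{j:I_j<a}$, because the entries of $I$ smaller than $a$ precede $a$ in the sorted order and those larger follow it. Sorting $aI=[a,I_1,\ldots,I_r]$ then amounts to moving $a$ rightward past exactly those $s-1$ smaller entries of $I$ by $s-1$ adjacent transpositions, so $\epsilon(aI)=(-1)^{s-1}$. Hence the generic summand $(-1)^{s-1}\Vmul^{J}\, i_{J_s}\Omega$ on the left equals $\epsilon(aI)\,\Vmul^{\Inc(aI)}\, i_a\Omega$ on the right, and summing over $a\notin I$ yields (\ref{Id_eIiJ+_alt}).

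The main obstacle is the sign bookkeeping: one must confirm that the parity of the permutation sorting $aI$ into increasing order is precisely $(-1)^{s-1}$, where $s$ is the slot into which $a$ lands. Once the identity ``position of $a$ in $\Inc(aI)$'' $=1+\#\Set{j:I_j<a}$ is established, this is immediate, and everything else is merely a relabelling of the summation index.
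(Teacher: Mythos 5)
Your proposal is correct and follows essentially the same route as the paper's own (much terser) proof: identify the surviving pairs $(J,s)$ as exactly those with $J=\Inc(aI)$ for $a\notin I$, and check that $(-1)^{s-1}=\epsilon(aI)$. Your explicit verification that $s=1+\#\Set{j\,:\,I_j<a}$ and that sorting $aI$ costs $s-1$ adjacent transpositions supplies the sign bookkeeping the paper leaves implicit.
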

\begin{proof}
The only values of $J$ such that $\delta^{I}_{J\backslash J_s}=1$ are
when $J=\Inc(aI)$ for some $a$. In this case one
value of $s$ is $\delta^{I}_{J\backslash J_s}=1$ and
\begin{align*}
(-1)^{s-1} = \epsilon(aI)
\end{align*}
Hence result
\end{proof}

\begin{corrol}
Let $\Vmul^\Jhat$ be an antisymmetric multiindex object where
$\Jhat\in\Free(r+1,n)$ and let
$I\in\Inc(r,n)$ and $\Omega\in\Gamma\Lambda^n M$ then
\begin{align}
\sum_{J\in\Inc(r+1,n)} \Vmul^J \sum_{s=1}^{r+1} (-1)^{s-1}
\delta^{I}_{J\backslash J_s} i_{J_s} \Omega
=
\sum_{a=1}^n \Vmul^{aI} i_a\Omega
\label{Id_eIiJ+_alt2}
\end{align}
\end{corrol}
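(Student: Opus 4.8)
The plan is to obtain this as an immediate consequence of the preceding lemma, equation (\ref{Id_eIiJ+_alt}), which establishes
\begin{align*}
\sum_{J\in\Inc(r+1,n)} \Vmul^J \sum_{s=1}^{r+1} (-1)^{s-1}
\delta^{I}_{J\backslash J_s}\, i_{J_s} \Omega
=
\sum_{a=1}^n \Vmul^{\Inc(aI)}\, \epsilon(aI)\, i_a\Omega
\end{align*}
for an \emph{arbitrary} multiindex object indexed by increasing lists $J\in\Inc(r+1,n)$. The hypothesis here furnishes an antisymmetric object $\Vmul^{\Jhat}$ defined on all of $\Free(r+1,n)$; since its restriction to $\Inc(r+1,n)$ is a legitimate input to that lemma, the left-hand sides of the two identities agree, and it only remains to rewrite the right-hand side.

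The key step is to recognise that, for an antisymmetric multiindex object, $\Vmul^{\Ihat} = \epsilon(\Ihat)\,\Vmul^{\Inc(\Ihat)}$ for every free list $\Ihat$ — precisely the analogue for $\Vmul$ of the relation $e^{\Ihat} = \epsilon(\Ihat)\, e^{\Inc(\Ihat)}$ recorded earlier in (\ref{Id_e_Ihat_eps}). Taking $\Ihat = aI \in \Free(r+1,n)$ gives $\Vmul^{\Inc(aI)}\,\epsilon(aI) = \Vmul^{aI}$, and substituting this into the sum above produces $\sum_{a=1}^n \Vmul^{aI}\, i_a\Omega$, which is (\ref{Id_eIiJ+_alt2}).

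I do not anticipate any genuine obstacle; the only point deserving a line of care is the degenerate case in which $a$ already occurs among the entries of $I$. Then $aI$ contains a repeated index, so $\Vmul^{aI} = 0$ by antisymmetry, while simultaneously $\epsilon(aI) = 0$ by the convention fixed for the signature of a list with repeated entries, so the term-by-term identity $\Vmul^{\Inc(aI)}\,\epsilon(aI) = \Vmul^{aI}$ still holds (with both sides vanishing). With this checked, the substitution above is valid term by term over $a = 1,\ldots,n$ and the corollary follows.
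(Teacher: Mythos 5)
Your proposal is correct and is essentially the paper's own proof: the paper also derives the corollary from the preceding lemma via the single identity $\Vmul^{\Inc(aI)}\epsilon(aI)=\Vmul^{aI}$. Your additional check of the degenerate case where $a$ already occurs in $I$ (both sides vanishing) is a worthwhile point of care that the paper leaves implicit.
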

\begin{proof}
Follows since
\begin{align*}
\Vmul^{\Inc(aI)}\epsilon(aI)=\Vmul^{aI}
\end{align*}
\end{proof}

\begin{corrol}
Let $\Vmul^\Jhat$ and $\alpha_\Ihat$ be antisymmetric multiindex objects
where $\Jhat\in\Free(r+1,n)$ and $\Ihat\in\Free(r,n)$
and let
$\Omega\in\Gamma\Lambda^n M$ then
\begin{align}
\sum_{I\in\Inc(r,n)}
\sum_{J\in\Inc(r+1,n)}
\Vmul^J \alpha_I
e^I\wedge i_J \Omega
=
\sum_{I\in\Inc(r,n)}
\sum_{a=1}^n \alpha_I \Vmul^{aI} i_a\Omega
\label{Id_sum_eIiJ_om}
\end{align}
\end{corrol}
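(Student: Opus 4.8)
The plan is to obtain \textup{(\ref{Id_sum_eIiJ_om})} directly by feeding the expansion \textup{(\ref{Id_eIiJ+_Om})} for $e^I\wedge i_J\Omega$ into the left-hand side and then collapsing the resulting index sum via the antisymmetric corollary \textup{(\ref{Id_eIiJ+_alt2})}.

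First I would observe that every sum in sight is finite, so I may freely interchange the order of summation and pull the scalar factors $\alpha_I$ out of the inner sum over $J$. Fixing $I\in\Inc(r,n)$ and applying \textup{(\ref{Id_eIiJ+_Om})} gives
\[
\sum_{J\in\Inc(r+1,n)}\Vmul^J\,e^I\wedge i_J\Omega
=
\sum_{J\in\Inc(r+1,n)}\Vmul^J\sum_{s=1}^{r+1}(-1)^{s-1}\delta^I_{J\backslash J_s}\,i_{J_s}\Omega .
\]
Since $\Vmul^{\Jhat}$ is an antisymmetric multiindex object, the right-hand side is exactly the left-hand side of \textup{(\ref{Id_eIiJ+_alt2})}, which equals $\sum_{a=1}^n\Vmul^{aI}\,i_a\Omega$.

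Multiplying by $\alpha_I$ and summing over $I\in\Inc(r,n)$ then yields
\[
\sum_{I\in\Inc(r,n)}\sum_{J\in\Inc(r+1,n)}\Vmul^J\alpha_I\,e^I\wedge i_J\Omega
=
\sum_{I\in\Inc(r,n)}\alpha_I\sum_{a=1}^n\Vmul^{aI}\,i_a\Omega ,
\]
which is \textup{(\ref{Id_sum_eIiJ_om})}.

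There is no real obstacle here: the combinatorial content has already been packaged in \textup{(\ref{Id_eIiJ+_Om})} and its corollary \textup{(\ref{Id_eIiJ+_alt2})}, so the argument is purely bookkeeping. The only point worth flagging is that antisymmetry is invoked only for $\Vmul$ — it is what turns $\Vmul^{\Inc(aI)}\epsilon(aI)$ into $\Vmul^{aI}$ through the preceding corollary — whereas the $\alpha_I$ appear merely as scalar coefficients indexed by increasing lists, so no hypothesis on $\alpha$ beyond being defined on $\Inc(r,n)$ is actually used.
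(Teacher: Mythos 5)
Your proof is correct and is exactly the intended argument: the paper states this corollary without proof, treating it as an immediate combination of \textup{(\ref{Id_eIiJ+_Om})} and \textup{(\ref{Id_eIiJ+_alt2})}, which is precisely the chain of substitutions you carry out. Your side observation that antisymmetry of $\alpha_{\Ihat}$ is never actually used is also accurate.
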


\begin{lemma}
\label{lm_coord_T}
\textup{(\ref{Tensors_T_Neot_coords})} holds.
\end{lemma}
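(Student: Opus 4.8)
The plan is to start from the coordinate formula (\ref{Tensors_T_Neot_coords_alt}), namely $\Tden^\Noet{}^a{}_b = i_{\partial_n}\cdots i_{\partial_1}\big(dx^a\wedge\Noet_{\partial_b}\big)$, insert the Noether current (\ref{LagForm_def_tau_can}) (valid on $\LambdaM$-shell), $\Noet_{\partial_b} = i_{\partial_b}\LambdaM - \sum_{\zeind=1}^{\Numzeta}\Lie_{\partial_b}\zeta_\zeind\wedge\pfrac{\LambdaM}{(d\zeta_\zeind)}$, and evaluate the two contributions separately with the multi-index machinery of this appendix. Throughout I would take the appendix coframe to be the coordinate coframe, $e^a := dx^a$, so that $e^I = dx^{I_1}\wedge\cdots\wedge dx^{I_r}$ and $\Omega = dx^1\wedge\cdots\wedge dx^n$.

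First I would dispose of the term $i_{\partial_b}\LambdaM$. Writing $\LambdaM = \Sden\,\Omega$ and using the identity $dx^a\wedge i_{\partial_b}\Omega = \delta^a_b\,\Omega$ established inside lemma \ref{lm_tau_Tden}, one gets $dx^a\wedge i_{\partial_b}\LambdaM = \delta^a_b\,\Sden\,\Omega$; since $i_{\partial_n}\cdots i_{\partial_1}\Omega = 1$, this contributes the term $\delta^a_b\,\Sden$.

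Next, for the terms $\Lie_{\partial_b}\zeta_\zeind\wedge\pfrac{\LambdaM}{(d\zeta_\zeind)}$, note that $\Lie_{\partial_b}dx^c = d(\delta^c_b) = 0$, so in the coordinate coframe $\Lie_{\partial_b}$ only differentiates components, $\Lie_{\partial_b}\zeta_B = \sum_{I\in\Inc(p_B,n)}\partial_b(\zeta_{BI})\,e^I$. I would then unpack $\pfrac{\LambdaM}{(d\zeta_B)}$ from its defining relation (\ref{LagForm_def_pfrac_d_alpha_mu}): testing against $\beta = e^J$ for $J\in\Inc(p_B+1,n)$, writing $d\zeta_B = \sum_J (d\zeta_B)_J\,e^J$, and using the appendix identity $e^{J'}\wedge i_J\Omega = \delta^{J'}_J\,\Omega$, cf.\ (\ref{Id_eIiJ_Om}), one obtains $\pfrac{\LambdaM}{(d\zeta_B)} = \sum_{J\in\Inc(p_B+1,n)}\pfrac{\Sden}{(d\zeta_B)_J}\,i_J\Omega$. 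Using $dx^a\wedge e^I = e^{aI}$, then $e^{aI} = \epsilon(aI)\,e^{\Inc(aI)}$ from (\ref{Id_e_Ihat_eps}), and (\ref{Id_eIiJ_Om}) once more, I get
\[
dx^a\wedge\Lie_{\partial_b}\zeta_B\wedge\pfrac{\LambdaM}{(d\zeta_B)} = \sum_{I\in\Inc(p_B,n)}\partial_b(\zeta_{BI})\,\epsilon(aI)\,\pfrac{\Sden}{(d\zeta_B)_{\Inc(aI)}}\,\Omega .
\]
Finally I would identify $\epsilon(aI)\,\pfrac{\Sden}{(d\zeta_B)_{\Inc(aI)}}$ with $\pfrac{\Sden}{(\partial_a\zeta_{BI})}$: since $(d\zeta_B)_J = \sum_{s=1}^{p_B+1}(-1)^{s-1}\partial_{J_s}\zeta_{B,J\backslash J_s}$, the only increasing $J$ of length $p_B+1$ whose component contains $\partial_a\zeta_{BI}$ is $J = \Inc(aI)$, entering with the sorting sign $\epsilon(aI)$ (both sides vanishing when $a$ already occurs in $I$), so the chain rule gives $\pfrac{\Sden}{(\partial_a\zeta_{BI})} = \epsilon(aI)\,\pfrac{\Sden}{(d\zeta_B)_{\Inc(aI)}}$. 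Applying $i_{\partial_n}\cdots i_{\partial_1}$ and summing over $\zeind = 1,\ldots,\Numzeta$ then assembles the two pieces into (\ref{Tensors_T_Neot_coords}).

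The step I expect to be the main obstacle is exactly this last book-keeping of antisymmetrization signs — relating the free-index derivatives $\partial_a\zeta_{BI}$ to the increasing components of $d\zeta_B$ while keeping the $\epsilon(aI)$ factors consistent — which is precisely what the $\Inc$, $\Free$, $\epsilon(\Ihat)$, $\delta^{\Ihat}_{\Jhat}$ notation and the corollaries culminating in (\ref{Id_sum_eIiJ_om}) are set up to control; everything else is routine manipulation of $\Lie$, $i_v$ and $\wedge$ together with $\LambdaM = \Sden\,\Omega$.
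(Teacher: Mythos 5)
Your proposal is correct and follows essentially the same route as the paper's proof: both expand $\pfrac{\LambdaM}{(d\zeta_B)}=\sum_{J}\pfrac{\Sden}{(d\zeta_B)_J}\,i_J\Omega$, reduce the Lie-derivative term to $\partial_b(\zeta_{BI})$ in the coordinate coframe, and finish with the same $\epsilon(aI)^2=1$ chain-rule identification of $\pfrac{\Sden}{(d\zeta_B)_{\Inc(aI)}}$ with $\pfrac{\Sden}{(\partial_a\zeta_{BI})}$. The only (harmless, arguably tidier) deviation is that by wedging with $dx^a$ first via (\ref{Tensors_T_Neot_coords_alt}) you collapse the $J$-sum with the degree-matched identity (\ref{Id_eIiJhat_Om}) instead of extracting the coefficient of $i_b\Omega$ through the chain of corollaries ending in (\ref{Id_sum_eIiJ_om}), as the paper does.
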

\begin{proof}
From (\ref{Tensorsdef_T_Noet})
\begin{align*}
\Noet_{\partial_a}
=
i_{\Tden^\Noet(\partial_a)}\Omega
=
\Tden^\Noet{}_a{}^b i_b\Omega
\end{align*}
Let $\deg(d\zeta_\zeind)=r+1$ and
$d\zeta_\zeind=\sum_{I\in\Inc(r+1,n)} (d\zeta_\zeind)_{I} dx^I$ and
$J\in\Inc(r+1,n)$ then from (\ref{LagForm_def_pfrac_d_alpha_mu})
\begin{align*}
dx^J \wedge\pfrac{\LambdaM}{(d\zeta_B)}
&=
\dfrac{}{\varepsilon}\Big|_{\varepsilon=0}
\LambdaM(\ldots,d\zeta_B+\epsilon dx^J,\ldots)
\\&=
\dfrac{}{\varepsilon}\Big|_{\varepsilon=0}
\Sden(\ldots,(d\zeta_B)_J+\epsilon,\ldots)\Omega
=
\pfrac{\Sden}{(d\zeta_B)_J}\Omega
\end{align*}
Hence from (\ref{Id_iIeI_n})
\begin{align*}
\sum_{J\in\Inc(r+1,n)} \pfrac{\Sden}{(d\zeta_B)_J} i_J\Omega
=
\sum_{J\in\Inc(r+1,n)} i_J \Big(dx^J \wedge\pfrac{\LambdaM}{(d\zeta_B)}\Big)
=
\pfrac{\LambdaM}{(d\zeta_B)}
\end{align*}
From (\ref{LagForm_def_tau_can})
\begin{align*}
\Noet_{\partial_a}
&=
i_a \LambdaM -
\sum_{\zeind=1}^\Numzeta
\Lie_{\partial_a} \zeta_\zeind \wedge \pfrac{\LambdaM}{(d\zeta_\zeind)}
\\&=
i_a (\Sden\Omega) -
\sum_{\zeind=1}^\Numzeta \sum_{I\in\Inc(r,n)} \sum_{J\in\Inc(r+1,n)}
\Lie_{\partial_a} (\zeta_{\zeind I} dx^I) \wedge
\pfrac{\Sden}{(d\zeta_B)_J} i_J\Omega
\\&=
\Sden i_a \Omega -
\sum_{\zeind=1}^\Numzeta \sum_{I\in\Inc(r,n)} \sum_{J\in\Inc(r+1,n)}
\partial_a (\zeta_{\zeind I})
\pfrac{\Sden}{(d\zeta_B)_J}   dx^I \wedge i_J\Omega
\\&=
\delta^b_a\Sden\, i_b \Omega -
\sum_{\zeind=1}^\Numzeta \sum_{I\in\Inc(r,n)}
\partial_a (\zeta_{\zeind I})
\pfrac{\Sden}{(d\zeta_B)_{bI}}   i_b\Omega
&\hspace{-2em}\text{from (\ref{Id_sum_eIiJ_om})}
\end{align*}
Hence
\begin{align*}
\bigg(\Tden^\Noet{}_a{}^b -
\delta^b_a\Sden +
\sum_{\zeind=1}^\Numzeta \sum_{I\in\Inc(r,n)}
\partial_a (\zeta_{\zeind I})
\pfrac{\Sden}{(d\zeta_B)_{bI}}
\bigg)
i_b\Omega
=0
\end{align*}
Thus
\begin{align*}
0 &=
\bigg(\Tden^\Noet{}_a{}^b -
\delta^b_a\Sden +
\sum_{\zeind=1}^\Numzeta \sum_{I\in\Inc(r,n)}
\partial_a (\zeta_{\zeind I})
\pfrac{\Sden}{(d\zeta_B)_{bI}}
\bigg)
e^c\wedge i_b\Omega
\\&=
\bigg(\Tden^\Noet{}_a{}^b -
\delta^b_a\Sden +
\sum_{\zeind=1}^\Numzeta \sum_{I\in\Inc(r,n)}
\partial_a (\zeta_{\zeind I})
\pfrac{\Sden}{(d\zeta_B)_{bI}}
\bigg)\delta^c_b
\Omega
\end{align*}
giving
\begin{align*}
\Tden^\Noet{}_a{}^b
=
\delta^b_a\Sden -
\sum_{\zeind=1}^\Numzeta \sum_{I\in\Inc(r,n)}
\partial_a (\zeta_{\zeind I})
\pfrac{\Sden}{(d\zeta_B)_{bI}}
\end{align*}
Now since
\begin{align*}
d\zeta_B
=
\sum_{I\in\Inc(r,n)}d(\zeta_{BI} dx^I)
=
\sum_{I\in\Inc(r,n)}\partial_b \zeta_{BI} dx^{bI}
=
\sum_{I\in\Inc(r,n)}\epsilon(bI) \partial_b \zeta_{BI} dx^{\Inc(bI)}
\end{align*}
then
\begin{align*}
\pfrac{\Sden}{(d\zeta_B)_{bI}}
=
\epsilon(bI)\pfrac{\Sden}{(d\zeta_B)_{\Inc(bI)}}
=
\epsilon(bI)^2 \pfrac{\Sden}{(\partial_b\zeta_{BI})}
=
\pfrac{\Sden}{(\partial_b \zeta_{BI})}
\end{align*}
and
\begin{align*}
\Tden^\Noet{}_a{}^b
=
\delta^b_a\Sden -
\sum_{\zeind=1}^\Numzeta \sum_{I\in\Inc(r,n)}
\partial_a (\zeta_{\zeind I})
\pfrac{\Sden}{(\partial_b \zeta_{BI})}
\end{align*}
I.e. (\ref{Tensors_T_Neot_coords}).

\end{proof}

\section{Computational method for calculating   Gateaux  derivatives}
\label{sch_Examples}

In any local frame $\{ X_a\}$ with dual co-frame $  \{e^b\}  $   let
$\LambdaM(Z , \ldots)\in\Gamma\Lambda^n M$ depend on the tensor $Z=Z^{a}{}_{b}
X_a\otimes e^b$. Relative to any chosen top-form $\Omega$  write $\LambdaM(Z, \ldots)=\Sden(Z^a_b, \ldots)\Omega$ where
$\Sden\in\Gamma\Lambda^0 M$ and $\Omega$ is
independent of $Z$. Thus $\Sden(Z^a{}_b, \ldots)$ depends on the $n^2$ variables $Z^{a}{}_b  $ with $n^2$
derivatives $\pfrac{\Sden}{Z^a{}_b}$.  In this case
\begin{align*}
\GatZ{Z}\TprodL Y
&=
\dfrac{}{\varepsilon}\Big|_{\varepsilon=0} \LambdaM(Z+\varepsilon Y, \ldots)
=
\dfrac{}{\varepsilon}\Big|_{\varepsilon=0}
\Sden(Z^a{}_b + \varepsilon Y^{a}{}_b , \ldots) \Omega
=
\pfrac{\Sden}{Z^a{}_b}  Y^{a}{}_b \Omega
\\&=
\pfrac{\Sden}{Z^a{}_b} \Omega\otimes e^a\otimes X_b \TprodL Y
\end{align*}
hence
\begin{align*}
\GatZ{Z} = \pfrac{\Sden}{Z^a{}_b} \Omega\otimes e^a\otimes X_b
\end{align*}

For example if $\Sden$ is such that $\LambdaM(Z,  \alpha,\beta,u,v)=Z(\alpha,v)
Z(\beta,u)\Omega$ for any 1-form fields  $\alpha,\beta\in\Gamma\Lambda^1 M$
and vector fields $u,v\in\Gamma T M$
then
\begin{align*}
\Sden(Z^a{}_b\alpha,\beta,u,v)=Z^a{}_b Z^c{}_d \alpha_a v^b \beta_c u^d
\end{align*}
and it follows from the definition that
\begin{align*}
\GatZ{Z}
&=
\alpha_a v^d \beta_c u^b Z^c{}_d
\Omega\otimes e^a\otimes X_b
+
\alpha_c v^d \beta_a u^b Z^c{}_d
\Omega\otimes e^a\otimes X_b
\\&=
Z(\beta,u)
\Omega\otimes \alpha\otimes v
+
Z(\alpha,v)
\Omega\otimes \beta\otimes u
\end{align*}

\end{document}